\documentclass[a4paper,12pt]{article}
\usepackage{tikz}
\usetikzlibrary{positioning,shadows,arrows}
\usepackage{appendix}
\usepackage{times}
\usepackage{color}
\newcommand{\sss}[1]{\subsubsection{#1}}
\usepackage{microtype}
\usepackage{url}
\usepackage{amsmath}
\usepackage{graphics}
\usepackage{graphicx}
\usepackage{alltt}
\usepackage{complexity}
\usepackage{algorithmic} 
\usepackage{gensymb}
\usepackage{amssymb}
\usepackage[utf8]{inputenc}
\usepackage[T1]{fontenc}
\usepackage{amsthm}
\bibliographystyle{plain}
\usepackage{hyperref}
\newcommand{\emphdex}[1]{\index{#1}\emph{#1}}
\usepackage{latexsym}
\newtheorem{theorem}{Theorem}[section]
\newtheorem{lemma}[theorem]{Lemma}
\newtheorem{corollary}[theorem]{Corollary}
\theoremstyle{definition}
\newtheorem{example}[theorem]{Example}
\newtheorem{definition}[theorem]{Definition}

\newtheorem{proposition}[theorem]{Proposition}
\newtheorem{claim}[theorem]{Claim}
\newtheorem{notation}[theorem]{Notation}
\DeclareMathOperator{\texp}{texp}
\newclass{\NPFP}{NPFP}
\newclass{\APFP}{APFP}
\newclass{\NIFP}{NIFP}
\newclass{\AIFP}{AIFP}
\newclass{\PFP}{PFP}
\newclass{\AnH}{AnH}
\newcommand{\ATIME}[2]{\ensuremath{\Sigma_{#2}\TIME(#1)}}
\newcommand{\coATIME}[2]{\ensuremath{\Pi_{#2}\TIME(#1)}}

\newclass{\LFP}{LFP}
\newclass{\LOG}{L}
\newclass{\IFP}{IFP}
\newclass{\HO}{HO}
\renewcommand{\phi}{\varphi}

\newclass{\MHO}{MHO}
\newclass{\MSO}{MSO}
\newclass{\VO}{VO}
\newclass{\KROM}{KROM}
\newclass{\HORN}{HORN}
\newclass{\SO}{SO}
\newcommand{\hoa}[1]{\ensuremath{\HO^{#1}}}
\newcommand{\N}{\ensuremath{\mathbb N}}
\newcommand{\hob}[2]{\ensuremath{\Sigma^{#1}_{#2}}}
\newcommand{\hod}[2]{\ensuremath{\HO^{#1,#2}}}
\newcommand{\hoc}[3]{\ensuremath{\Sigma^{#1,#3}_{#2}}}

\newcommand{\cohob}[2]{\ensuremath{\Pi^{#1}_{#2}}}

\newcommand{\cohoc}[3]{\ensuremath{\Pi^{#1,#3}_{#2}}}
\newcommand{\mhoa}[1]{\ensuremath{\MHO^{#1}}}
\newcommand{\mhob}[2]{\ensuremath{\M\Sigma^{#1}_{#2}}}
\newcommand{\mhod}[2]{\ensuremath{\MHO^{#1,#2}}}
\newcommand{\mhoc}[3]{\ensuremath{\M\Sigma^{#1,#3}_{#2}}}

\newcommand{\ex}[1]{\ensuremath{\exp_{2}^{#1}(n^{O(1)})}}
\DeclareMathOperator{\bit}{bit}
\DeclareMathOperator{\suc}{succ}

\DeclareMathOperator{\plu}{plus}

\DeclareMathOperator{\ifte}{if}
\DeclareMathOperator{\thent}{then}
\DeclareMathOperator{\elset}{else}
\newcommand{\mc}{\mathcal}
\newcommand{\eh}[1]{$#1$th exponential hierarchy}
\newcommand{\mf}{\mathfrak}
\newcommand{\ol}{\overline}
\newcommand{\olmc}[1]{\overline{\mathcal{#1}}}
\newcommand{\ed}{=_{\mathrm{def}}}
\usepackage{authblk}
\date{}
\title{A Note on Higher Order and Variable Order Logic over Finite
  Models
}
\author{Arthur MILCHIOR}
\affil{{LIAFA, Université Paris 7 - Denis Diderot, France\\
 CNRS UMR 7089, Université Paris Diderot - Paris 7, Case 7014\\
 75205 Paris Cedex 13\\
 {\tt Arthur.Milchior@liafa.univ-paris-diderot.fr}
\\LACL, UPEC, Créteil, France\\
LACL, Département d'Informatique\\
Faculté des Sciences et Technologie\\
61 avenue du Général de Gaulle\\
94010 Créteil Cedex
\footnote{Work done in an internship at
University of Massachusetts Amherst under the direction of
Prof. David Mix Barrington.}}}
\begin{document}
\maketitle

\begin{abstract}
  We show that descriptive complexity's result extends in High Order
  Logic to capture the expressivity of Turing Machine which have a finite
  number of alternation and whose time or space is bounded by a finite
  tower of exponential. Hence we have a logical characterisation of
  \ELEMENTARY. We also consider the expressivity of some fixed
  point operators and of monadic high order logic.

  Finally, we show that Variable Order logic over finite structures, a
  notion introduced by \cite{lauri} contain the Analytical Hierarchy.



\end{abstract}

\tableofcontents

\section{Introduction}
\emph{Descriptive complexity} is a field of \emph{computational
  complexity}. It studies the relation between logical formalisms and
complexity classes. For a given complexity class, what logic do we
need to express languages in this class; for a formula in a given
logic, what is the complexity of checking the truth value of this
formula, over finite structures, as a function of the cardinality of the
structure.

The relation between complexity classes and descriptive classes is
strong, since a lot of well known complexity classes, such as
\AC$^0$, \LOG, \NL, \P, \NP, \PSPACE, \EXP\TIME{} and \EXPSPACE{}, are
exactly equal to some descriptive classes using first and second order
relations, with either syntactic restrictions (Monadic relations, Horn
and Krom formulae) or with operators like `` fixed point'' and
``transitive closure'' (see \cite{imm,libkin}).

These issues in terms of capturing complexity classes are well
understood in agreed upon notation for first and second-order logic,
but beyond that there were open questions and a need for clarity and
standardization of notation. 
The extension of those results to higher order logic began with
\cite{leivant}, and was followed more recently by
\cite{lauri,arity,kolo}. It is also called ``Complex object'' in
database theory \cite{database}.


The article \cite{lauri} also introduced the so-called ``Variable-order logic'',
extending the high-order logic where the order of a quantified
variable is not fixed in the formulae. They stated that it is at least
Turing-hard but did not give an upper bound for the expressivity of this
language.

\paragraph{}
The main contributions of this paper are the following:
\begin{itemize}
\item We give a definition of High-Order logic which is less
  restrictive than the usual one,
\item we prove a normal-form theorem which respects the expressivity of
  the logic,
\item we prove the equality between some subclasses of the High-Order
  logic and some complexity classes below \ELEMENTARY{},
\item we prove that any formula in the analytical hierarchy can be
  written as a formula in Variable Order logic.
\end{itemize}


\section{Definition}\label{def}

\subsection{The core of the language}
Let $r$ be an integer. We will begin by defining the syntax of the
$r$th order logic (\hoa r) and its semantics over finite
structures. First Order (\FO) and Second Order (\SO) are the special
cases \hod{1}{} and \hod{2}{}.

\begin{definition}[Universe]
  A \emphdex{universe} $A$ is the set $[0,n-1]$.
\end{definition}
\begin{definition}[Type]\label{type}
  A \emphdex{type} of order 1 is just the element $\iota$, and a type of
  order $r>1$ is a tuple of types of order at most $r-1$.
\end{definition}
\begin{example}For example, $(\iota,((\iota),\iota),\iota)$ is the
  type of a ternary relation of order 4 whose first and last elements
  are elements of the universe, and whose second one is a binary
  relation of order 3 whose first element is a monadic relation of
  order 2 and the second an element of the universe.
\end{example}
\begin{definition}[Relation]
  A \emphdex{relation} of type $\iota$ is an element of the universe,
  and a relation of type $(t_{1},\dots,t_{n})$, where the $t_{i}$'s
  are types, is a subset of the Cartesian product of the relations of
  type $t_{i}$.

  $\R^{t}_{A}$ is the set of all relations of type $t$ over the
  universe $A$.
\end{definition}
\begin{notation}
  In this article $\mc X^{t}$, where $\mc X$ is any symbol, will
  always be a variable of type $t$. Hence $\mc X^{1}$ is a
  first-order variable or a constant.

  For $t\not= \iota$, $\top^{t}$ (resp.  $\bot^{t}$) is the special case
  of the relation of type $t$ that is always
  true (resp. false). 
\end{notation}

By extension, if $\olmc X=\mc X_{1}^{t_{1}},\dots,\mc X_{n}^{t_{n}}$
is a tuple of variables, then we say that $t=(t_{1},\dots,t_{n})$ is
the type of $\olmc X$. 

\begin{definition}[Vocabulary]
  A \emphdex{vocabulary} $\sigma=\{R_1^{t_1},\dots,R_s^{t_s}\}$ is a set of
  relation symbols.   It is a vocabulary of order
  $r$ if the type of every relation is of order at most $r$.

  We denote by $R_i^{t_i}$ a relation symbol of type $t_i$. A relation
  symbol of type $\iota$ is called a constant. We sometimes omit the
  type-superscripts of variables and relations when this information
  is redundant. 
\end{definition}
For example, in $\forall R^{(\iota,\iota)}. R(x,y)\lor \exists
z. x+z=y$, it is clear that the second occurence of $R$ is also of
type $(\iota,\iota)$, that $x,y$ and $z$ are of type $\iota$ and that
$+$ is a predicate of type $(\iota,\iota,\iota)$.

Our definition is not standard in that a vocabulary may include
relations whose type has order greater than 2.

In \cite{lauri} the types are restricted to what we will call arity
normal form in subsection \ref{sec:arity}. We use the more general
definition of types from \cite{arity}, and we will show in subsection
\ref{sec:arity} that our choice is equivalent to their choice (at
least for the complexity classes that we study).
\begin{definition}[Structure]
\label{structure}
For any type $\sigma = \{R_1^{t_1} \ldots\}$, a \emphdex{$\sigma$-structure }$\mf A = (A,
\mf A(R_1), \ldots )$ is a tuple such that $A$ is a nonempty universe and
each $\mf A(R_i) \in \R^{t_{i}}_{A}$.


When $\mc X^{t}$ and $ R^{t}$ are a variable and a relation of the same
type, then we write $\mathfrak A'=\mathfrak A[\mathcal X/R]$ to speak
of the $\sigma\cup\{\mathcal X\}$-structure such that $\mathfrak
A'(\mathcal X)=R$ and $\mathfrak A'(\mathcal Y)=\mathfrak A(\mathcal
Y)$ if $\mathcal Y\not=\mathcal X$.

By extension, if $\olmc X^{t}=\mc X_{1}^{t_{1}},\dots,\mc
X_{n}^{t_{n}}$ and $\ol R^{t}= R_{1}^{t_{1}},\dots, R_{n}^{t_{n}}$ are
tuples of variables and of relations of the same type then $\mathfrak
A[\olmc X/\ol R]$ is syntactic sugar for $\mathfrak A[\mc X_{1}/
R_{1}]\dots[\mc X_{n}/R_{n}]$.
\end{definition}

\begin{definition}[Formula]
  A \emphdex{high-order formula} $\phi$ is defined recursively as
  usual, such that if $\psi$ and $\psi'$ are formulae then
  $\psi\land\psi', \psi\lor \psi', \neg\psi, \forall \mathcal X^{t}
  \psi$ and $ \exists \mathcal X^{t} \psi$ are also formulae. Here $t$
  is the type of $\mathcal X$.

  Finally, for types $t=(t_{1},\dots,t_{a})$, $\mathcal X^{t}(\mathcal
  Y^{t_{1}}_1,\dots,\mathcal Y^{t_{a}}_a)$ and $\mathcal
  Y^{t}=\mathcal X^{t}$ are the two kinds of atomic formulae.
\end{definition}
\begin{definition}[\HO, \hoa{r}, \hob{r}{j} and \hoc{r}{j}{f}]
  The set \emphdex{$\HO$} contains every formulae with high order
  quantifiers, then \emphdex{$\hoa{r}$} is the subset of $\HO$
  formulae whose quantified variables are of order at most $r$. Hence
  $\hoa0$ is the set of quantifier-free formulae.

  The set \emphdex{\hob{r}{j}}(resp. \emphdex{$\cohob{r}{j}$}) for
  $j>0$ is the class of formulae containing
  \cohob{r}{j-1}(resp. $\hob{r}{j-1}$) and closed by conjunction,
  disjunction and existential (resp. universal) quantification of
  variables of order at most $r$. We have \hob{r}{0}=\hoa{r-1}.

  The \emphdex{normal form} of \hob{r}{j}(resp. $\cohob{r}{j}$), where
  $j>0$, is the set of formulae as in equation \ref{hob} with $\psi\in
  \hoa{r-1}$ in normal form and the types $t_{i,j}$ are of order at
  most $r$ (resp. the same kind of formulae, exchanging $\forall$ and
  $\exists$).

  \begin{equation}\label{hob}
    \exists \mc X_{1,1}^{t_{1,1}}\dots
    \exists \mc X_{1,i_{1}}^{t_{1,i_{1}}}\forall \mc
    X_{2,1}^{t_{2,1}}\dots \forall \mc X_{2,i_{2}}^{t_{2,i_{2}}}\dots Q
    \mc X_{j,1}^{t_{j,1}}\dots Q \mc X_{j,i_{j}}^{t_{j,i_{j}}}\psi
  \end{equation}

  We will prove in Subsubsection \ref{sec:normalform} that any formula
  in \hob{r}{j} is equivalent to a formula in normal form.

  Finally \emphdex{\hod rf} (resp. \emphdex{\hoc rjf}) is the subset
  of \hoa r (resp. \hob rj) without any free variables of order more
  than $f$.  This definition is different from the one of \cite{arity}
  where $f$ denotes the maximum arity of high order relations.
\end{definition}
The classes of formulae for $0\le r\le 2$ and $f=2$ are well studied
since $\hoa{0,2}$ is the set of quantifier-free formulae,
$\hoa{1,2}=\FO$ and $\hoa{2,2}=\SO$.

\begin{definition}[Semantics]
  For $r\ge1$, $\mathfrak A\models\mathcal X^{t,r+1}(\mathcal
  Y^{t_{1},r}_1,\dots,\mathcal Y^{t_{a},r}_a)$ if and only if
  \[(\mathfrak A(\mathcal Y_1^{t_{1},r}),\dots,\mathfrak A(\mathcal
  Y_a^{t_{a},r}))\in\mathfrak A(\mathcal X^{t,r+1}).\]

  $\mathfrak A\models\mathcal X^{t,r}=\mathcal Y^{t,r}$ if $\mathfrak
  A(\mathcal X^{t,r})=\mathfrak A(\mathcal Y^{t,r})$, when the last
  equality is an equality of sets. It is decidable since the sets are
  well-founded.

  Satisfaction for $\psi\land\psi', \psi\lor \psi'$ and $\neg\psi$ are
  defined in the usual way.

  $\mathfrak A\models\forall \mathcal X^{t} \psi$ is true if and
  only if for all $R^{t}\in\R^{t}_{A} $, $\mathfrak
  A[\mathcal X^{t}/R^{t}]\models\psi$. 

  $\mathfrak A\models\exists \mathcal X^{t} \psi$ is true if and only
  if there exists some $R^{t}\in\R^{t}_{A} $, $\mathfrak A[\mathcal
  X^{t}/R^{t}]\models\psi$.
\end{definition}
\subsection{Operators}
In this section, if $L$ is a logic class and $P$ is an operator, then
\emphdex{$L(P)$} is the set  that contains the formulae of $L$, closed
by the operator $P$.

\sss{Transitive closure}
\begin{definition}[Transitive closure]

  Let $\olmc X^{t}=\mc X_1^{t_1},\dots,\mc X_n^{t_n}$ be an $n$-tuple
  and let $\olmc Y^{t},\olmc Z^{t}$ and $\olmc T^{t}$ be three other
  $n$-tuples of the same type and let $\phi$ be a $(\sigma\cup\{\mc
  X_{1},\dots,\mc X_{n},\mc Y_{1},\dots,\mc Y_{n}\})$-formula in $L$.

  Then $(\TC_{\ol{\mc X}\ol{\mc Y}}\phi)(\ol{\mc Z}\ol{\mc T})$ is a
  $\sigma$-formula in $L(\TC)$. The operator \TC{}Xe is called the
  ``Transitive Closure'' operator.

\end{definition}
\begin{definition}[Semantics of \TC]
  $\mathfrak A\models (\TC_{\ol{\mc X}^{t}\ol{\mc
      Y}^{t}}\phi)(\ol{\mc Z}^{t}\ol{\mc T}^{t})$ is true if and only
  if $\olmc T=\olmc Z$ or if there exists an $n$-tuple $\olmc M^{t}$
  of type $t$ such that $\mathfrak A[\olmc
  X/\mf A(\olmc Z)][\olmc Y/\mf A(\olmc M)]\models\phi$ and $\mathfrak A\models
  (\TC_{\ol{\mc X}\ol{\mc Y}}\phi)(\ol{\mc
    M}\ol{\mc T})$.\end{definition}


\begin{example}
  Let the universe be a directed graph and let the vocabulary contain
  only $E$ such that $E(x,y)$ is true if there is an edge from $x$ to
  $y$. Then $\phi=(\TC_{x,y}E(x,y))$ is a relation of type
  $(\iota,\iota)$ such that $\phi(z,t)$ is true if and only if there
  is a path  in the directed graph from $z$ to $t$.
\end{example}

\sss{Fixed Point}
\begin{definition}[Fixed Point]
  Let $\olmc {X}^{t}=\mc X_1^{t_1},\dots,\mc X_n^{t_n}$ be a tuple of
  type $t$ and $\olmc{Y}^{t}$ be another tuple of the same type $t$,
  let $P$ be a variable of type $t=(t_{1},\dots,t_{n})$, and let
  $\phi$ and $\psi$ be some $(\sigma\cup\{P,\mc X_{1},\dots,\mc
  X_{n}\})$-formulae. Then $(\PFP_{\olmc X,P}\phi)(\olmc Y)$,
  $(\IFP_{\olmc X,P}\phi)(\olmc Y)$, $(\NPFP_{\olmc
    X,P}\phi,\psi)(\olmc Y)$, $(\NIFP_{\olmc X,P}\phi,\psi)(\olmc Y)$
  , $(\APFP_{\olmc X,P}\phi,\psi)(\olmc Y)$, $(\AIFP_{\olmc
    X,P}\phi,\psi)(\olmc Y)$ are $(\sigma\cup\{\olmc Y\})$-formulae in
  \emphdex{$L(\PFP)$}, \emphdex{$L(\IFP)$}, \emphdex{$L(\NPFP)$},
  \emphdex{$L(\NIFP)$}, \emphdex{$L(\APFP)$} and \emphdex{$L(\AIFP)$}
  respectively. The letters ``N'' and ``A'' stands for
  ``nondeterministic'' and ``Alternating'' respectively, ``I'' and
  ``P'' for ``Inflationary'' and ``Partial'', and ``FP'' stands for
  ``Fixed Point''.

  We restrict the formulae of $\NIFP$ and \NPFP{} such that there are no
  negation applied outside of a non-deterministic fixed-point
  operator.
\end{definition}

\begin{definition}[Semantics of \PFP]
  Let $(\PFP_{\olmc X^{t},P^{t}}\phi)(\olmc Y^{t})$ be a formula. Then we can
  define the relations $(P_{i}^{t})_{i\in N}$ by recursion on $i$.

  For each $\olmc X\in\R^{t}_{A}$, $P_0(\olmc X)$ is false and
  $P_i(\olmc X)$ is true if and only if $\mf A[\olmc X/\ol
  R^{t}][P/P_{i-1}]\models \phi$. Hence the property $P_i$ is true on
  the input $\ol R$ if $\phi$ is true on input $\ol R$ when the
  variable $P$ is replaced by the relation $P_{i-1}$.

  Then, either this process leads to a fixed point, i.e. there exists
  $i$ such that $P_{i}=P_{i+1}$ and then $\mf A
  \models\PFP(\phi_{P,\olmc X})(\olmc Y)$ is true if and only if $\mf A
  \models P_{i}(\olmc Y)$ or the set of
  relation of $(P_i)_{i \in N}$ has a cycle of size strictly greater than 1 and
  then $\mf A\not\models\PFP(\phi_{P,\olmc X})(\olmc Y)$.
\end{definition}

\begin{definition}[Semantics of \IFP]
  Using the notation of the last definition, let $\phi'(\olmc
  X,P)=P(\olmc X)\lor\phi(\olmc X,P)$. Then we can define
  $\IFP(\phi_{P,\olmc X})(\olmc Y)$ as $\PFP(\phi'_{P,\olmc X})(\olmc
  Y)$. Another equivalent way to define it is to define $P_0$ as the
  predicate that is always false, and $P_i(\olmc X)=P_{i-1}(\olmc
  X)\lor \phi(P_{i-1},\olmc X)$.
\end{definition}

We should note that to decide if the desired fixed point for $\phi$ exists we
must run the definition step by step and check whether $P_{i+1} =
P_i$ for $i<|R^t_A|$.  The
definition of \IFP{} makes the operator monotonically increasing so a
fixed point will alway be reached within $\log |R^t_A|$ steps.

\paragraph{}
The nondeterministic fixed points and alternating fixed points are
introduced in \cite{nfp}. We choose not to use their notation
``$FP(A,n)(\phi_{1},\phi_{2},S)(\vec t)$'', but instead to use
$(\APFP_{S^{t},\ol x^{t}}\phi_{1},\phi_{2})(\ol t)$ to be coherent
with the notation for $\PFP$ as defined in \cite{imm}.

\begin{definition}[Semantics of $\NPFP$ and \NIFP]\label{dnpfp}
  Let $\olmc {X}^{t},\olmc{Y}^{t}$ be two vectors of the same type $t$,
  $P^{t}$ be a variable of type $t$, and $\phi_{0}$  and $\phi_{1}$ be
  $(\sigma\cup\{P^{t},\ol X^{t}\})$-formulae.

  We can define the relations $((P_{l})_{l\in\{0,1\}^{*}})$, where $l$ is
  a list of bit. For each $\olmc X\in\R^{t}_{A}$, $P_{\epsilon}(\olmc
  X)$ is false, and by induction for $i\in \{0,1\}$, $P_{il}$ is true
  iff $\mc A[\olmc X/R][P/P_{t}]\models\phi_{i}$.

  Then $\mc A[\olmc Y/\ol R]\models(\NPFP_{\olmc X,P}\phi_{0},\phi_{1})(\olmc
  Y)$ is true if and only if there exists an $l\in\{0,1\}^{*}$ such that $\mc
  A[\olmc Y/\ol R]\models P_{l}(\olmc Y)$ and $P_{0l}=P_{{1l}}=P_{l}$.

  This means that $(\NPFP_{\olmc X,P}\phi_{0},\phi_{1})$ is the union of the
  relations which are fixed points $P$ for both $\phi_{ i}$ and
  such that $P$ is accessible by applying the $\phi_{i}$ a finite
  number of time to $\bot$.

  We could also consider this as a directed graph $G_{\phi_{0},\phi_{1},\mf
    A}$ without self-loop, with a node from every relation $P$ to
  $\phi_{i}(P)$. Then $(\NPFP_{\olmc
    X,P}\phi,\psi)\ed\bigcup_{s}\{P_{s}|P_{0s}=P_{1s}=P_{s}\}$ is
  the union of the leaves reachable from the relation $\bot$. 

  The semantic of $\NIFP$ is to $\NPFP$ what $\IFP$ is to $\PFP$. This
  means that for every relations $P$, $\phi(P)\subseteq P$ and
  $\psi(P)\subseteq P$. It is also possible to define $(\NIFP_{\olmc
    X,P}\phi,\psi)$ as syntactic sugar for $(\NPFP_{\olmc X,P}P(\olmc
  X)\lor\phi(P,\olmc X),P(\olmc X)\lor\psi(P,\olmc X))$. The graph
  defined above is then acyclic.
\end{definition}
\begin{figure}
\begin{tikzpicture}[
    node/.style={rectangle, draw=black,
        text centered, anchor=north},
    level distance=0.5cm, growth parent anchor=south
]
\node  (Node01) [node] {$\emptyset$} [sibling distance=4cm]
child{
  node (Node02) [node] {$P_{0}=\phi(\emptyset)$}
  child{[sibling distance=2.8cm]
    node (Node05) [node] {$P_{01}=\psi(P_{0})$}
    child{
      node (Node03) [node] {$P_{010}=\phi(P_{01})$}
      child{
        node (Node06) {\vdots}
      }
    }
    child{
      node (Node04) [node] {$P_{011}=\psi(P_{01})$}
      child{
        node (Node07) {\vdots}
      }
    }
  }
}
;
\end{tikzpicture}
\caption{A part of an alternating tree, with
  $s(0)=s(1)=1$ and $\psi(P_{11})=P_{11}$}
\end{figure}
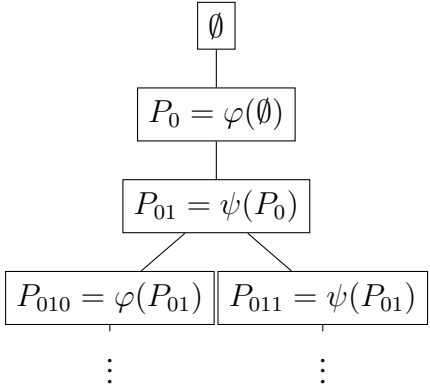
\begin{definition}[Semantics of $\APFP$ and $\AIFP$]
  We use the notations of definition \ref{dnpfp}.

  Let $s$ be a function from strings of bits to bits.  Let $\sigma$ be
  a vocabulary, $\mc A$ be a $\sigma$-structure and $\phi_{0}$ and $\phi_{1}$
  be $\sigma\cup\{P^{t}, \olmc X^{t}\}$-formulae. Then we define the
  tree $T_{\phi_{0},\phi_{1},s,\mf A}$ whose nodes are labelled by relations
  of type $t$. The root is the relation $\bot$, and for $n$, a list of
  bits that indicates a path from the root in the tree, we define the
  label of $n$ as $P_{n}$, as in definition \ref{dnpfp}. If
  $P_{n}=P_{0n}=P_{1n}$ then $n$ is a leaf, else if the depth of $n$
  is even then its children are the nodes with labels $P_{0n}$ and
  $P_{1n}$ that are not equal to $P_{n}$, else its only child is
  $P_{s(n)n}$. We assume that $P_{s(n)n}\not=P_{n}$ else we consider
  that the tree $T_{\phi_{0},\phi_{1},s}$ does not exist.

  A \emphdex{local alternating fixed point} $A_{\phi_{0},\phi_{1},s,\mf A}$ is a
  relation such that $\mf A\models A_{\phi_{0},\phi_{1},s,\mf A}(\olmc Y)$ if
  and only if for every label $l$ that are leaves of an existing tree
  $T_{\phi_{0},\phi_{1},s,\mf A}$ we have $\mc A\models P_{l}(\olmc Y)$. This
  means that a tuple is accepted by the tree if and only if it is
  accepted by every relations of its leaves.

  The \emphdex{alternating fixed point}, $A_{\phi_{0},\phi_{1},\mf A}$ is a
  relation such that $\mf A\models A_{\phi_{0},\phi_{1},\mf A}(\olmc Y)$ if
  and only if there exist an $s$ such that $\mf A\models
  A_{\phi_{0},\phi_{1},s,\mf A}(\olmc Y)$.

  Then $\mf A\models(\APFP_{\olmc X,P}\phi_{0},\phi_{1})(\olmc Y)$ is true if
  and only if $\mf A\models A_{\phi_{0},\phi_{1},\mf A}(\olmc Y)$.

  $\AIFP$ is to $\APFP$ what $\NIFP$ is to $\NPFP$.
\end{definition}

This is almost the definition of \cite{nfp}, except that
$T_{\phi_{0},\phi_{1},s,\mf A}$, $A_{\phi_{0},\phi_{1},\mf A}$ and $A_{\phi_{0},\phi_{1},s,\mf
  A}$ are not named and $s$ is not considered, but having a name for
those values will help the proof of \ref{eqab}. On page 8 they speak
of the ``length of the longest branch'', and it seems that they assume
that the tree is of finite size. They do not seem to explain why this
assumption can be true without loss of generality and without
considering that the tree is instead a graph; it is easy to imagine a
branch which repeats itself an infinite number of times when
$G_{\phi_{0},\phi_{1},\mf A}$ is cyclic. Hence we think it is interesting to
give another definition of alternating fixed point where we can always
give an answer in a finite time .

\begin{definition}[$T_{\phi_{0},\phi_{1},\mf A}$]
We will write $\ol \cup$ for $\cap$ and $\ol \cap$ for $\cup$.

Let $T_{\phi_{0},\phi_{1},\mf A}$ be a tree where each node's label is a pair
with either $\cup$ or $\cap$ as first element and a relation as second
element, and where the root is $(\cup,\bot)$. The children of $(c,P)$ are $(\ol
c,\phi_{0}(P))$ and  $(\ol c,\phi_{1}(P))$ except if $P=\phi_{1}(P)=\phi_{0}(P)$ in
which case this node is a leaf. If in a branch we find two nodes with
the same label $(c,P)$, we remove the second occurrence and its
descendants.

We recursively define the output of the tree as the relation of the
label if the tree is a leaf, else as $c$ applied to the output of its
children. By extension we write $T_{\phi_{0},\phi_{1}}$ instead of its
output relation. It will be clear by the context if we mean the tree
or its output.
\end{definition}
\begin{proposition}\label{eqab}
  $T_{\phi_{0},\phi_{1},\mf A}=A_{\phi_{0},\phi_{1},\mf A}$
\end{proposition}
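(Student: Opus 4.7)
The plan is to proceed by structural induction on the finite tree $T = T_{\phi_0,\phi_1,\mf A}$, matching its bottom-up evaluation to an unfolding of $A_{\phi_0,\phi_1,\mf A}$. First I would verify finiteness: the labels $(c,P)$ lie in the finite set $\{\cup,\cap\}\times\mc R^{t}_{A}$, and the pruning rule along branches bounds each branch's length by this cardinality, so by König's lemma the whole tree is finite.

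Next I would reformulate $A_{\phi_0,\phi_1,\mf A}$ as a recursion on the $A$-tree. Unfolding the ``$\exists s\,\forall\text{leaf}\ l$'' definition on each subtree and using independence of disjoint subtrees, the outer existential distributes into a union at each odd-depth node, while the universal conjunction over leaves becomes an intersection at each even-depth internal node. This gives a recursive value $\widetilde A(n) = P_n$ at leaves (where $P_n = \phi_0(P_n) = \phi_1(P_n)$), $\widetilde A(n) = \bigcap \widetilde A(\text{child})$ at even-depth internal nodes (over included children, i.e.\ those with $P_{in}\neq P_n$), and $\widetilde A(n) = \bigcup \widetilde A(\text{child})$ at odd-depth internal nodes (over the valid $s$-children), with $A_{\phi_0,\phi_1,\mf A} = \widetilde A(\epsilon)$.

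Then I would run the induction on subtrees of $T$, claiming that the output of the subtree rooted at $(c,P)$ equals $\widetilde A$ at the corresponding $A$-node with relation $P$. Leaves of $T$ agree by construction, and for an internal node with both children present and unpruned, the inductive hypothesis on each child closes the case once one has matched the combinator $c$ to the correct depth-parity operation of $\widetilde A$. The main obstacle lies in two intertwined places. First, one must align the alternation of $\cup,\cap$ in $T$'s labels with the intersection/union alternation of $\widetilde A$'s game semantics, correctly interpreting the root convention $(\cup,\bot)$ (in particular verifying that the dual-looking ``$\cup$ at even depths, $\cap$ at odd depths'' layout of $T$ in fact produces the AND/OR game value of $\widetilde A$). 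Second, one has to treat pruned children: when a label repeats along a $T$-branch, or a sibling would duplicate the parent in the $A$-construction, the $T$-node collapses to the identity $c()$, i.e.\ $\top$ for $\cap$ or $\bot$ for $\cup$; I would argue that these identities encode the natural game-theoretic rule ``the player to move cannot progress without revisiting an ancestor and therefore loses'', which is exactly the cycle-resolution one must adopt to make $\widetilde A$ well defined on cyclic $G_{\phi_0,\phi_1,\mf A}$. This last reconciliation is the heart of the proof, since it is precisely the situation the author's preceding discussion identified as ill-handled by the original $A$-definition.
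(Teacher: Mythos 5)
There is a genuine gap, and it sits exactly where you locate "the heart of the proof." Your plan reduces the proposition to the claim that $A_{\phi_{0},\phi_{1},\mf A}$ equals a recursively computed game value $\widetilde A$ on the $A$-tree, with cycles resolved by the rule "the player who can only revisit an ancestor loses" (encoded by the identity elements $\top$ for $\cap$ and $\bot$ for $\cup$ at pruned nodes). But that equivalence is the whole content of Proposition \ref{eqab}, and you assert it rather than prove it. The trees $T_{\phi_{0},\phi_{1},s,\mf A}$ underlying the definition of $A_{\phi_{0},\phi_{1},\mf A}$ need not be well-founded, so the "unfolding" that is supposed to define $\widetilde A$ by recursion has no base case along infinite branches; worse, in the definition of $A$ an infinite branch simply has no leaves and therefore imposes no constraint at all, i.e.\ it is vacuously harmless for the existential player. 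That is a priori a \emph{different} cycle convention from "the mover forced to revisit loses," and reconciling the two is precisely what must be argued: you need (i) that a strategy which escapes into a cycle can be compared with the pruned evaluation without losing the tuple, and (ii) that whenever the tuple survives the pruned evaluation there actually exists a witnessing strategy, which at pruned repetitions must loop forever. Your induction step at a pruned node has nothing to invoke until these are established, so the outline essentially restates the proposition at its hardest point.

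The paper avoids this by never introducing $\widetilde A$: it proves the two inclusions by explicit strategy translation. Given $s$ with $\olmc X\in A_{\phi_{0},\phi_{1},s,\mf A}$, it cuts every $\cup$-node of $T_{\phi_{0},\phi_{1},\mf A}$ down to the $s$-chosen child, uses the absence of negations (monotonicity) to see that this can only shrink the output, and observes that the resulting all-$\cap$ tree outputs the intersection of its leaves, which are (up to the duplicate-removal) leaves of $A_{\phi_{0},\phi_{1},s,\mf A}$. Conversely, from $\olmc X\in T_{\phi_{0},\phi_{1},\mf A}$ it constructs $s$ by choosing at each $\cup$-node a child still containing $\olmc X$, and at a pruned repetition it defines $s$ to repeat the earlier segment, so that the corresponding branch of the $A$-tree becomes infinite and contributes no leaf constraints. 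These two constructions are exactly the missing steps in your proposal; the depth-parity bookkeeping you flag ($\cup$ at even depths of $T$ versus the $s$-choice levels of $A$) does need to be settled, but it is cosmetic compared with the cycle issue.
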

\begin{proof}
  Let $\olmc X$ be an tuple. We are going to prove that $\olmc X\in
  T_{\phi_{0},\phi_{1},\mf A}\Leftrightarrow\olmc X\in A_{\phi_{0},\phi_{1},\mf A}$.

  \paragraph{$\Rightarrow$} Let us assume that $\olmc X\in
  A_{\phi_{0},\phi_{1},\mf A}$. Then there exists some function $s$ such that
  $\olmc X\in A_{\phi_{0},\phi_{1},s,\mf A}$. It then suffices to see that on
  every node $n$ of $T_{\phi_{0},\phi_{1},\mf A}$ with label $\cup$ we can
  keep only the child whose number is $s(n)$, and we obtain a tree
  that is a subset of $T_{\phi_{0},\phi_{1},s,\mf A}$. Since there is no
  negation in the tree, if we remove an element of an union we can not
  add any elements in the output of the tree, hence there is no loss
  of generality in doing that. We now have a tree $T'$ whose only
  gates' label are $\cap$.  It is trivial to see that any element
  $\olmc Y$ is in the output of $T'$ if and only if it is in every
  leaf. Since in the construction of $T_{\phi_{0},\phi_{1},\mf A}$ we only
  removed nodes that are copies of nodes higher in the tree, then
  $\olmc Y$ is also in any leaf of $A_{\phi_{0},\phi_{1},s,\mf A}$, hence
  $\olmc X$ is in the output of $T_{\phi_{0},\phi_{1}}$.

  \paragraph{$\Leftarrow$} Let us assume that $\olmc X\in
  T_{\phi_{0},\phi_{1},\mf A}$. We will define a function $s$ such that $\olmc
  X\in A_{\phi_{0},\phi_{1},s,\mf A}$. Note that $s(n)=0$ if $n$ is a string
  of odd size, since then the value of $s$ does not matter in
  $A_{\phi_{0},\phi_{1},s,\mf A}$. Let $n$ be the shortest string such that
  $s(n)$ is not defined. Then $n$ is of even length, hence by
  hypothesis over $T_{\phi_{0},\phi_{1},\mf A}$, $\olmc X$ is in the node $n$.
  Either $n$ is a union node, in which case there is a child $b$ such
  that $\olmc X$ is in $bn$, and we define $s(n)$ as $b$ and for every
  finite string $m$, $s(mn)$ as $0$ since those values do not
  matter. Else $n$ is a leaf. If it is because $P_{n}=P_{0n}=P_{1n}$
  then $s(n)=0$ since this value does not matter. Else it is because
  its children were already seen in this branch, in which case let $m$ be the
  other occurrence of a node with the same relation, $n=pm$ and for
  every $q$ we define $s(qn)$ as $s(qm)$, by hypothesis over $n$ it is
  well defined, since $n$ is the shortest non defined string, and
  $q$'s length is strictly positive.
\end{proof}

When we cut in $T_{\phi_{0},\phi_{1},\mf A}$, it was because the child
was an infinite repetition of itself, and we  can define the
function $s$ in $A_{\phi_{0},\phi_{1},s}$ with the same repetition. It
is then trivial to see that $\olmc X$ is indeed in every leaf of
$T_{\phi_{0},\phi_{1},s,\mf A}$, hence in $A_{\phi_{0},\phi_{1},\mf
  A}$.
\begin{claim}\label{AFP}
  In fact, the same proof would work for a tree bigger than
  $T_{\phi_{0},\phi_{1}}$, choosing to cut later in the branches would not
  remove anything since there are no $\neg$ gates, and would not add
  anything in the output since the later $\cap$ gates would remove the
  eventual new elements of the set.
\end{claim}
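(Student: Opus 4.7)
The claim amounts to asserting that the proof of Proposition~\ref{eqab} is robust to the precise cutting rule used to define $T_{\phi_0,\phi_1,\mf A}$. My plan is to fix any tree $T'$ obtained by the same construction but cutting later in at least some branches (for instance, at the second or a later repetition of a label rather than the first) and argue two things: (i) $T'$ and $T_{\phi_0,\phi_1,\mf A}$ have the same output relation, and (ii) the two implications of Proposition~\ref{eqab} transfer to $T'$ without modification.

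For (i) I would exploit two structural features. First, $T'$ is built only from $\cup$ and $\cap$ gates, with no negation, so its evaluation is monotone in the leaf labels; second, gate types strictly alternate with depth, so every branch consists of alternating $\cup$ and $\cap$ strata. Consider a branch in which $T_{\phi_0,\phi_1,\mf A}$ cuts at a node $(c,P)$ but $T'$ continues further. Since labels repeat by parity, the extended subtree visits only relations in the $\phi_0,\phi_1$-orbit of $P$. For the ``no removal'' direction, the absence of $\neg$ means monotone evaluation prevents the extended subtree from contributing strictly less than the leaf $P$ at the parent level. For the ``no addition'' direction, any element appearing only in the extended part must survive the $\cap$ layer immediately above it; a short induction on the length of the extension then shows that such elements are absorbed by intersections with values along the cycle that eventually reduce back to $P$.

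With (i) in hand, the argument of Proposition~\ref{eqab} replays on $T'$ essentially verbatim. The forward direction prunes $\cup$-children according to the witness function $s$, leaving a $\cap$-only subtree; the extra depth in $T'$ only strengthens intersections, so $\olmc X$ still survives at every leaf. The backward direction constructs $s$ by descending through $T'$, using the periodicity of each extended branch to copy $s$-values from the earlier occurrence of the same label, exactly as in the original proof. The main obstacle is step~(i): monotonicity alone only controls inclusions in one direction, so the hardest part will be combining it with the strict alternation of $\cup$ and $\cap$ and with the cyclic structure of the $\phi_0,\phi_1$-orbit of $P$ to obtain a genuine two-sided equality of outputs.
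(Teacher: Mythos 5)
Your step (ii) is, in substance, exactly what the paper intends by ``the same proof would work'', but your plan makes it conditional on step (i), and step (i) is both unnecessary and the very point you concede you cannot complete. The content of Claim \ref{AFP} is simply that the proof of Proposition \ref{eqab} nowhere uses minimality of the cut: the only properties of the cutting rule it exploits are that the tree is finite and that every leaf which is not a genuine fixed point carries a relation that already occurs higher on its branch, so that in the $\Leftarrow$ direction $s$ can be defined below such a leaf by copying its values from the earlier occurrence, and in the $\Rightarrow$ direction the tree pruned according to $s$ has no leaves other than fixed points and such repeats. Both properties persist when one cuts later, so replaying the two directions of Proposition \ref{eqab} on the bigger tree $T'$ yields directly that the output of $T'$ equals $A_{\phi_{0},\phi_{1},\mf A}$; the equality with the output of $T_{\phi_{0},\phi_{1},\mf A}$ is then a corollary of Proposition \ref{eqab}, not a prerequisite. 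By routing everything through a direct structural identity between the two trees, you have made the easy replay depend on a lemma you neither need nor prove.

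Moreover, the sketch you offer for (i) would not go through as written. In your ``no addition'' direction, the subtree that $T'$ grows below a late-cut node \emph{replaces} the former leaf value $P$; it is not intersected with $P$, and the sibling of that node is unchanged, so there is no local $\cap$ gate that is guaranteed to absorb the new elements --- the two-sided equality really has to be obtained globally, through $A_{\phi_{0},\phi_{1},\mf A}$ and the witness functions $s$. Similarly, your forward-direction inference ``the extra depth in $T'$ only strengthens intersections, so $\olmc X$ still survives at every leaf'' is backwards: strengthening intersections makes survival harder, and the actual reason $\olmc X$ lies in the new, deeper leaves is that their labels are again repetitions of relations already met on the branch (or genuine fixed points), so the treatment of cut leaves from Proposition \ref{eqab} applies to them unchanged. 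Restructure the argument as a direct replay of Proposition \ref{eqab} on $T'$, checking only that the two properties of the cutting rule above are preserved by cutting later, and the claim follows without ever comparing $T'$ and $T_{\phi_{0},\phi_{1},\mf A}$ node by node.
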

This will be useful since it means we will not have to remember the
set of relations seen on a branch, and we only have to count until we
have seen more nodes than the number of relations.

\sss{Operator normal form}\label{opnorm}

It was proved in \cite{imm} that the transitive closure and
deterministic fixed points can be in normal form without loss of
generality. In fact algorithms were given to obtain equivalent formulae
in normal form. Furthermore, \cite{nfp} states that this normal form extends to
alternating fixed points, and to nondeterministic fixed points that
are not under negation (which is impossible by definition).

Those normal forms are $(\TC_{\olmc X,\olmc Y}\phi)(\bot,\top)$ and
$(F_{P,\olmc X}\phi)(\bot)$ where $F$ is a fixed point operator and
$\phi$ a formulae in $\FO$ or $\SO$, it is trivial that this result
extends to high order.

\subsection{Mathematics definitions and notations}

\sss{Mathematics functions}
\begin{definition}[Iterated exponential] 
  Using the standard notation for the tetration operator, we define :
  $\exp_i^n(x)=i^{\exp_i^{n-1}(x)}$ and $\exp_i^0(x)=x$. That is
  $\exp_i^n(x)=i^{i^{i^{\dots^{i^{x}}}}}$ with $n$ exponentiations of
  $i$.
We will also write $\texp_a^n(x,r)=a^{r\times\texp_a^{n-1}(x,r)}$
  and $\texp_a^0(x,r)=x$.
\end{definition}

\begin{definition}[Elementary function]
  Let $f$ be a function from $\N$ to $\N$, it is an elementary
  function if $f= \exp_{2}^{O(1)}(n)$, that is, there is a constant
  $c$ such that $f=O(\exp_{2}^{c}(n))$. We denote by $\ELEMENTARY{}$ the
  set of languages decidable in elementary time.
\end{definition}

Finally, we introduce some complexity classes that we will use.
\begin{definition}
  Let $f$ be a function from $\N$ to $\N$ and $i\in\N$.

  Let $\A\TIME(f)$ be the set of languages accepted by an alternating
  Turing machine halting in $O(f(n))$ steps on input of size $n$. The
  restriction with at most $i-1$ alternation between universal and
  existential states, beginning by existential, $\Sigma_{i}\TIME(f)$,
  in particular $\Sigma_{1}\TIME(f)$ is denoted by $\NTIME(f)$. The
  definition of $\TIME(f)$ is similar, but every steps are
  deterministic, and so on for $\A\SPACE(f)$ and $\SPACE(f)$, where
  the limit is not on the number of step but on the number of cells
  used by the machine for the computation.

\end{definition}

We could also define \ELEMENTARY{} using space or bounded alternation
since we have $\TIME(f(n))\subseteq\ATIME{f(n)}{O(1)}\subseteq
\SPACE(f(n))\subseteq\TIME(2^{f(n)})$.

\sss{Syntactic sugar in logic}
\begin{notation}
  Let $Q$ be a quantifier, we will define ``$\oplus_{Q}$''.  We
  write``$\oplus_{\exists}$'' for ``$\land$'' and
  ``$\oplus_{\forall}$'' for ``$\Rightarrow$''.

  When we define a language $L'$ from a language $L$, we will always
  assume that ``1'' is a letter that is not in the alphabet of $L$.
\end{notation}

Some formulae will be used often in this article, hence we are going
to define some syntactic sugar in this subsubsection.

$\texttt{card}_{\le a}(\mc T^{(t_{1},\dots,t_{n})})\ed\forall_{0\le i\le
  a, 1\le j\le n}U_{i,j}^{t_{j}}.[\bigwedge_{0\le i\le a} T(U_{i,1},\dots,U_{i,n})\Rightarrow\bigvee_{0\le
  i<j\le a}\ol U_{i}=\ol U_{j}]$,

$\texttt{card}_{\ge a}(\mc T^{(t_{1},\dots,t_{n})})\ed\exists_{1\le i\le
  a, 1\le j\le n}U_{i,j}^{t_{j}}(\bigwedge_{1\le i\le a} T(U_{i,1},\dots,U_{i,n})\bigwedge_{1\le i<j\le a}\ol U_{i}\ne \ol U_{j})$,

and $\texttt{card}_{a}(\mc T^{p})\ed\texttt{card}_{\ge a}(\mc
T^{p})\land \texttt{card}_{\le a}(\mc T^{p})$.

On ordered set we define $0(x)\ed\neg \exists y (y<x)$,
$\max(x)\ed\neg \exists y (y>x)$, and $1(x)\ed card_{1}(y<x)$ where
$y$ is the free variable of the formula ``\texttt{card}'' to means
that $x$ is 0, 1 or max. We will assume that we can use those constants
without having to explicitly quantify them in the formulae.

Finally, $(Qx. \phi)\psi $ is syntactic sugar for $Qx(\phi\oplus_{Q}\psi)$.

-

\subsection{Normal form}
In this subsection, we are going to discuss two ways to normalize the
language and see that the definition we choose does not change the
expressivity of the language. Hence we will be able to choose the more
restrictive one to prove theoretical results, and the more expressive
one to express queries.  These results are on the syntax of the
formula, hence they also extend as results for general logic, with
finite or infinite models.

\sss{Types of fixed arity}\label{sec:arity}
We are going to restrain type to a special form and prove that it does
not change the expressivity of the language.

\begin{definition}[Arity relation]
  For each $a,r\ge 1$, we define \emphdex{$A(a,r)$}to be the type
  $(A(a,r-1),\dots,A(a,r-1))$ if $r>1$, with $a$ copy of $A(a,r-1)$
  and we define $A(a,1)$ to be $\iota$. We write $F^{a,r}$ for
  $R^{A(a,r)}$,   the set of relations of type $A(a,r-1)$.

  We say that a formula is in \emphdex{arity normal form}
  (\emphdex{ANF}) if all of its types respect the arity
  definition. Let us define $\Sigma'$ to be the set of formulae in
  ANF.
\end{definition}

\begin{proposition}
  The class of queries of $\Sigma'^{r}_{j}$ is exactly the class of
  queries of $\hob{r}{j}$. Formally for every formula
  $\phi\in\Sigma^{r}_{j}$ we can find an equivalent formula
  $\phi'\in\Sigma'^{r}_{j}$.
\end{proposition}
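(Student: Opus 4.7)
The $\supseteq$ direction is immediate since every ANF type is a legitimate general type. For the nontrivial $\subseteq$ direction, my plan is to define a type-directed translation by recursion on the order, together with a ``well-formedness'' predicate, and then translate formulas compositionally.

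First I would define, by induction on order $r$, an encoding of each general type $t$ as an ANF type $\hat t$. Set $\hat\iota=\iota$; for $t=(t_{1},\dots,t_{n})$ of order $r\ge 2$, let $a_{t}=\max(n,a_{t_{1}},\dots,a_{t_{n}})$ and set $\hat t=A(a_{t},r)$. A relation $R$ of type $t$ is encoded by $\hat R$ of type $\hat t$ whose tuples are $(\widehat{y_{1}},\dots,\widehat{y_{n}},\bot,\dots,\bot)$ for each $(y_{1},\dots,y_{n})\in R$, where the last $a_{t}-n$ entries are the empty (or zero) relation of type $A(a_{t},r-1)$. A vocabulary $\sigma$ is translated to $\hat\sigma$ in the obvious way, and every $\sigma$-structure $\mf A$ has a canonical encoding $\hat{\mf A}$. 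I would then write a well-formedness predicate $\mathrm{WF}_{t}(X)$ saying ``$X$ of type $\hat t$ is a valid encoding of a relation of type $t$,'' which asserts that every tuple of $X$ has its last $a_{t}-n$ entries equal to $\bot$ and its first $n$ entries each satisfy $\mathrm{WF}_{t_{i}}$. Recursively this gives a formula of order at most $r$.

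The formula translation is then compositional: each atomic $R(y_{1},\dots,y_{n})$ is rewritten as $\hat R(\widehat{y_{1}},\dots,\widehat{y_{n}},\bot,\dots,\bot)$; Boolean connectives commute; $\exists X^{t}.\psi$ becomes $\exists \hat X^{\hat t}(\mathrm{WF}_{t}(\hat X)\land\hat\psi)$ and dually for $\forall$. Soundness and faithfulness of the translation follow by induction on $\phi$ using the bijection that $\mathrm{WF}_{t}$ cuts out between relations of $\hat t$ and relations of $t$.

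The main obstacle is showing that the translated formula really lies in $\hob r j$ rather than in some bigger class with more alternations. The added well-formedness tests insert universal (or existential, dually) quantifiers over lower-order objects inside the matrix, and a naive translation would add an extra alternation at each quantifier. To avoid this I would apply the normal-form theorem of Subsubsection~\ref{sec:normalform} to put both $\hat\psi$ and $\mathrm{WF}_{t}$ into prenex form, then merge each $\mathrm{WF}_{t}$ block with the adjacent block of the same polarity (the trailing $\forall$ block of the prefix for an $\exists$-introduction, and vice versa). Since $\mathrm{WF}_{t}$ uses only quantifiers of order at most $r$, the merged prefix remains of order $r$ and has at most $j$ alternations, which yields the required $\Sigma'^{r}_{j}$ formula.
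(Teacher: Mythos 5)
There is a genuine gap, and it sits exactly where the paper has to work hardest: your encoding of types is not well defined for heterogeneous types. A relation of type $A(a_{t},r)$ must consist of tuples whose every component is a relation of type $A(a_{t},r-1)$, i.e.\ of order exactly $r-1$ and arity exactly $a_{t}$. But in a general type $t=(t_{1},\dots,t_{n})$ of order $r$ the components $t_{i}$ may have different orders (e.g.\ $(\iota,((\iota),\iota),\iota)$ mixes order $1$, order $3$ and order $1$), and your $\widehat{y_{i}}$ has type $A(a_{t_{i}},\mathrm{ord}(t_{i}))$, which in general is neither of order $r-1$ nor of arity $a_{t}$ (your $a_{t}$ is chosen per type, so even the arities of nested encodings disagree). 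Consequently $(\widehat{y_{1}},\dots,\widehat{y_{n}},\bot,\dots,\bot)$ is not a tuple of objects of type $A(a_{t},r-1)$, $\hat R$ is not a relation of type $\hat t$, the translated atom $\hat R(\widehat{y_{1}},\dots,\widehat{y_{n}},\bot,\dots,\bot)$ is ill-typed, and $\mathrm{WF}_{t}$ cannot cut out the claimed bijection. Repairing this needs two ingredients you never supply: a single global arity $a$ computed from the whole formula (not per type), and, more importantly, a way to represent a relation of order $i<r-1$ by an object of order $r-1$. The paper's proof devotes its first stage precisely to this (\emph{step normal form}, via the \texttt{equiv} formula that wraps a low-order relation in a tower of singletons), and only then pads arities by repeating the last component; your proposal skips the order-lifting entirely, so the translation it describes does not exist as stated.

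Apart from this, your architecture is the paper's: guard each quantified relation with a correctness predicate ($\mathrm{WF}_{t}$ plays the role of the paper's \texttt{encode}), rewrite atoms, and argue the guards do not disturb the $\Sigma^{r}_{j}$ shape. Two smaller remarks. First, the alternation worry you spend the last paragraph on is milder than you think: the guards only quantify variables of order at most $r-1$, and since $\hob{r}{0}=\hoa{r-1}$, such quantifiers never create a new order-$r$ alternation, so membership in $\hob{r}{j}$ is immediate and the prenexing/merging step is not where the difficulty lies (the paper's corresponding discussion is only about additionally preserving decreasing normal form). Second, you silently replace the structure $\mf A$ by a canonical encoding $\hat{\mf A}$, whereas the proposition asks for an \emph{equivalent} formula over the given vocabulary; the paper instead keeps the input relations and existentially/universally quantifies padded copies of them inside the formula via its \texttt{copy} predicate. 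That point is fixable, but it needs to be addressed.
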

\begin{proof} The side $\subseteq$ is trivial, since the definition of
  $\Sigma'$ is a restriction of the definition of $\HO$. Indeed ``$a,r$'' as
  a type is defined as ``$\iota$'' if $r=1$ and as
  ``$(a,r-1),{\dots},(a,r-1)$'' where $(a,r-1)$ is considered as a
  type.

  To show $\supseteq$, let $\phi\in\hob{r}{j}$ and define $a$ to be
  the size of the bigger tuple, defined this way: size(1)$\ed$1 and
  size$(t_{1}\dots,t_{a})\ed\texttt{max}(a,\texttt{max}_{1\le i\le
    a}\texttt{size}(t_{i}))$. There are two problems that we need to
  correct. First we need to change the type of every relation such
  that a type of order $r$ contains only type of order $r-1$, and such
  that all relations of those types have the same arity.

  \paragraph{Step normal form}Let us define \emphdex{step normal form}
  (\emphdex{SNF}) to be the formulae that respect the first of these
  properties, that a type of order $r$ contains only types of order
  $r-1$.  We are going to show that each formula is equivalent to a
  formula in SNF. To do this, the encoding in order $j$ of a relation
  $\mc R^{i}$ of order $i$, when $i<j$, will be a relation of order
  $j$ whose type contains only one elements of arity $j-1$, whose type
  contains only one elements, and so on until the one element of order
  $i$ which is of course $\mc R^{i}$.

  Let us define a formula $\texttt{equiv}(\mc S_{i},i,\mc
  S_{j}^{(\dots(\iota)\iota)},j)$ with $j-i$ pair of parenthesis, that
  is true if and only if $S_{j}$ is interpreted  as explained in the
  last paragraph.
  \begin{eqnarray}
  \label{eq:equiv}
  \texttt{equiv}(\mc S_{i},i,\mc S_{j},j)\ed \exists_{i<k<
    j}\mc S_{k}\bigwedge_{i<k\le j}\mc S_{k}(\mc S_{k-1})\land(\forall T\mc S_{k}(T)\Rightarrow T=\mc S_{k-1})
\end{eqnarray}
It now suffices to replace every instance of an atomic proposition
like $\mc X^{r}(\mc Y_{1}^{r_{1}},\dots,\mc Y_{n}^{r_{n}})$ by
$\exists_{1\le k\le n}\mc S_{k}^{r-1}\mc X^{r}(\mc
S_{1}^{r-1},\dots,\mc S_{n}^{r-1})\bigwedge_{1\le k\le
  n}\texttt{equiv}(\mc Y_{i}^{r_{i}},r_{i},\mc S_{i}^{r-1},r-1)$.  An
easy induction shows us that we obtain an equivalent formula, and it
is clear that it is in SNF.

\paragraph{From SNF to arity normal form:}
From now on we will assume that every type in the vocabulary respects
ANF.

We will make sure that every quantified relation is of arity $a$, and
we will do it so that in every relation of arity $b<a$, the last
element will be copied $a-b+1$ times. We need to check that, when
relations are quantified, they respect this property.
\begin{eqnarray*}
  \texttt{encode}_{a,b}(\mc X^{a,r})\ed\forall_{1\le i\le a}\mc Y_{i}^{a,r-1}(\mc X(\mc Y_{1},\dots,\mc Y_{a})\Rightarrow \bigwedge_{b<j\le a}\mc Y_{r}=\mc Y_{j} )
\end{eqnarray*}


We will assume that $\phi$ is in SNF.  Syntactically we will replace
every occurrence of $Q \mc X^{t,r}\phi$ by
\begin{eqnarray}
  Q \mc X^{a,r}(\texttt{encode}_{a,b}(\mc X) \oplus_{Q} \phi)
\end{eqnarray}
To be more precise, we don't really need the quantification of the
$\mc Y_{i}^{a,r-1}$ to be just after the quantification of $\mc
X^{a,r}$. Since the $\mc Y_{i}$ only interacts with $\mc X$, we can
put the quantifiers anywhere after the quantification of $\mc
X$. Hence if $\phi$ is in decreasing normal
order 
as defined in the next section (the orders of the quantified variables
decrease) we can postpone the quantifications in $\mc Y$ to put them at
the right place of the list. Then we can, without loss of generality,
extract the quantifiers of $\phi$ to put them after the quantification
of the $\mc Y_{i}$. This way if $\phi$ was in decreasing normal form,
it will remain in normal form.

We will replace every atomic formula $\mc X^{t,r}(\mc
Y_{1}^{t_{1,r-1}},\dots,\mc Y_{b}^{t_{n,r-1}})$ by $\mc X^{a,r}(\mc
Y_{1}^{a,r-1},\dots,\olmc Y_{n}^{a,r-1})$ where $\olmc Y^{a,r-1}$
means that the last element is repeated $a-b+1$ times.

It is clear that those formulae are equivalent and in arity normal
form.



\end{proof}

\paragraph{Increasing arity of input structure}
Even if we can not accept an input structure with relations using
``type'', we must at least accept the relation respecting the
``arity'' constraint, which creates a problem when we change the
syntax of our formula. Let say that $R$ is an input structure of order
$r$ and arity $b$, and we want to have a copy $S$ of it of arity
$a$. Then we can state that $S$ is a good copy with
\begin{eqnarray}
  \texttt{copy}_{r}(R^{b,r},S^{a,r})\ed\texttt{encode}_{a,b}(S)\land\forall_{1\le i\le b} \mc Y_{i}^{b,r-1}(R(\mc Y_{1},\dots,\mc Y_{b})\Leftrightarrow\nonumber\\
  \forall_{1\le i\le b}\mc Z_{i}^{a,r-1}((\bigwedge_{{1\le i\le b}}\texttt{copy}_{r-1}(\mc Y_{i},\mc Z_{i}))\Rightarrow S(\mc Z_{1},\dots,\olmc Z_{b}))\\
  \texttt{copy}_{1}(R^{b,r},S^{a,r})\ed R=S
\end{eqnarray}
Let $\sigma=\{R_{1}^{b_{1},r_{1}},\dots,R_{n}^{b_{n},r_{n}}\}$ and let
$\phi$ be a $\sigma$-formula. When we apply the rules of the last
paragraph to extend $\phi$ into an equivalent formula $\phi'$ of arity
$a$ we must in fact transform it into:
\begin{eqnarray}
  \phi''\ed\forall_{1\le i\le n}S_{i}((\bigwedge_{1\le i\le n}\texttt{copy}_{r_{i}}(R_{i},S_{i}))\Rightarrow \phi'[R_{i}/S_{i}])
\end{eqnarray}
\paragraph{Respecting order of quantifiers}
In the next subsubsection we will take care of the order of the
quantifiers, and we will need an algorithm for this normal form, hence
here we must emphasize a few details. Both in the proof of step normal
form and of arity normal form, we did not create any new
quantification of order $r$ or higher, so we respected the global form
of the formula when we consider only $r$th order quantifiers. Now, let
us suppose that we respect the decreasing normal form as in Definition
\ref{dnf}. Then we can see that the new quantifiers of our formula do
not have to be exactly where we put them -- we can postpone them to be
at the good place in the sequence, and postpone the new quantifier-free
part to be with the quantifier-free part of the formula. Hence if the
input is formula in decreasing normal form, the output is also a
formula in decreasing normal form.

Then, as we stated, the normal form equivalent formula is indeed in
$\Sigma'$.

In this article, we will only use formulae in arity normal form,
and this will simplify our proofs, and $\mc X^{a,r}$ will be a
syntactic sugar for $\mc X^{A(a,r)}$.

\sss{Order of the quantifiers}
\label{sec:normalform}
As stated earlier, there is a normal form for $\hob{r}{j}$ and we will
give an algorithm to obtain that normal form. The difference between
this algorithm and the ``folklore'' one as given in
\cite{lauri,Leivant94higherorder} (the latter is about high order in
general and not in finite structures) is that the folklore algorithm
sends $\hoa{r}$ to $\hoa{r}$ but does not respect the number of
alternations. On the other hand, our algorithm sends $\hob{r}{j}$ to
$\hob{r}{j}$. We should note that our algorithm does not respects the
the maximal arity.

\paragraph{Prefix normal form:}
\begin{definition}[Prefix normal form]
  A formula is in \emphdex{prefix normal} form (\emphdex{PNF}) if it begins with a
  sequence of quantifications and ends with a quantifier-free formula.
\end{definition}
\begin{lemma}\label{pnf}
  Every formula in $\hob{r}{j}$ is equivalent to a formula in prefix
  normal form. 
\end{lemma}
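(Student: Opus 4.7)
The plan is to proceed by structural induction on $\phi$, generalizing the classical first-order prenex construction and checking that nothing breaks when the bound variables have arbitrary higher-order types. The base case consists of the atomic formulas $\mc X^{t}(\olmc Y)$ and $\mc X = \mc Y$, which are already in PNF with an empty quantifier prefix.

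For the inductive step I would treat the syntactic constructors one by one. For negation $\neg \psi$, applying the IH gives $\psi \equiv Q_{1} \mc X_{1} \cdots Q_{n} \mc X_{n}. \alpha$ with $\alpha$ quantifier-free; iterated De Morgan then yields $\neg \psi \equiv \overline{Q_{1}} \mc X_{1} \cdots \overline{Q_{n}} \mc X_{n}. \neg \alpha$, where $\overline{Q}$ swaps $\exists$ and $\forall$. These dualities are immediate from the semantics, since $\forall \mc X^{t}\psi$ and $\exists \mc X^{t}\psi$ are defined by set-quantification over $\R^{t}_{A}$ exactly as in the first-order case. For a binary connective $\psi \land \psi'$ or $\psi \lor \psi'$, the IH puts both arguments in PNF; I then alpha-rename the bound variables of $\psi'$ to fresh names of the same type so that they clash with neither the bound nor the free variables of $\psi$, concatenate the two prefixes and connect the matrices with the same connective. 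For a quantification $Q \mc X^{t}. \psi$ I simply apply the IH to $\psi$ and prepend $Q \mc X^{t}$.

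The correctness of concatenating prefixes rests on the standard equivalence $(Q \mc X^{t}. \psi) \circ \chi \equiv Q \mc X^{t}. (\psi \circ \chi)$ whenever $\mc X^{t}$ does not occur free in $\chi$ and $\circ \in \{\land, \lor\}$; this follows directly from the semantic clauses for $\exists$ and $\forall$, since $\mf A[\mc X^{t}/R]$ agrees with $\mf A$ on every symbol occurring in $\chi$. The only adjustment compared to the first-order proof is that alpha-renaming must keep the type annotation fixed, so that every atomic subformula $\mc X^{t}(\olmc Y)$ in which the variable occurs remains well-typed. I do not expect a real obstacle here: the lemma is a bookkeeping exercise, and the genuine content of Subsubsection~\ref{sec:normalform} will be the stronger normal form of equation~\ref{hob} that additionally groups the prefix into alternating blocks while respecting the $\hob{r}{j}$ classification.
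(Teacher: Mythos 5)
Your structural induction does prove the literal statement of Lemma~\ref{pnf}, and the base case, the negation/De Morgan step, the alpha-renaming, and the quantifier-extraction equivalence are all unproblematic in the higher-order setting, exactly as you say. But this is precisely the ``folklore'' construction that the paper sets aside just before the lemma, and the divergence is not cosmetic: it sits in your treatment of $\psi\circ\psi'$ with $\circ\in\{\land,\lor\}$. You concatenate the two renamed prefixes, which makes the number of alternations of order-$r$ quantifiers in the output grow like the \emph{sum} of the alternations of the two conjuncts; for instance $(\exists\mc X_{1}^{r}\forall\mc Y_{1}^{r}\alpha)\land(\exists\mc X_{2}^{r}\forall\mc Y_{2}^{r}\beta)$ becomes $\exists\mc X_{1}\forall\mc Y_{1}\exists\mc X_{2}\forall\mc Y_{2}(\alpha\land\beta)$, with four blocks instead of two. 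The paper's proof is instead an explicit algorithm (\texttt{Aux}, \texttt{Aux}$'$, \texttt{Aux}$''$) whose whole point is the connective case: \texttt{Aux}$'$ \emph{interleaves} the two prefixes, pulling out all order-$r$ quantifiers of the current polarity from both sides before switching polarity, so that the output has at most the \emph{maximum} of the two alternation counts (here $\exists\mc X_{1}\mc X_{2}\forall\mc Y_{1}\mc Y_{2}(\alpha\land\beta)$), together with a polarity-tracking argument replacing your separate De Morgan pass.

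This matters because of how the lemma is used. Your closing remark defers the ``grouping into alternating blocks respecting the $\hob{r}{j}$ classification'' to the later normal form theorem, but in the paper that alternation bookkeeping is done \emph{here}: the subsequent theorem only pushes lower-order quantifiers inward (decreasing normal form) and never merges or reorders order-$r$ blocks, so it cannot repair a prefix whose order-$r$ alternation count has blown up. Starting from your PNF, the chain of arguments would deliver a prenex formula equivalent to $\phi$ but possibly outside $\hob{r}{j}$, whereas the theorem claims an equivalent normal form still in $\hob{r}{j}$. So as a proof of the bare statement your argument is fine, but to serve its role in Subsubsection~\ref{sec:normalform} you need to replace prefix concatenation by the merge-by-polarity step (or add a separate argument that same-polarity blocks can be regrouped without increasing alternations), which is exactly the content of the paper's \texttt{Aux}$'$.
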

We will assume that there are not two variables of the same
  name. Thanks to $\alpha$-conversion this creates no loss of
  generality.
\begin{notation}
  $Q$ will be a meta variable for quantifiers, A for atomic formula,
  $\otimes$ for disjunction or conjunction and $\oplus$ for a polarity
  symbol (+ or -), where applying - to a symbol will give its dual
  while + will not change it. Hence $-\neg\ed\epsilon,
  -\epsilon\ed\neg, -\forall\ed\exists, -\exists\ed\forall,
  -\lor\ed\land,-\lor\ed\land$ and we can even apply a polarity symbol
  to a polarity symbol, $++\ed--\ed+$ and $-+\ed+-\ed-$. On the other hand
  $+\neg\ed\neg, +Q\ed Q$ and $+\otimes\ed\otimes$.
\end{notation}
\begin{proof}(of Lemma \ref{pnf}) We will do a constructive proof, by
  giving an algorithm to transform the formula. We will use three
  auxiliary recursive functions.

\begin{texttt}
  PrefixNormalForm(\(r,\phi\)):=Aux(\(r,\phi,\exists,+\))
\end{texttt}

The result of the Aux function is such that negations are only on
atomic predicates, so it must remember the parity of the number of
$\neg$ it met. This is the information of the last argument. It will
give an output in prefix normal form with as little alternation as
possible and that is why it must know what was the last quantifier of
order $r$. That is what its third argument is for.  Since we want a
normal form for \hob r j, we assume that the formula begins with an
existential quantification of order $r$, and hence we can give an
$\exists$ quantifier as argument.

If Aux meets a quantifier, it will write the very same quantifier and
work inductively on the formula. If it meets a negation it will switch
its polarity and continue inductively. Finally if it finds a
conjunction or disjunction, it will act inductively on both parts to
put them in prefix normal form, and then will combine them with
$\texttt{Aux}'$.

\begin{alltt}
Aux(\(r,\phi,Q,\oplus\)):=match \(\phi\) with
  |\(Q'\mc{X}\sp{a,p}.\psi\rightarrow \)let \(\psi'\)=(if \(p=r\)
     then Aux\((r,\psi,(\oplus{}Q'),\oplus)\)
     else Aux\((r,\psi,Q,\oplus)\))  in \((\oplus{}Q')\mc{X}\psi'\)
  |\(\neg\psi\rightarrow\) Aux\((r,\psi,Q,(-\oplus))\)
  |A\(\rightarrow\oplus\)A   if A is atomic where -A\(\ed\neg\)A.
  |\(\phi\otimes\psi\rightarrow\)let \(\phi'=\)Aux\((r,\phi,Q,\oplus)\) and \(\psi'=\)Aux\((r,\psi,Q,\oplus)\)  in
    Aux'\((r,\phi',\psi',Q,\otimes)\)
\end{alltt}

Of course when we have $Q\mc X$ on the left of the arrow and $Q\mc X$
on the right of the arrow, we assume that both $\mc X$ are of the same
type. This assumption will be true until the end of this proof.

Aux' will take two inputs in prefix normal form, and a parameter to
know if we must consider its conjunction or its disjunction. Then it
will extract from them as many quantifiers of order $r$ of the last
seen polarity as possible. When $\phi$ and $\psi$ begin with
quantifiers of the other polarity of order $r$, we will switch the
polarity we want to extract. Finally when one formula has no more
quantifiers (by the prefix normal form we have by induction, we know
it is then a quantifier free formula) we will extract all predicates
of the other formula using $\texttt{Aux}''$.  Finally we
will link the two quantifier free parts of the formula with the
$\otimes$ relation.
\begin{alltt}
Aux\('\)\((r,\phi,\psi,Q,\otimes)\):= match \(\phi\) with 
  |\(Q'\mc{X}\sp{a,p}.\phi'\rightarrow\)if \(p<r\) or \(Q=Q'\) then \(Q'\mc{X}\).Aux\('\)\((r,\phi',\psi,Q,\otimes)\) else
    match \(\psi\) with
      |\(Q''\mc{Y}.\psi'\rightarrow\)if \(q<r\) or \(Q=Q''\) then \(Q''\mc{Y}\).Aux\('\)\((r,\phi,\psi',Q,\otimes)\)
        else Aux'\((r,\phi,\psi,-Q,\otimes)\)
      |\_\(\rightarrow\)Aux\('\)\('\)\((\phi,\psi,\otimes)\)
  |\_\(\rightarrow\)Aux\('\)\('\)\((\psi,\phi,\otimes)\)
Aux\('\)\('\)\((\phi,\psi,\otimes)\):= match \(\phi\) with 
  |\(Q'\mc{X}\phi'\rightarrow Q'\mc{X}\)Aux\('\)\('\)\((\phi',\psi,\otimes)\)
  |\(_\rightarrow \phi\otimes\psi\)
\end{alltt}
An easy induction over $\texttt{Aux}''$ and
$\texttt{Aux}'$ shows that the number of alternations in the
output is the larger number of alternations of the two elements of the
input.  Then an induction over Aux shows that its output respects the
same property. It is trivial to see that if the input was a formula of
order $r$, so is the output.
\end{proof}

This algorithm gives a normal form for $\hob{r}{j}$ only, as it does not
promise in general to give the smallest number of alternations. For
example:

\texttt{PrefixNormalForm}$((\forall X\exists Y\phi)\land(\exists Z
\psi))=\exists Z\forall X\exists Z(\phi\land\psi)$

The formula with the smaller number of alternations is $\forall
X\exists Y Z(\phi\land\psi)$. But since this formula begins with a
$\forall$, it is still an $\hob{r}{3}$ formula.

We can easily change the algorithm to obtain a normal form for
$\cohob{r}{j}$, as it is given by
\texttt{Aux}$(r,\phi,\forall,+)$. 
Finally, if we want an algorithms to obtain the smaller number of
alternation, it suffices to run both algorithm and choose the formula
with the smallest number of alternation.

\paragraph{Decreasing normal form:}
\begin{definition}[Decreasing normal form]\label{dnf}
  An $\hob{r}{j}$ formula, for $r\ge 1$, is in \emphdex{decreasing
    normal form} (\emphdex{DNF}) if it is in the form $\exists \ol{\mc
    X_1^r}\forall \ol{\mc X_2^r}\dots Q\ol{\mc X_j^r}\psi$ where each
  $Q$ is a quantifier and $\psi$ is an $\hoa{r-1}$ formula in
  decreasing normal form.
\end{definition}

\begin{definition}[Normal form]     
  An $\hob{r}{j}$ formula is in normal form (NF) if it is in both
  arity normal form and decreasing normal form, and hence also in
  prefix normal form and step normal form. A formula in $\hoa
  r(\P)$, where $\P$ is an operator, is in normal form if it is in
  operator's normal form and its subformula in $\hoa r$ is also in
  normal form.
\end{definition}

\begin{theorem}
  Every formula $\phi\in\hob{r}{j}$ is equivalent to a formula
  $\phi'\in\hob{r}{j}$ in normal form. 

  And in each group of quantifiers of order $r$, the number of
  quantifiers in $\phi'$ is not greater than the number of quantifiers
  of that order in $\phi$.
\end{theorem}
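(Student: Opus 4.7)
The plan is to proceed by induction on the order $r$. The base case $r = 0$ is immediate because $\hoa{0}$-formulas are quantifier-free and thus vacuously satisfy every normal-form requirement. For the inductive step, given $\phi \in \hob{r}{j}$, I first apply the arity-normal-form construction of Subsection~\ref{sec:arity} (which introduces no new order-$r$ quantifiers) and then the prefix-normal-form algorithm of Lemma~\ref{pnf}, yielding an equivalent prenex formula $\phi_1 = Q_1 X_1^{p_1}\cdots Q_k X_k^{p_k}\psi$ in ANF with $\psi$ quantifier-free, each $p_i \le r$, and the subsequence of order-$r$ quantifiers realising at most $j$ polarity alternations. Neither step changes the number of order-$r$ quantifiers in any alternation group.

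The technical heart of the proof is converting $\phi_1$ into decreasing normal form by separating order-$r$ quantifiers from lower-order ones in the outer prefix. For each lower-order quantifier $Q Y^{p}$ with $p < r$ still sitting in the outer prefix, I substitute an equi-polar order-$r$ quantifier using a chain-of-singletons encoding based on the $\texttt{equiv}$ formula of Equation~\ref{eq:equiv}: I rewrite $\exists Y^{p}\,\chi(Y)$ as $\exists Y'^{r}\, \exists Y^{p}\,(\texttt{equiv}(Y,p,Y',r)\land \chi(Y))$, and dually $\forall Y^{p}\,\chi(Y)$ as $\forall Y'^{r}\, \forall Y^{p}\,(\texttt{equiv}(Y,p,Y',r)\Rightarrow \chi(Y))$. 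Since for every order-$p$ relation $Y$ there is exactly one order-$r$ relation $Y'$ coding it via the nested singleton chain, these rewritings preserve semantics. Crucially, the fresh $Y'^{r}$ has the same polarity as the $Y^{p}$ it replaces, so the polarity alternation pattern of the outer prefix is preserved and the formula remains in $\hob{r}{j}$; the displaced copy of $Y^{p}$, together with $\texttt{equiv}$, is absorbed into the matrix, which becomes an $\hoa{r-1}$-formula.

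Applying the induction hypothesis to the $\hoa{r-1}$ matrix returns an NF-equivalent, and the resulting $\phi'$ is in ANF, DNF, PNF, and SNF simultaneously. Because each lower-order outer quantifier is exchanged for exactly one fresh equi-polar order-$r$ quantifier in the same group, the number of order-$r$ quantifiers in each outer group of $\phi'$ is bounded by the corresponding count in $\phi$, giving the quantitative part of the statement. The main obstacle is making the chain-of-singletons trick work when $p < r-1$: one singleton layer only lifts $Y$ from order $p$ to order $p+1$, so the $\texttt{equiv}$ chain must nest $r-p$ layers to reach order $r$; however, this is precisely the construction already carried out in the step-normal-form argument, and its correctness can be reused verbatim, after which polarity preservation and semantic equivalence of the rewriting follow by direct inspection.
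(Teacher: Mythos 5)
Your coding trick is semantically sound (since a singleton-chain code determines its decoded relation uniquely, the guarded decoding quantifier does commute past later quantifiers and can be absorbed into the matrix), but the construction does not prove the theorem, and the failure is exactly at the two points the theorem is careful about. First, promoting a lower-order quantifier to an order-$r$ quantifier makes its polarity count toward the alternation pattern of order-$r$ quantifiers, which it did not in $\phi$: membership in $\hob{r}{j}$ only constrains alternations of quantifiers that dominate order-$r$ quantification, while lower-order quantifiers sitting inside the $\hoa{r-1}$ core are free. Take $\phi=\exists \mc X^{r}\bigl(\forall y\,\exists z\,\chi_{1}\land\exists \mc W^{r}\chi_{2}\bigr)$ with $\chi_{1},\chi_{2}$ quantifier-free; this is in $\hob{r}{1}$, and its prenex form is $\exists \mc X^{r}\forall y\,\exists z\,\exists \mc W^{r}(\chi_{1}\land\chi_{2})$. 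Your replacement turns this into $\exists \mc X^{r}\forall \mc Y'^{r}\exists \mc Z'^{r}\exists \mc W^{r}(\cdots)$, whose order-$r$ prefix has three alternating blocks, so the result lies in $\hob{r}{3}$ and is not (syntactically) a $\hob{r}{1}$ formula: the main claim ``equivalent formula in $\hob{r}{j}$'' is lost. Second, the quantitative claim fails for the same reason: each replaced lower-order quantifier contributes a \emph{fresh} order-$r$ quantifier, so the number of order-$r$ quantifiers per group can only grow (in the example, a group that contained no order-$r$ quantifier in $\phi$ now contains one), whereas the theorem demands it does not exceed the count in $\phi$. Your sentence ``each lower-order outer quantifier is exchanged for exactly one fresh equi-polar order-$r$ quantifier, so the number \ldots is bounded by the corresponding count in $\phi$'' is precisely the step that is false.

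The paper moves in the opposite direction: instead of lifting the lower-order variable up to order $r$, it pushes the lower-order quantifier \emph{below} the adjacent order-$r$ quantifier and absorbs the dependence into the type of the order-$r$ relation, using the Skolem-style equivalence $\forall \mc X^{t,i}\exists \mc Y^{t',r}\xi\;\equiv\;\exists\mc Y^{t::t',r}\forall\mc X^{t,i}\,\xi[\mc Y(\mc Z_{1},\dots,\mc Z_{a})/\mc Y(\mc X,\mc Z_{1},\dots,\mc Z_{a})]$ (and its three polarity variants), iterated by a triple induction on order, number of order-$r$ quantifiers, and number of quantifiers. This keeps every order-$r$ quantifier in place with only its type enlarged, so no new order-$r$ quantifier and no new order-$r$ alternation is created, which is what yields both the $\hob{r}{j}$ membership and the per-group bound. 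To salvage your route you would have to merge each fresh code variable $\mc Y'^{r}$ into an existing order-$r$ quantifier of the same block (e.g.\ by pairing/arity increase), at which point you are essentially re-deriving the paper's swap; as written, the proposal has a genuine gap.
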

\begin{proof}
  Let $\phi$ be a formula. Thanks to property \ref{pnf}, we can assume
  it to be in prefix normal form and it will be straightforward that,
  while we transform it, it will remain in prefix normal form.

  The proof will be by induction over the order $r$. It is trivial if
  $r=0$ or $r=1$ because a quantifier free formula and a first order
  formula in prefix normal form are in normal form. Hence we will
  assume that $r>1$ and that the property is true for all $p<r$. Now
  we will prove the property by induction over the number $n$ of
  relations of order $r$. If $n=0$ then it is a formula of order
  $r-1$, and hence the property is true by induction. So we will
  suppose that $n>0$ and that the property is true for every $m<n$. We
  will prove this property by induction over the number $q$ of
  quantifications. It is true if $q=0$ because it is then a
  quantifier-free formula, which is in normal form; we will assume
  $q>1$ and that the property is true for any formula with $r<q$
  quantifications.

  Then $\phi=Q \mc X^{t,i}.\psi$, and by induction over the number of
  quantifiers if $i<r$, or over the number of quantifiers of order $r$
  if $i=r$, there exists a formula $\psi'$ in normal form equivalent
  to $\psi$ . If $i=r$, then $\phi'=Q \mc X^i\psi'$ is in normal form
  and equivalent to $\phi$, and hence the property is true.

  We will now assume that $i<r$. If $\psi'$ contains fewer quantifiers
  of order $r$, then $\phi'=Q \mc X^{t,i}.\psi'$ is a formula, equivalent
  to $\phi$, with fewer quantifiers of order $r$. Hence by the
  induction property over this number we can find an equivalent
  formula in normal form.

  We will then assume that there are at least the same number of
  quantifiers of order $r$ in $\psi'$ as in $\psi$, and since the
  induction hypothesis tells us that there is not more quantification,
  we will assume that the number of quantifications is the same. Since
  $\psi'$ is in normal form and contains a quantifiers of order $r$,
  then $\psi'=Q'\mc Y^{t',r}\xi$ and we can now write $\phi'$ as
  $Q'\mc Y^{t::t',r}Q\mc{X}^{t,i}\xi[\mc Y(\mc Z_1,\dots,\mc Z_a)/\mc
  Y(\mc X,\mc Z_1,\dots,\mc Z_a)]$ where $t::t'$ is the tuple whose
  first element is $t$ and whose other elements are the elements of
  $t'$. Let us assume for now that this formula is equivalent to
  $\phi$. $\psi''=Q\mc{X}^{t,i}\xi[\mc Y(\mc Z_1,\dots,\mc Z_a)/\mc
  Y(\mc X,\mc Z_1,\dots,\mc Z_a)]$ is a formula with fewer quantifiers
  of order $r$ than in $\phi$ and hence it has got a normal form
  $\psi'''$ equivalent to $\psi''$, and then $\phi''=Q'\mc
  Y^{t::t',r}\psi'''$ is a formula in normal form equivalent to $
  \phi$.

  Here with $\mc Y^{t::t'}$ we have lost the normal form of last
  section, so let $\phi''''$ be equivalent to $\phi'''$ and in arity
  normal form. We proved that it is possible, we just need to consider
  the free variables of $\phi$ that are quantified in the entire
  formula as elements of the vocabulary of $\phi$, which is coherent
  with our definition. And, as we explained at the end of last
  subsubsection, since $\phi''$ is in decreasing normal form it will
  remain in this normal form.

  \subparagraph{} Now, it remains to prove that $\phi$ is equivalent
  to $\phi'$ in the last case, which means that $\phi=Q \mc
  X^{t,i}Q'\mc Y^{t',r}\xi$ is equivalent to $\phi'=Q'\mc
  Y^{t::t,r}\psi''$ with $\psi''=Q\mc{X}^{t,i}\xi[\mc Y(\mc
  Z_1,\dots,\mc Z_a)/Y(\mc X,\mc Z_1,\dots,\mc Z_a)]$. There are four
  different cases, for the four possible values of the couple
  $(Q,Q')$. We are going to make a proof for $Q=\forall, Q'=\exists$;
  the three other cases use the same idea.

  Let $\phi=\forall \mc X^{t,i}\exists\mc Y^{t',r}\xi$ and
  $\phi'=\exists\mc Y^{t::t,r}\forall\mc{X}^{t,i}\xi[\mc Y(\mc
  Z_1,\dots,\mc Z_a)/\mc Y(\mc X,\mc Z_1,\dots,\mc Z_a)]$. We are
  going to prove their equivalence, first by proving that the truth of
  the first formula implies the truth of the second one.  Let $\mf A$
  be a structure, and suppose that $\mf A\models\phi$, then for any
  relation $\mc X^{t,i}$ there exists a relation $\mc Y_{\mc
    X}^{t',r}$ such that $\mf A[\mc{X/X}][\mc{Y/Y_{X}}]\models\xi$, so
  let $\mc Y'^{t::t',r}=\{\mc X^{t,i}::\olmc T|\olmc
  T\in\mc{Y_{X}}\}$. Then, for any value of  $\mc X^{t,i}$, $\mc
  Y_{\mc X}(\olmc T)\Leftrightarrow \mc Y'(\mc X,\olmc T)$, and by
  induction over $\xi$, we have $\mf
  A[\mc{X/X}][\mc{Y/Y_{X}}]\models\xi\Leftrightarrow\mf
  A[\mc{X/X}][\mc{Y/X'}]\models\xi[\mc Y(\mc Z_1,\dots,\mc Z_a)/\mc
  Y(\mc X,\mc Z_1,\dots,\mc Z_a)]$.

  Now we will show that the truth of the second statement implies the
  truth of the first. Suppose that $\mf A\models\phi'$. Then there
  exists an $\mc Y^{t::t',r}$ such that for all value of $\mc X^{t,i}$
  we have $\mf A[\mc {X/X}][\mc {Y/Y}]\models\xi[\mc Y(\mc
  Z_1,\dots,\mc Z_a)/\mc Y(\mc X,\mc Z_1,\dots,\mc Z_a)]$. Let $\mc
  X'^{t,i}$ be an arbitrary relation, then let $\mc Y_{\mc X}'=\{\olmc
  T|\mc Y(\mc X',\olmc T)\}$, then $\mc Y_{\mc X}(\olmc
  T)\Leftrightarrow \mc Y'(\mc X,\olmc T)$, and by induction over
  $\xi$ we have $\mf
  A[\mc{X/X}][\mc{Y/Y_{X}}]\models\xi\Leftrightarrow\mf
  A[\mc{X/X}][\mc{Y/X'}]\models\xi[\mc Y(\mc Z_1,\dots,\mc Z_a)/\mc
  Y(\mc X,\mc Z_1,\dots,\mc Z_a)]$.

\end{proof}

\paragraph{Infinite structures}
As stated in the beginning of this subsection, every proof only used
information about the formulae and there is not any use of the
``structure''. Hence this normal form also applies to formulae in
high-order over infinite structures.
\section{High-order queries}\label{hoq}
\subsection{Number of relations}
\begin{definition}Let $r,a>0$ be positive integers. We define $C(r,a)$
  to be the maximum cardinality of a relation of $F^{a,r}$, $N(r,a)$
  to be the number of relations in it, $T(r,a)$ is the number of
  $a$-tuples of relations and $B(r,a)$ is the number of bits necessary to
  describe such a relation. These relations are also defined
  without the ``$a$'', for example $C(r)=C(r,O(1))$.\end{definition}
\begin{lemma}
  We have the following equalities:
  \label{size}
  \begin{itemize}
  \item $C(r,a)=\texp^{r-2}_2(n^a,a)=\exp^{r-2}_2(n^{O(1)})$
  \item$T(r,a)=\texp^{r-1}_2(n^a,a)=\exp^{r-1}_2(n^{O(1)})$
  \item $N(r,a)=2^{\texp^{r-2}_2(n^a,a)}=\exp^{r-1}_2(n^{O(1)})$
  \item $B(r,a)=\texp^{r-2}_{2}(n^{a},a)=\ex {r-2}$
  \end{itemize}
\end{lemma}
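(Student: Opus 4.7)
The plan is a straightforward induction on the order $r$, using the recursive structure of $F^{a,r}$. The key observation is that a relation of type $A(a,r)$ with $r\ge 2$ is, by definition, a subset of the $a$-fold Cartesian product of $F^{a,r-1}$. This gives the following four recurrences, which are the heart of the argument:
\begin{align*}
C(r,a) &= N(r-1,a)^{a} = T(r-1,a),\\
N(r,a) &= 2^{C(r,a)},\\
T(r,a) &= N(r,a)^{a},\\
B(r,a) &= C(r,a),
\end{align*}
together with the base values $N(1,a)=n$ and $T(1,a)=n^{a}$ (since $F^{a,1}$ is the universe of size $n$ and an $a$-tuple of universe elements is just a tuple in $[0,n-1]^{a}$). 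The identity $B(r,a)=C(r,a)$ holds because a relation is completely described by a bitvector indicating, for each of the $C(r,a)$ possible $a$-tuples, whether it belongs to the relation.

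Combining the first three recurrences gives a single recurrence for $T$: $T(r,a)=N(r,a)^{a}=2^{a\cdot C(r,a)}=2^{a\cdot T(r-1,a)}$, which is exactly the defining recurrence of $\texp_{2}^{r-1}(n^{a},a)$ with base $T(1,a)=n^{a}=\texp_{2}^{0}(n^{a},a)$. I would prove $T(r,a)=\texp_{2}^{r-1}(n^{a},a)$ by a one-line induction on $r$, and then read off the remaining three equalities as immediate corollaries: $C(r,a)=T(r-1,a)=\texp_{2}^{r-2}(n^{a},a)$, $B(r,a)=C(r,a)=\texp_{2}^{r-2}(n^{a},a)$, and $N(r,a)=2^{C(r,a)}=2^{\texp_{2}^{r-2}(n^{a},a)}$.

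To convert the $\texp$-expressions into the $\exp$-expressions in the statement, I would note that since $a$ is a positive constant (we are in the regime $a=O(1)$ when the ``$a$'' is omitted), we have $a\cdot x \le x^{2}$ for $x\ge a$, and more generally $\texp_{2}^{k}(n^{a},a)$ is squeezed between $\exp_{2}^{k}(n)$ and $\exp_{2}^{k}(n^{c})$ for some constant $c$ depending on $k$ and $a$. A short induction on $k$ then yields $\texp_{2}^{k}(n^{a},a)=\exp_{2}^{k}(n^{O(1)})$, and substituting this into the four formulas above gives the stated $\exp$-form. In particular $2^{\texp_{2}^{r-2}(n^{a},a)} = \exp_{2}^{r-1}(n^{O(1)})$, completing the three exponential equalities; and $B(r,a)=\exp_{2}^{r-2}(n^{O(1)})=\ex{r-2}$.

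There is no real obstacle here; the only thing to be careful about is keeping the indices and the off-by-one between $C$ and $T$ straight, and verifying that the reduction from $\texp$ to $\exp$ with $O(1)$ slack works uniformly in $r$ (which it does because $a$ is absorbed into the polynomial exponent at each level of the tower).
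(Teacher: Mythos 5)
Your proposal is correct and follows essentially the same route as the paper: an induction on the order $r$ with the recurrences $C(r,a)=T(r-1,a)$, $N(r,a)=2^{C(r,a)}$, $T(r,a)=N(r,a)^{a}$ and base case $T(1,a)=n^{a}$, with the absorption of the constant $a$ into the polynomial at the bottom of the tower handled as you describe. The only cosmetic difference is that the paper defers the $B(r,a)$ claim to the following subsection on encodings, where it gives exactly your bitvector argument (one bit per possible $a$-tuple, matched by the $\log_{2}N(r,a)$ lower bound).
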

This lemma is similar to the one stated in \cite{lauri} but corrects a
minor error there. We need the ``big O'' to be inside of the exponent
and not around it. 
\begin{proof}Indeed, $T(1,a)$ is the size of the Cartesian product of $a$ sets
  of $n$ elements each, so
  $T(1,a)=n^a=\texp^{0}_2(n^a,a)=\exp_2^{0}(n^{O(1)})$.

  By induction, supposing the properties are true up to order
  $r-1\ge1$:
  \begin{itemize}
  \item{}An $a$-ary relation of order $r$ is a subset of the tuples of
    $a$-ary relations of order $r-1$ so
    $C(r,a)=T(r-1,a)=\texp^{r-2}_2(n^a,a)$.
  \item{}Hence the number of $a$-ary relations of order $r$ is the
    number of subsets of the $a$-tuples of $a$-ary relations of order
    $r-1$, so
    $N(r,a)=2^{T(r-1,a)}=2^{\texp^{r-2}_2(n^a,a)}=\exp^{r-1}_2(n^{O(1)})$.
  \item The number of $a$-tuples of $a$-ary relations of order $r$ is
    the size of Cartesian product of $a$ copies of the set of $a$-ary
    relations of order $r$, so
    $T(r,a)=N(r,a)^a=(2^{\texp^{r-2}_2(n^a,a)})^a=2^{\texp^{r-2}_2(n^a,a)\times
      a}=\texp^{r-1}_2(n^a,a)=\exp^{r-1}_2(n^{O(1)})$.
\end{itemize}
The proof for $B(r,a)$ will be the subject of the next subsection.
\end{proof}
\subsection{Encoding relations}\label{code}
In this subsection we will explain how high order relations can be
used and checked in a space-efficient way such that queries of these
relations are also efficient.

Since there are $N(r,a)=2^{{\texp}^{r-2}_2(n^a)}$ relations of order $r$
and arity $a$, we need at least
$\log_2(N(r,a))=\texp^{r-2}_2(n^a)=T(r-1,a)$ bits to encode a relation
$R^{a,r}$ as a string of bits. The last equality is not a surprise,
because all the information one needs to know the relation $R^{a,r}$
explicitly is the set of $a$-tuples of relations of order $r-1$ in
$R^{a,r}$; except for the special case $r=1$, where relations are on
the elements of the universe, but in this case it is well known that
one needs $\lceil\log(n)\rceil$ bits. Since our code will use exactly
this number of bits, it is impossible to find a more space efficient
general encoding.

We will show that this is an exact bound when $r\ge 2$ by creating a
one-to-one encoding function $e$ from a relation of order $r$ and
arity $a$ onto a string of bits of length $\texp^{r-2}_2(n^a,a)$. Let
$b$ be a bit position of $e(R^r)$. As a binary number $b$ is a string
of length $\log_2(\texp^{r-2}_2(n^a,a))=a\texp^{r-3}_2(n^a,a)$, so
inductively, it can be considered as an $a$-tuple of codes of $a$-ary
relations of order $r-1$. The description will then be that the $b$th
bit will be one if and only if this $a$-tuple encodes an element of
$R^r$.

It is clear that this is a one-to-one relation and that this encoding
contains all the relevant information, and thus that the equality of
relations is just an equality of strings of bits. It also gives us a
canonical order over relations, which is the order over the binary
code of the relation.

\subsection{Encoding input}\label{acc}
In subsection \ref{code} we explained how to encode high order
relations in a space efficient way. But it is efficient in the worst
case; in graph theory it would be equivalent to the matrix encoding.
But, as in graph theory, it can also be interesting to consider other
codes for the input, especially for non-dense relations.

An example of a possible code would be a circuit such that the leaves
are elements of the universe, the nodes of height $2n$ are a relations
of order $n$ and the nodes of height $2n+1$ are $a$-tuples of
relations of order $n$. There is an edge from an $a$-tuple into a
relation if this tuple is an element of the relation, and the
$a$-tuples are of in-degree $a$, where there is an order on the edges,
the $a$ predecessors being of course the $a$ elements of the tuple.

Since many different encodings could be imagined, depending on the
assumptions about the problem one wants to solve, we are going to
speak of a more general property.

\begin{definition}[acceptable code]
  An encoding of a $\sigma$-structure $\mathfrak A$ is said to be
  acceptable if for every relation $\mc R^{a,r}$ and $a$-tuple $\olmc
  S^{a,r-1}$ the property $\mc R(\olmc S)$ is decidable in time
  polynomial in the size of the description of $\mc R$ and $\olmc S$.
\end{definition}

\begin{definition}[reasonable input]\label{reasonable}
  A set of input is reasonable for a given code if the size of the
  code of the structures of this classes is bounded by a polynomial in
  the size of the structure.
\end{definition}
By cardinality, it is clear that the class of every structure of order
at least 3 can not be ``reasonable''.

\begin{claim}
  The circuit encoding and the encoding of section \ref{code} are
  acceptable, and they are reasonable for inputs of order 2.
\end{claim}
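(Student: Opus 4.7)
The plan is to verify the two properties (acceptability, reasonableness for order~2) for each of the two encodings in turn. Throughout, let $n$ denote the size of the universe and $a$ the maximal arity appearing.

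For the bit-string encoding of Section~\ref{code}, acceptability reduces to a single table lookup. Given the binary description of a relation $\mathcal R^{a,r}$ and of each of the components of an $a$-tuple $\bar{\mathcal S}^{a,r-1}$, I would concatenate the bit-strings encoding $\mathcal S_1,\dots,\mathcal S_a$ to obtain a word $b$ of length $a\cdot \text{texp}^{r-3}_2(n^a,a)=\log_2(\text{texp}^{r-2}_2(n^a,a))$, which by the very definition of the encoding is a valid bit-index into the description of $\mathcal R$. Then $\mathcal R(\bar{\mathcal S})$ holds iff the $b$-th bit of the encoding of $\mathcal R$ is $1$. The concatenation is linear in the description of $\bar{\mathcal S}$ and the lookup is linear in the description of $\mathcal R$, so the whole test runs in time polynomial in the size of the input descriptions.

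For the circuit encoding, acceptability is even more direct. A relation $\mathcal R^{a,r}$ is a node $v_{\mathcal R}$ at height $2r$, and an $a$-tuple $\bar{\mathcal S}$ is (represented by) a node $v_{\bar{\mathcal S}}$ at height $2r-1$ whose $a$ ordered predecessors are the nodes representing $\mathcal S_1,\dots,\mathcal S_a$. Thus $\mathcal R(\bar{\mathcal S})$ holds iff there is an edge $v_{\bar{\mathcal S}}\to v_{\mathcal R}$ in the circuit; this incidence test is polynomial in the circuit's description. (If one is given $\bar{\mathcal S}$ only as the list of its $a$ components, one must also locate the unique tuple node with those predecessors, which is again polynomial.)

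For reasonableness at order~2, I would just count bits. A relation of order~$2$ and arity $a$ is a subset of $[0,n-1]^a$, so its bit-string encoding has length exactly $n^a$, which is polynomial in $n$. In the circuit encoding the same relation is described by at most $n^a$ tuple-nodes together with $n^a\cdot a$ edges into the universe plus one edge per element into $v_{\mathcal R}$, again polynomial in $n$. Since a vocabulary is a finite (hence constant-size) set of such relation symbols, the total description of a $\sigma$-structure of order at most~$2$ is polynomial in $n$ in both encodings, which is exactly the condition of Definition~\ref{reasonable}. I do not foresee any substantive obstacle; the main care needed is to make the correspondence between bit-indices and tuples in the bit-string encoding precise enough that the $O(1)$-step lookup is really justified, which is essentially what the recursive description at the end of Section~\ref{code} already provides.
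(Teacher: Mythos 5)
Your proof is correct. Note that the paper itself states this claim without any proof at all, so there is no argument of the author's to compare against: your verification (bit-index lookup by concatenating the codes of the $\mathcal S_i$ for the encoding of Section~\ref{code}, an incidence test on the tuple node for the circuit encoding, and the $n^{a}$-bit / polynomially-many-nodes count at order~2 combined with the finiteness of the vocabulary) is exactly the routine check the paper leaves implicit. The only point worth flagging is the base case $r=2$, where the components of the tuple are first-order elements coded with $\lceil\log n\rceil$ bits rather than by the recursive $\texp$ formula, but your appeal to the recursive description of the encoding covers this without difficulty.
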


\subsection{Reducing the order of the input}
Since we mostly want to study the formulae in \hod r2 we are going to
show how to reduce the order of the structures. That is, for a formula
in \hod {r+1}{f+1} for $f\ge2$, how to obtain an equivalent formula in
\hod {r}{f} over an equivalent vocabulary of order $f$, for a precise
definition of ``equivalent".

We will define the function $F:\hoc {r+1}j{f+1} \rightarrow \hoc rjf$,
and the function $V$ from vocabularies $\sigma$ of order $f+1\ge3$ and
$\sigma$-formulae into vocabularies of order $f$, such that if $\phi$
is an $\sigma$-formula then $F(\phi)$ is a $V(\sigma)$-formula and the
function $S$ is from $\sigma$-structures into $V(\sigma)$-structures such
that $\mf A\models \phi\Leftrightarrow V(\mf A)\models F(\phi)$.

The encoding will be such that $|V(\mf A)|=O(2^{|\mf A|^{O(1)}})$. We
consider that this size is acceptable since an $f$th order relation is
encoded with $\ex{f-2}$ bits, and after we apply those
functions we will have a structure with relations of order up to
$f-1$.  Hence the new size of the encoding of the input will be
$\exp_{2}^{f-3}(2^{n^{O(1)}})=\ex{f-2}$ bits.

Let $\mc R^{a,r}$ be a symbol of order $r$ and arity $a$. We define
$V(\mc R)$ as a symbol of order $max\{1,r-1\}$ and arity $a$.

Let $\sigma=\{<^{2,2},\mc R_{1}^{r_{1},a_{1}},\dots,\mc
R_{n}^{r_{n},a_{n}}\}$, $\phi$ a $\sigma$-formula, let $a'$ be the
highest arity of a quantified variable of $\phi$, let
$a=\max(a_{1},\dots,a_{m},a',1)\footnote{$a$ depends on $a'$ which
  explains why $V$ takes $\phi$ as an input.}$, then
$V(\sigma,\phi)=\{<,n,T_{1}^{2,2},\dots,T_{a}^{a+1,1},V(\mc
R_{1}),\dots, V(\mc R_{n})\}$. If $\mf A$ is a $\sigma$-structure of
cardinality $n$ where $<$ is interpreted as a total order over the
universe\footnote{There is no loss of generality since in high-order
  we can always create a linear order.}, then $S(\mf A)$ contains exactly
$2^{n^{a}}$ elements where the first $n$ elements represent the $n$
elements of $\mf A$, and the first $2^{n^{b}}$ elements, with $b\le
a$, represent the second-order $b$-ary relations, the exact
representation being the same as in subsection \ref{code}. The $c_{i}$
will be the same constants, $n$ will be the $n$th element and represents
the size of the input of the former universe,
$T_{i}^{2,i+1}(x_{0},x_{1}\dots,x_{i})$ will be true if $x_{0}$
represents a second-order $i$-ary relation $R_{x_{0}}^{2,i}$ and the
$(x_{j})_{1\le j\le i}$ represent elements of the universe, and if
$R_{x_{0}}(x_{1},\dots,x_{i})$. 

This means that the same elements of the structures may represent both
a first-order element and second order $b$-ary elements for any
$b$. The exact meaning is known only when the variable is queried.
The former $r_{i}$th order $a_{i}$-ary relations now become
$r_{i}-1$th order $a_{i}$-ary relations, the only other change is that
when $r_{i}-1=2$, we assume that the relation does not accept any
first-order element which is not the representation of a former second
order $a$-ary relation.

\paragraph{} We must define what it means for a relation $R^{q,r}$ to
be a correct encoding of a $b$-ary relation of order $q$. We will do
it with $\text{acc}(r^{q,b})$ which means that it contains no first
order elements that are not encodings of second order elements.
\begin{eqnarray}
  \text{acc}(\mc X^{q,b})\ed \forall_{1\le i\le b}\mc Y^{q-1,b}_{i}(\mc X(\mc Y_{1},\dots,\mc Y_{n})\Rightarrow \bigwedge_{1\le i\le b}\text{acc}(\mc Y_{i}))\\
  \text{acc}(\mc X^{1,b})\ed \mc X<2^{n^{b}}
\end{eqnarray}
Here $n$ is a constant of the new vocabularies which represents the size of
the former universe, we could either  add $2^{n^{b}}$ to the
input, or define $x<{}2^{n^{b}}$ as $\forall x_{0},\dots,x_{b}
(T_{b+1}(x,x_{0},\dots,x_{b})\Rightarrow x_{0}=0)$ if $b<a$, else as
$\top\ed \forall x(x=x)$ and of course $x=0$ as $\neg \exists y.(y<x)$.

\begin{lemma}\label{lem:lower}
  For any high-order relation $\mc R^{r,v}$ we have
  $\text{acc}(V(\mc R))$. If $\text{acc}(\mc R)$ is true then
  there is some $\mc S$ such that $V(\mc S)=\mc R$.
\end{lemma}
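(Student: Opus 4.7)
The plan is to prove both parts of the lemma by induction on the order $r$ of $\mc R$, using the fact established in subsection \ref{code} that the encoding of second-order arity-$a$ relations by elements of $[0, 2^{n^{a}})$ in the extended universe is a bijection, and that $V$ extends this encoding recursively to all higher orders by replacing each $(r-1)$-order component by its image under $V$.

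For the forward direction, I would first treat the base case $r=2$: by construction $V(\mc R)$ is obtained by replacing each second-order component by its code under the encoding of subsection \ref{code}, and every such code lies in $[0, 2^{n^{a}})$; this matches exactly the base clause $\text{acc}(\mc X^{1,a}) \ed \mc X<2^{n^{a}}$. For the inductive step $r>2$, the $(r-2)$-order sub-relations appearing inside $V(\mc R)$ are exactly the images under $V$ of the $(r-1)$-order relations appearing inside $\mc R$; the inductive hypothesis yields $\text{acc}$ for each of these images, and the recursive clause in the definition of $\text{acc}$ then gives $\text{acc}(V(\mc R))$.

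For the backward direction, I would induct the same way. In the base case, $\text{acc}(\mc R)$ guarantees that every first-order element of $\mc R$ lies in $[0, 2^{n^{a}})$, so each decodes via the inverse of the bijection from subsection \ref{code} to a unique second-order relation of the original structure; collecting these decoded relations produces the desired $\mc S$ with $V(\mc S)=\mc R$. In the inductive step, $\text{acc}(\mc R)$ forces every $(r-2)$-order sub-relation of $\mc R$ to itself satisfy $\text{acc}$, so by the inductive hypothesis each such sub-relation equals $V(\mc Y_{i})$ for some original $(r-1)$-order relation $\mc Y_{i}$; assembling these $\mc Y_{i}$'s into a single $r$-th order relation yields an $\mc S$ with $V(\mc S)=\mc R$.

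The main obstacle is not a deep step in the induction but the careful bookkeeping at the base case, where the order drops from $2$ to $1$ while the arity is preserved and the new vocabulary contains the $T_{i}$ predicates that play the role of the membership oracle for the encoded second-order relations. The proof requires that the extension of $V$ to ground relations be fixed consistently with the encoding of subsection \ref{code}, so that a valid encoding really does correspond to a unique preimage second-order relation. Once this is made explicit, both directions follow by a straightforward structural induction mirroring the recursive clauses of $\text{acc}$.
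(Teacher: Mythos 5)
Your proposal is correct and matches the paper's own (much terser) argument: the paper disposes of the first part ``by construction of $V$'' and the second by ``a trivial induction over the order,'' which is exactly the structural induction you spell out, with the base case grounded in the encoding of subsection \ref{code} and the inductive step mirroring the recursive clause of $\text{acc}$. Your added care about fixing $V$ consistently with that encoding at the base case is a reasonable elaboration of what the paper leaves implicit, not a different route.
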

\begin{proof}
  The first part is by construction of $V$, and the second one is a
  trivial induction over the order.
\end{proof}

Now we need to define $F$, and we will do it recursively. We assume
without loss of generality that there is no $\forall$ or $\land$. The algorithm is in
table \ref{f}. In this algorithm, when $\olmc V$ is a tuple of
variables we denote $\olmc X=\mc X_{1}^{r_{1},a_{1}},\dots,\mc
X_{r}^{r_{r},a_{r}}$, $V(\olmc{X'})=\mc V(X_{1}),\dots,V(\mc X_{r})$
and if $P$ is a variable whose type is equivalent to $\olmc X$ then
$V(P)$'s type is equivalent to $V(\olmc{X})$.
\begin{table*}\label{f}
\begin{tabular}{|l|l|}
  \hline
  $\phi$&$F(\phi)$\\
  \hline
  $R=S$&$R=S$\\
  $R^{2,b}(x_{1},\dots,x_{b})$&$T_{b}(R,x_{1},\dots,x_{b})$\\
  $\mc R^{p,b}(\mc X_{1},\dots,\mc X_{b})$&$\mc R^{p-1,b}(\mc X_{1},\dots,\mc X_{b})$\\
  $\phi\lor\psi$& $F(\psi)\lor{}F(\psi)$\\
  $\neg\phi$&$\neg F(\phi)$\\
  $\exists{}x.\psi$&$\exists{}x.(x<{}n\land\psi)$\\
  $\exists{}X^{2,b}.\psi$&$\exists{}X^{1}.(x<{}2^{n^{b}}\land\psi)$\\
  $\exists{}\mc X^{p,b}.\psi$&$\exists{}\mc X^{p-1,b}.\psi$\\
  $(\TC_{\ol{\mc X}\ol{\mc Y}}\psi)(\ol{\mc Z}\ol{\mc T})$&$(\TC_{V(\ol{\mc X}\ol{\mc Y})}(\text{acc}(V(\olmc{XY}))\land F(\psi)))(V(\ol{\mc Z\mc T}))$\\
  $(\PFP_{P,\olmc X}\psi)(\olmc Y)$&$(\PFP_{V(P,\olmc{X})}(\text{acc}(V(\olmc X))\land F(\psi)))(V(\olmc{Y'}))$\\
  $(\IFP_{P,\olmc X}\psi)(\olmc Y)$&$(\IFP_{V(P,\olmc{X})}(\text{acc}(V(\olmc X))\land F(\psi)))(V(\olmc{Y'}))$\\
  \hline
\end{tabular}
\caption{F}
\end{table*}
\begin{theorem}
  If $f\ge2, r\ge2$ (resp. $r=1$) and $\phi\in\hoc {r+1}j{f+1}$ then
  $F(\phi)\in\hoc {r}j{f}$ (resp $F(\phi)\in\hod rf$).  For any
  vocabulary $\sigma$, $\sigma$-structure $\mf A$ and $\sigma$-formula
  $\phi$, $\mf A\models\phi\Leftrightarrow V(\mf A,\phi)\models
  F(\phi)$.
\end{theorem}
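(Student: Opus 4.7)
The plan is a structural induction on $\phi$, establishing both the syntactic class claim and the semantic equivalence in tandem. For the class claim, I would simply inspect each row of the table defining $F$: every Boolean or equality clause preserves orders below the top, the atomic clause for $R^{2,b}$ replaces a second-order use by the symbol $T_b$ together with a first-order variable, and for $p\geq 3$ each order-$p$ object is shifted one level down. In particular the maximum order drops from $r+1$ to $r$ and the free-variable-order drops from $f+1$ to $f$, while the alternation depth $j$ is preserved; when the input is $\hoc{2}{j}{f+1}$ there are no quantifiers of order $>1$ left in the output, so $F(\phi)$ lies in $\hod{1}{f}$.

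For the semantic claim, the key preparatory observation is that Lemma \ref{lem:lower} provides a bijection between $\sigma$-relations on $\mf A$ and $V(\sigma)$-relations on $V(\mf A,\phi)$ that satisfy $\text{acc}$. I would carry out the induction with the invariant that whenever a free variable $\mc X$ of $\phi$ is interpreted by a relation $R$ over $\mf A$, the corresponding variable $V(\mc X)$ of $F(\phi)$ is interpreted by $V(R)$ over $V(\mf A,\phi)$. The atomic clauses follow immediately from the definitions of $T_b$ and of $V$ on relations; the Boolean cases are transparent. For the first-order quantifier $\exists x$ and for $\exists X^{2,b}$, the guards $x<n$ and $x<2^{n^b}$ restrict the enlarged universe to the $n$ elements that encode original first-order objects and the $2^{n^b}$ elements that encode second-order $b$-ary relations respectively, so those quantifiers range over exactly the intended objects modulo encoding. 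For $\exists \mc X^{p,b}$ with $p\geq 3$, the quantifier in $F(\phi)$ formally ranges over all relations of order $p-1$ on the enlarged universe, but a witness can always be chosen as the encoding $V(R)$ of an $\mf A$-relation, and conversely any witness to $\phi$ yields such an encoded witness; the extra ``non-acceptable'' relations in the range are irrelevant because the inductive hypothesis is only invoked on relations that satisfy $\text{acc}$.

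The main obstacle is the fixed-point and transitive-closure cases, because the operators $\TC$, $\PFP$, $\IFP$ iterate over the full lattice of relations of the quantified type, so the iteration over $V(\mf A,\phi)$ nominally sees many more relations than that over $\mf A$. The rewriting table handles this by conjoining $\text{acc}(V(\olmc X))$ inside the body. By induction on the stage index, this conjunct forces every stage $P_i$ of the $V(\mf A,\phi)$-iteration to be contained in the set of $\text{acc}$-encodings, hence by the bijection $V$ it corresponds to a well-defined relation $\tilde P_i$ over $\mf A$; a second induction on $i$, using the atomic and quantifier cases already treated, shows that $\tilde P_i$ equals the $i$-th stage of the original operator applied to $\phi$ over $\mf A$. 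Consequently the two iterations stabilise at the same step or both fail to, and their limits agree on every $\text{acc}$-tuple, in particular on $V(\olmc Y)$. The transitive-closure case is similar with $\olmc Z$ and $\olmc T$ playing the role of path endpoints. The $\forall$ and $\land$ cases are handled by the standing assumption that they have been eliminated in favour of $\neg$, $\exists$ and $\lor$.
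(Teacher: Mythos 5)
Your proposal is correct and takes essentially the same route as the paper: the syntactic claim by inspecting the translation table (the only new quantifiers introduced by $F$ are of low order, so the top-level alternation pattern is preserved), and the semantic claim by structural induction on $\phi$ through the encoding $V$, with Lemma \ref{lem:lower} providing the $\text{acc}$-characterization of encoded relations and an inner induction on the number of steps/stages handling $\TC$ and the fixed-point operators. The one place where you are more explicit than the paper---the unguarded quantifier $\exists \mc X^{p,b}$ with $p\ge 3$, where you assert that non-acceptable witnesses can be disregarded---is precisely the case the paper's own proof leaves untreated, so your sketch is at least at its level of rigor.
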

\begin{proof}
  For the first statement, as we can see, the only new quantifiers are
  of order lower than $r-1$ (resp. of order 1), hence the number of
  alternations of the $r$th order quantification in $F(\phi)$ is the same
  as the number of $r+1$th order quantifiers in $\phi$.
\paragraph{}
For the second statement, we do the proof by induction over
$\phi$. For the atomic formulae it is by construction, and for the
negation, conjunction and disjunction it is trivial.

\paragraph{}So assume that $\phi=\exists x.\psi$, and let us prove $\Rightarrow$. If $\mf
A\models\phi$ then there is some $i<n$ such that $\mf
A[x/i]\models\psi$, by induction $V(\mf A[x/i])\models F(\psi)$ and
since $i<n$, then $F(\phi)$ is true.

For $\Leftarrow$, if $V(\mf A)\models F(\phi)$ then there is some $i$
such that $V(\mf A)[x/i]\models x<c\land F(\psi)$, of course then
$i<n$, hence $V(\mf A)[x/i]=V(\mf A[x/i])$ and by induction $V(\mf
A[x/i])\models F(\psi)\Leftrightarrow \mf A[x/i]\models \psi$, hence
$\mf A\models\phi\Leftarrow V(\mf A)\models F(\phi)$.
 
\paragraph{The case $\phi=(\TC_{\ol{\mc X}\ol{\mc Y}}\psi)(\ol{\mc
    Z}\ol{\mc T})$} 

Let us prove $\Rightarrow$ by induction over the number $s$ of steps
of the transitive closure. If $s=0$ it is trivial, let us suppose that
$s>1$ and it is true for $s-1$. Then there exists $\olmc M$ equivalent
to $\olmc Z$ such that $\mathfrak A[\olmc X/\olmc Z][\olmc Y/\olmc
M]\models\psi$ and $\mathfrak A\models (\TC_{\ol{\mc X}\ol{\mc
    Y}})(\ol{\mc M}\ol{\mc T})$ and then by the induction hypothesis
over $\phi$, we have $V(\mathfrak A[\olmc{X}/\olmc {Z}][\olmc
{Y}/\olmc{M}])\models F(\psi)$ hence $V(\mathfrak A)[\olmc{X'}/\olmc
{Z'}][\olmc {Y'}/\olmc{M'}]\models F(\psi)$ and by lemma
\ref{lem:lower}, $\text{acc}(V(\olmc M))$ and by the induction
hypothesis over $s$, $V(\mathfrak A)\models (\TC_{\ol{\mc X'}\ol{\mc
    Y'}}F(\psi))(\ol{\mc M'}\ol{\mc T'})$.

Now, let us prove $\Leftarrow$, it is also an induction over the number
of steps $s$ that close $V(\mf A)\models(\TC_{V(\ol{\mc
    X}\ol{\mc Y})}(\text{acc}(V(\olmc{XY}))\land
F(\psi)))(V(\ol{\mc Z\mc T}))$. If $s=0$ then it is trivial,
else there exists some $\olmc M$ equivalent to $\olmc X$ such that
$V(\mf A)[\olmc{X}/V(\olmc Z)][\olmc
Y/\olmc{M}]\models\text{acc}(\olmc{XY})\land F(\psi)$ hence by lemma
\ref{lem:lower} there is some $\olmc M'$ such that $V(\olmc M')=\olmc
M$ and by the induction hypothesis over $\phi$ we have  $V(\mf
A[\olmc{X}/\olmc Z][\olmc Y/\olmc{M'}])\models\psi$, which ends the
proof.

The proofs for the fixed points are similar, with induction on the size
of the fixed point.
\end{proof}
\begin{claim}\label{highstructure}
  In this article we will give results for formulae over structures of
  order $2$. In general, if the input structure is of order $p-1$ and
  hence contains at least one relation of order $p$ and no relation of
  higher order, the time and space bound will decrease, by $p-2$
  applications of the logarithm over the bound. In particular, a
  corollary will be that queries in $\hod{r}{r}$ are computable in
  polynomial time, as proven in \cite{complex}.
\end{claim}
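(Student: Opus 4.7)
The plan is to iterate the reduction $F$/$V$ from the preceding theorem. Given a formula $\phi$ and an input structure $\mathfrak{A}$ of order $p-1$ (containing at least one relation of order $p$ and none higher), I would set $\phi_{0} \ed \phi$, $\mathfrak{A}_{0} \ed \mathfrak{A}$ and define inductively $\phi_{i+1} \ed F(\phi_{i})$ and $\mathfrak{A}_{i+1} \ed V(\mathfrak{A}_{i}, \phi_{i})$ for $i = 0, \dots, p-3$. By the preceding theorem each step preserves the truth value, reduces the maximum order of the vocabulary by one, and replaces the universe by one of size exponential in the previous. After $p-2$ iterations one has an equivalent formula over an order-$2$ structure whose universe has size $n' = \exp_{2}^{p-2}(n^{O(1)})$, where $n = |A|$.

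Next I would invoke the results for order-$2$ inputs to be proven in Section~\ref{hoq} against this translated formula. If such a result gives a bound $B(n)$ on time or space for the syntactic class of $\phi$ measured in terms of universe size, the bound for $\phi$ over $\mathfrak{A}$ becomes $B(n') = B(\exp_{2}^{p-2}(n^{O(1)}))$. Re-expressed in the natural parameter, namely the input bit-length $N$ of $\mathfrak{A}$ (which by Lemma~\ref{size} is $\Theta(\exp_{2}^{p-2}(n^{O(1)}))$ for such inputs), the bound has lost $p-2$ iterated exponentials compared to the pure order-$2$ bound $B(N)$, equivalently has absorbed $p-2$ applications of $\log$, as claimed. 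The corollary $\hod{r}{r} \subseteq \P$ is then the case $p=r$: the translated formula lies in $\SO = \hod{2}{2}$ over an order-$2$ structure whose universe size is polynomial in $N$, and, relying on Lemma~\ref{lem:lower} to enforce the $\texttt{acc}$ predicates, the polynomial-time algorithm for this fragment gives a bound polynomial in $N$, recovering \cite{complex}.

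The main obstacle will be the bookkeeping across the $p-2$ iterations: one must check that each application of $F$ preserves the alternation structure of $\phi$ (which the preceding theorem handles for a single step, but which has to be composed cleanly so that the alternation count never grows), that the $\texttt{acc}$ predicates inserted at each stage neither secretly add new alternations nor push the formula outside the targeted complexity class, and that the polynomial blow-ups at each step aggregate to at most the claimed $p-2$ exponentials. The last point is essentially immediate since any polynomial in $\exp_{2}^{k}(n^{O(1)})$ is again $\exp_{2}^{k}(n^{O(1)})$, but chaining the encodings $V$ while preserving Definition~\ref{reasonable} -- in particular verifying that the composed encoding remains acceptable in the sense of Section~\ref{acc} -- will require the most care.
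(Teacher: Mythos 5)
Your overall strategy is the right one, and in fact it is the only argument available: the paper states Claim~\ref{highstructure} as a remark with no proof at all, the implicit justification being exactly what you describe, namely iterating the $F$/$V$ reduction of the preceding theorem until the structure has order $2$ and then re-measuring the bounds of Theorem~\ref{atimeorder} (which is in Section~\ref{equ}, not Section~\ref{hoq}) against the description length $N\approx\exp_{2}^{p-2}(n^{O(1)})$ of the original high-order input. The bookkeeping you worry about (preservation of alternations by $F$, the $\texttt{acc}$ conjuncts staying at low order, polynomial blow-ups being absorbed) is indeed handled one step at a time by that theorem and composes without trouble.

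The genuine gap is in your last step, the corollary. After $p-2=r-2$ iterations starting from $\hod{r}{r}$ you land in $\hod{2}{2}=\SO$ over an order-$2$ structure of size polynomial in $N$, and you then invoke ``the polynomial-time algorithm for this fragment''. No such algorithm exists: what Theorem~\ref{atimeorder} gives at $r=2$ is $\hoc{2}{j}{2}=\ATIME{n^{O(1)}}{j}$, i.e.\ the levels of the polynomial hierarchy, so your chain only yields alternating polynomial time (hence membership in \PSPACE) with respect to $N$, not deterministic \P. Evaluating an existential order-$r$ quantifier still requires ranging over $2^{N^{\Theta(1)}}$ relations when the input itself contains order-$r$ relations, which is why $\hod{r}{r}$, read literally with the paper's definition of the second index, cannot be placed in \P{} by this argument (for $r=2$ it would say $\SO\subseteq\P$). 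The deterministic polynomial-time conclusion of \cite{complex} needs the quantified order to end up strictly below the order of the input relations, i.e.\ the situation of $\hod{r}{r+1}$ (structures ``of order $r$'', containing order-$(r+1)$ relations), generalizing $\FO=\hod{1}{2}$: there each quantifier ranges over only $N^{O(1)}$ relations, each of description length $O(\log N)$ up to polynomials, and brute-force evaluation is polynomial in $N$. So you should either run your iteration with $p=r+1$ and finish with this counting argument (one more application of $\log$ than you currently take), or explicitly flag that the claim's ``$\hod{r}{r}$'' must be read with the structure one order above the quantifiers; as written, your final inference does not follow.
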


\section{Arithmetic predicates}\label{arith}
\subsection{Predicates over relations}
In finite model theory, the arithmetic predicates are important,
especially in first order, where even partial fixed points can not
express the parity of the size of the universe without an order
relation. In next sections we will often use either bit predicates or
addition over high order relations, so in this section we will first
explain how to obtain those relations.

As it is already known, a linear order can be specified by a second
order binary relation, hence, contrary to what happens in the
first-order case, we will not make any statement about the
existence or the absence of an order relation in the vocabulary.

We intend to show that the usual predicates that we may ask over first
order, bit, plus, times, <, are redundant in high-order; all of these
predicates can be defined thanks to a first-order total order.

We will speak of some arithmetic operations both over predicates and
over tuples of predicates, as both will be useful in this article. To
distinguish them, we adopt the convention that ``predicate$^{a,r}$''
refers to a predicate over relations and ``predicate$_{a,r}$'' refers
to a predicate over tuples of relations.

\begin{notation}
  In this section, ``$\olmc P^{a,r}$'' will always be an $a$-tuple of
  relations of arity $a$ and order $r$, $\mc P_{1}^{a,r},\dots,\mc
  P^{a,r}_{n}$.
\end{notation}

\begin{claim}[arity of predicate]\label{mon-pred}
  As we will see, to define a predicate over relations of arity $a$,
  quantification is over variables of arity $a$, and hence there is no
  increase of arity of the formula because of the arithmetic
  predicate. In particular, those predicates over monadic relations
  are monadic formulae.
\end{claim}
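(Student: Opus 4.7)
The plan is to verify the claim by walking through the definitions of each arithmetic predicate I intend to use later (a canonical linear order on arity-$a$ relations of order $r$, the bit predicate, and addition), and checking in every case that the auxiliary variables introduced have arity exactly $a$. The key enabling fact is the encoding of subsection \ref{code}: a relation of arity $a$ and order $r$ is a bit string whose bit positions are indexed by $a$-tuples of arity-$a$ relations of order $r-1$, so the natural witnesses when talking about such a relation are themselves arity-$a$ objects.

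First I would construct, by induction on $r$, a canonical total order $<^{a,r}$ on arity-$a$ relations of order $r$. At order $1$ it is the given universe order; at order $r>1$, I extend $<^{a,r-1}$ lexicographically to $a$-tuples, then define $\mc R <^{a,r} \mc S$ to hold iff there exists an $a$-tuple of arity-$a$ relations of order $r-1$ at which $\mc R$ has $0$, $\mc S$ has $1$, and on every strictly smaller $a$-tuple (in the just-defined lexicographic order) the two relations agree. Every quantified variable introduced is an arity-$a$ relation of strictly lower order.

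Second, the bit predicate $\bit(\mc R^{a,r}, \olmc P^{a,r-1})$ is definable as $\mc R(\olmc P)$ with no new quantifier at all. Third, for $\plu$ (and similarly $\suc$ and times), I would write a bitwise definition: the sum $\mc T$ of $\mc R$ and $\mc S$ is an arity-$a$ relation of the same order whose bits are determined by those of $\mc R, \mc S$ and a carry bit, the carry being an auxiliary arity-$a$ relation pinned to its intended value by a $\TC$ or $\IFP$ subformula that iterates over positions in the order $<^{a,r}$ just constructed.

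The step I expect to require the most care is plus: I have to argue that the iterative definition of the carry genuinely stays inside arity $a$, and that quantifiers over bit positions only range over $a$-tuples of arity-$a$ relations, since a careless formulation would enlarge the arity when one writes down the recurrence. Once this is verified for order, bit and plus, the "in particular" clause of the claim follows by specialising to $a=1$: every quantifier in every one of these predicates is then monadic, so the entire arithmetic predicate lies in \MHO.
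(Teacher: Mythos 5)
Your overall route is the paper's: the claim is not given a separate argument but is established by inspecting the explicit definitions of the arithmetic predicates in Section \ref{arith} (equality, the order $<^{a,r}$, bit, plus) and observing that every quantified variable there is an arity-$a$ relation of order $r-1$ (bit positions are tuples $\olmc P^{a,r-1}$ of such relations), so that specialising to $a=1$ gives monadic formulae. Your order and bit steps match the paper's definitions; the only cosmetic difference is the convention for the differing bit in $<^{a,r}$ (the paper compares at the most significant differing position via $<_{a,r-1}$ on tuples, you at the least significant one), and either gives a total order built from arity-$a$ quantifiers only.

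The genuine problem is your treatment of addition. The paper needs no $\TC$ or $\IFP$ to pin down the carry: it uses the carry-lookahead formula $\phi_{\mathrm{carry}}(\mc X,\mc Y,\olmc I)$, saying that there exists a position $\olmc T<_{a,r-1}\olmc I$ at which both $\mc X$ and $\mc Y$ have a one and that at every position strictly between $\olmc T$ and $\olmc I$ at least one of them has a one; then $\plu^{a,r}$ is a plain quantifier formula whose quantifiers range over arity-$a$ tuples of order-$(r-1)$ relations. Your iterative carry, defined through a $\TC$ or $\IFP$ subformula, does keep the arity at $a$, but it places the addition predicate in $\mhoa{r}(\IFP)$ or $\mhoa{r}(\TC)$ rather than in $\mhoa{r}$ (resp.\ $\hoa{r}$) itself. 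That defeats the purpose of the claim: these predicates are later inserted into operator-free, alternation-bounded fragments such as $\hob{r}{j}$ and $\mhob{r}{j}$ (Theorems \ref{atimeorder} and \ref{mho_t}), where an extra fixed point or transitive closure is not available and would also spoil the prefix normal form. So, as written, your argument does not deliver ``predicates over monadic relations are monadic formulae'' in the sense used in Section \ref{sec:mho}; replacing the fixed-point carry by the carry-lookahead definition closes the gap, and the rest of your verification then goes through as in the paper.
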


\sss{Equality predicate}
If there is a binary first-order equality predicate, then every other
equality predicate can be defined in the logic. Define $=^{a,r}$ to be
the equality predicate over relation of order $r$ and arity $a$, and
then we can define it recursively as: $
\mathcal{X}^{a,r}=^{a,r}\mathcal{Y}^{a,r}=_{\mathrm
  {def}}\forall\mathcal{\ol P}^{a,r-1}(\mathcal{X}(\mathcal
{\ol P})\Leftrightarrow\mathcal{Y}(\mathcal{\ol P}))
$
. And of course $\olmc P^{a,r}=_{{a,r}}\olmc Q^{a,r}\ed
\bigwedge_{0\le i<a}\mc P_{i}=^{a,r}\mc Q_{i}$.

\sss{Order relation}\label{order}
Suppose that we have an order relation on first-order variables,
$x<y$. Then we can recursively encode a formula
$\mathcal{X}^{a,r}<^{a,r}\mathcal{Y}^{a,r}$ over relations of arity
$a$ and order $r$ considered as binary numbers.

$
\mathcal{X}^{a,r}<^{a,r}\mathcal{Y}^{a,r}=_{\mathrm {def}}\exists
\olmc P^{a,r-1}.(\mathcal{Y}(\mathcal {\ol P})\land
\neg\mathcal{X}(\mathcal {\ol P})\land\forall\mathcal{\ol
  Q}^{a,r-1}(\mathcal{\ol P}<_{a,r-1}\mathcal {\ol
  Q}\Rightarrow (\mathcal{Y}(\mathcal {\ol
  P})\Leftrightarrow\mathcal{X}(\mathcal{\ol Q})))).
$

Here $<_{a,r}$ is a relation over $a$-tuples of relations of order $r$
defined as: $
\mathcal{\ol X}^{a,r}<_{a,r}\mathcal{\ol
  Y}^{a,r}\ed\bigvee_{1\le i\le a}(\mathcal X_i<^{a,r}\mathcal
Y_i\bigwedge_{1\le j<i}(\mathcal X_i=^{a,r}\mathcal Y_i)).
$

\sss{Bit predicate}\label{bit}
It is usual in descriptive complexity to use a ``bit'' relation,
taking two first order variables $x$ and $y$, such that $\bit(x,y)$ is
true if and only if the $y$th bit of the binary expression of $x$ is
1.

For high order it is easier; since a relation $R^{a,r}$ is equivalent
to a string of $T(r-1,a)$ bits, we can write the $y$ as $a$ relations
of order $i$, and then
\\$\bit(R^{a,r},S_1^{a,r-1},\dots,S_a^{a,r-1})\ed
R^{a,r}(S_1^{a,r-1},\dots,S_a^{a,r-1})$.

\sss{Addition}

The addition of relations is defined as addition over the
corresponding strings of bits.

\begin{eqnarray}
  \phi_{\mathrm{carry}}(\mc X^{a,r},\mc Y^{a,r},\olmc I^{a,r-1})\ed\exists\olmc T^{a,r-1}( \olmc T<_{a,r-1}\olmc I\land\mc X(\olmc T)\land\nonumber  \\\mc Y(\olmc T)\land
  \forall \olmc U^{a,r-1}((\olmc I<_{a,r-1}\olmc U<_{a,r-1}\olmc T)\Rightarrow (\ol X(\olmc U)\lor \ol Y(\olmc U))))  \\
  \plu^{a,r}(\mc X^{a,r},\mc Y^{a,r},\mc Z^{a,r})\ed\forall \olmc{I}^{a,r-1}(\nonumber  \\\mc Z(\olmc I)\Leftrightarrow \mc X(\olmc I)\oplus\mc Y(\olmc I)\oplus\phi_\mathrm{carry}(\mc X,\mc Y,\olmc I))
\end{eqnarray} 

Here $A\oplus B$ is syntactic sugar for $A\Leftrightarrow \neg B$, and
$\phi_{\mathrm{carry}}(\mc X^{a,r},\mc Y^{a,r},\olmc I^{a,r-1})$ is
true if there is a carry propagated in position $\olmc I$ in the
addition of $\mc X$ and $\mc Y$.


\sss{Addition + Multiplication}
In first-order, it is well-known that addition + multiplication
$\equiv$ bit, and the proof does not specify that the predicate must
be over a first-order object, so the very same proof works for higher
order logic.

Hence, addition + multiplication over first-order elements is
equivalent to the bit predicate over first-order elements, which
extends over higher-order relations as seen in subsubsection
\ref{bit}, which is then equivalent to addition + multiplication over
higher order relations.

\subsection{Addition over tuples}
We will also need to add tuples of elements, and in this subsection we
will show how to do it. Let us define $p=T(r,a)=\texp_2^{r-2}(n^a,a)$.
\paragraph{Overflow:}
We will define $\plu_{a,r}$ over $a$-tuples of relations of arity $a$
and order $r$. First, let $C^{a,r}(\mc X^{a,r},\mc Y^{a,r})$ be a
predicate indicating that the addition of $\mc X$ and $\mc Y$
overflows $(\mc{X+Y}\ge p)$.
\begin{eqnarray*}
  C^{a,r}(\mc X^{a,r},\mc Y^{a,r})\ed\neg\exists\mc Z^{a,r}.\plu^{a,r}(\mc{X,Y,Z})
\end{eqnarray*}
This just means that there is no value $\mc Z$ such that $\mc{X+Y=Z}$,
so $\mc {Z}\ge\texp_2^{r-2}(n^a,a)$.
\paragraph{Addition modulo $p$}
Now we also need to speak of addition modulo $p$, but using only
numbers strictly smaller than $p$.  If the addition does not overflow,
it suffices to test the addition. If it overflows, we can
existentially quantify $d, e,f,g$ and $h$ such that:

\begin{tabular}{ll}
  $d+\mc X=p-1$&$ d=p-1-\mc X$\\
  $d+1=e $&$ e=p-\mc X $ \\
  $f+\mc Y=p-1$&$ f=p-1-\mc Y$\\
  $e+g=h $&$ h=2p-\mc X-\mc Y -1$ \\
  $i+h=p-1$&$ i=p-1-(2p-\mc X-\mc Y-1)=\mc X+\mc Y-p$\\
\end{tabular}

We can then see that if $\mc{X+Y}\ge p$ then each variable has exactly
one possible value which is less then $p$. It is trivial for $d$ and
$f$, and for $e$ it is enough to see that, since $\mc {X+Y}\ge p$ and $\mc
{X,Y}<p$ then $\mc{X,Y}>0$, so $p-\mc {\{X,Y\}}<p$; $g=2p-\mc X-\mc
Y-1\le2p-p-1=p-1$ since $\mc{X+Y}\ge p$, and a fortiori
$h=g+1<p-1+1=p$.

The exact equation of plus modulo ($\plu_m$) is then:
\\
$
\mathcal\plu_m^{a,r}(\mc{X}^{a,r},\mathcal{Y}^{a,r},\mc Z^{a,r})\ed\mc{X+Y=Z}\lor\nonumber\\
\exists d,e,f,g.  d+\mc X=(p-1)\land d+1=e\land f+\mc Y=(p-1)\land
e+f=g\land \mc Z+g=(p-1)
$
\paragraph{Addition of tuples:}
We can consider an $a$-tuple of numbers as a number of length $a$ in
base $p$, so addition extends naturally on it. Let us write
$\plu_{a,r}$ for the addition of $a$-tuples of $a$-ary relations of
order $r$. The idea is the same as the addition of string of bits,
with the difference that propagating overflows can be done in
different ways. The creation of an overflow at position $j$ happens
only if $C^{a,r}(\mc X_{j},\mc Y_{j})$ overflows, and then it
propagates at position $k$ if $X_{k}+Y_{k}\ge p-1$. But since, if
$X_{k}+Y_{k}> p-1$ then we have $C^{a,r}(\mc X_{k},\mc Y_{k})$, we can
consider that the overflow was created at position $k$. Hence we
consider that the only way for a overflowing bit to propagate itself
is when $\mc X_k+\mc Y_k= p-1$:
\begin{eqnarray}
  \plu_{a,r}(\ol{\mc X},\ol{\mc Y},\ol{\mc Z})\ed \bigwedge_{1\le i\le a}\ifte(\bigvee_{1\le j<i}C(\mc X_j,\mc Y_j)\bigwedge_{j<k<i}\plu_m^{a,r}(\mc X_k,\mc Y_k,p-1)),\nonumber\\
  \thent \exists \mc T^{a,r}.( \plu_m^{a,r}(\mc X_i,\mc Y_i,\mc T)\wedge \plu_m^{a,r}(\mc T,1,\mc Z_i))\nonumber\\ 
  \elset (\plu_m^{a,r}(\mc X_i,\mc Y_i,\mc Z_i))
\end{eqnarray}


\section{Relations between High-Order queries and complexity classes}\label{equ}
As stated in Section \ref{def}, we have decided to accept high-order
vocabularies. For the logic of order $r$ we accept formulae with
quantifiers of order up to $r$, but the vocabularies can contain
relations of any order. We may usually assume that the order of the
vocabulary is at most $r+1$, which is coherent with \FO{} which
contains second order relations as its input. This is because, a
relation of order $r+2$, can only be used with relation of order $r+1$
which could not be quantified, hence those relations are in the
structure, and those relations could be replaced by their truth
value without loss of generality.

\subsection{High Order and Bounded Alternating Time}

\begin{theorem}\label{atimeorder} For $j>0$, 
  $\hoc{r}{j}{c}=\ATIME{\exp_2^{r-2}(n^{O(1)})}{j}$ for $c\le r+1$ with
  a reasonable input, as defined in Section \ref{acc}.
\end{theorem}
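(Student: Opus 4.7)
The plan is to prove both inclusions by induction on $r$, with base case $r=2$ being the alternation version of Fagin's theorem ($\hob{2}{j} = \Sigma^{P}_{j} = \ATIME{n^{O(1)}}{j}$). Thanks to the normal form theorem of the preceding subsection we may assume that any $\phi \in \hoc{r}{j}{c}$ is already in the shape $\exists \olmc X_{1}\, \forall \olmc X_{2}\, \cdots\, Q \olmc X_{j}\, \psi$ with $\psi \in \hoa{r-1}$ and every $\olmc X_{i}$ a tuple of variables of type of order at most $r$ and of constant arity (arity normal form).

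For the inclusion $\hoc{r}{j}{c} \subseteq \ATIME{\exp_{2}^{r-2}(n^{O(1)})}{j}$ I would design an alternating machine that spends its $j$ alternations guessing the tuples $\olmc X_{i}$ block by block. By Lemma~\ref{size}, writing one variable of order $r$ and constant arity costs $B(r,O(1)) = \exp_{2}^{r-2}(n^{O(1)})$ bits, so the guessing phase fits in the allotted time. Once the order-$r$ witnesses are fixed, the machine deterministically evaluates $\psi$. A straightforward recursion on the order shows that evaluating an $\hoa{r-1}$ formula on a structure containing order-$r$ relations takes time $\exp_{2}^{r-2}(n^{O(1)})$: each nested order-$(r-1)$ quantifier ranges over $N(r-1,O(1)) = \exp_{2}^{r-2}(n^{O(1)})$ values, the recursion bottoms out at a polynomial-time atomic query guaranteed by the \emph{acceptable code} hypothesis and the \emph{reasonable} input assumption, and the telescoping $2^{E_{r-3}} \cdot E_{r-3} = E_{r-2}$ absorbs the outer loop.

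For the converse $\ATIME{\exp_{2}^{r-2}(n^{O(1)})}{j} \subseteq \hoc{r}{j}{c}$, given an ATM $M$ with $j$ alternations running in time $T = \exp_{2}^{r-2}(n^{O(1)})$, I would build a Fagin-style formula asserting the existence of an accepting computation. A single configuration is a string of $O(T)$ symbols, the whole computation tableau has $O(T^{2})$ bits, and since the $O(1)$ in the exponent of $\exp_{2}^{r-2}(n^{O(1)})$ absorbs the squaring, such a tableau is encodable by one order-$r$ relation of constant arity. Split the tableau into $j$ horizontal slabs matching the alternation phases of $M$, and introduce one order-$r$ variable $\olmc X_{i}$ per slab, quantified existentially or universally to agree with $M$'s alternation pattern. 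The body is a $\hoa{r-1}$ formula that: quantifies over row/column indices (tuples of order-$(r-1)$ relations, indexed via the successor/bit/$\plu$ operators built in Section~\ref{arith}), verifies local consistency between consecutive cells of each slab against $M$'s transition table, forces the last configuration of slab $i$ to equal the first of slab $i+1$, and checks that the initial configuration encodes the input $\mf A$ (reading it through the reasonable code) and that the final configuration is accepting.

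The hard part will be matching the alternation structure of $M$ to the quantifier prefix without smuggling in extra order-$r$ alternations through the verification. Consistency between two consecutive slabs must be stated by a purely universal formula over order-$(r-1)$ indices, i.e.\ inside $\hoa{r-1}$, because any existential or universal order-$r$ quantification inserted here would inflate the $j$-count. This is achievable precisely because equality of order-$r$ relations unfolds into a universal $\hoa{r-1}$ statement (see the equality macro $=^{a,r}$ of Section~\ref{arith}). Once the encoding is fixed, the remaining work is the routine but somewhat tedious verification that the constructed formula sits in $\hoc{r}{j}{c}$ after passage to normal form, that $\mf A \models \phi \iff M$ accepts $\mf A$, and that all time bounds match; this is an adaptation of the classical Fagin--Stockmeyer argument, pushed up the elementary hierarchy one exponential at a time.
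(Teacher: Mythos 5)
Your forward inclusion $\hoc{r}{j}{c}\subseteq\ATIME{\exp_2^{r-2}(n^{O(1)})}{j}$ is essentially the paper's own argument: put the formula in normal form, spend the $j$ alternation blocks guessing the order-$r$ witnesses (each costing $\exp_2^{r-2}(n^{O(1)})$ bits by the counting lemma), then evaluate the $\hoa{r-1}$ kernel deterministically by brute-force enumeration of the lower-order quantifiers, with the acceptable-code hypothesis paying for the atomic checks. Where you genuinely diverge is the converse: the paper does not reprove it at all, but imports the known equality for order-2 vocabularies from Hella--Turull-Torres and Kolodziejczyk ($\hoc{r}{j}{2}=\ATIME{\exp_2^{r-2}(n^{O(1)})}{j}$) and observes that $\hoc{r}{j}{2}\subseteq\hoc{r}{j}{c}$, so the hard direction comes for free, whereas you rebuild it from scratch with a Fagin--Stockmeyer tableau lifted one exponential at a time. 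Your route is more self-contained and would make the theorem independent of the cited results, but it is also where all the real work hides: each slab must encode one nondeterministic run segment (not a deterministic tableau), cell addressing needs the order, bit and $\plu$ machinery of Section \ref{arith} over order-$(r-1)$ tuples, the quantifier prefix must not leak extra order-$r$ alternations (your observation that order-$r$ equality unfolds universally into $\hoa{r-1}$ is the right fix), and tying the first row of the tableau to the input requires committing to a canonical encoding of $\mf A$, which in turn needs a quantified linear order -- a point you should state explicitly rather than absorb into ``reading it through the reasonable code''. Also, the induction on $r$ with Fagin's theorem as base case is unnecessary overhead: both of your constructions work uniformly in $r$ once the counting lemma and the arithmetic predicates are in place.
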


This theorem is true for $c=2$ since
\cite{lauri,kolo} proved that
$\hoc{r}{j}{2}=\NTIME(\exp_{2}^{i-2})^{\Sigma_{j-1}^{\P}}$. They did
not write the ``2'' since in their definitions every formulae are over
structures of order 2.

We will then prove the theorem directly for queries over high-order
structures.
\begin{proof}
  Since $\hoc rj2\subset\hoc rjf$ , then \hoc rj2 is at least as
  expressive as the definition of \cite{lauri}, so we have this side for
  free: $\hoc{r}{j}{f}\supseteq\ATIME{\exp_2^{r-2}(n^{0(1)})}{j}$.

  We now want to prove $\subseteq$; let $\phi$ be a query in
  $\hob{r}{j}$, so then $\phi=\exists \olmc X^{r}_{1}.\forall\olmc
  X^{r}_{2}.\dots Q\olmc X^{r}_{j}.\psi$ where $\psi\in\hoa{r-1}$. We
  can begin by existentially guessing $\olmc X_{1}^{r}$, which asks us
  to write $O(\log(N(r)))=\exp_2^{r-2}(n^{O(1)}))$ bits for each
  variable of $\olmc X_{i}$. Then we can universally choose a value
  for $\olmc X^{r}_{2}$, and so on. This takes time and space
  $O(\exp_2^{r-2}(n^{O(1)}))$ and $j-1$ alternations. 

    Now everything we will do will use
  deterministic time. There are a finite number of variables, let us
  say $v$ variables, of order up to $r-1$. Hence each variables can
  take at most $N(r-1)$ values, and there are then
  $N(r-1)^{k}=\exp_{2}^{r-2}(n^{O(1)})^{k}=2^{\exp_{2}^{r-3}(n^{O(1)})\times
    k}=\exp_{2}^{r-2}(n^{O(1)})^{k}$ possible values for the $k$
  variables. Writing one of the possible values of those $v$ variables
  on the tape will take $kB(r-1)=\exp_{2}^{r-3}(n^{O(1)})$, so writing
  all of the possible tuples will take
  $\exp_{2}^{r-3}(n^{O(1)})^{k}.\exp_{2}^{r-2}(n^{O(1)})^{k}=\exp_{2}^{r-2}(n^{O(1)})^{k}$
  deterministic time and space.

  Finally we want to check the quantifier-free part of the formula,
  and it is clear that every relation, either quantified relations or
  relations of the structure of order up to $r$, can be checked in
  time at most $\exp_{2}^{r-2}(n^{O(1)})^{k}$ thanks to the
  ``acceptable encoding'' assumption. We will check those formulae at
  most $\exp_{2}^{r-2}(n^{O(1)})^{k}$ times, so we will spend at most
  $\exp_{2}^{r-2}(n^{O(1)})^{k}.\exp_{2}^{r-2}(n^{O(1)})^{k}=\exp_{2}^{r-2}(n^{O(1)})^{k}$
  times checking the quantifier-free part. If we use relations of
  order $r+1$, to check $\mc R^{a,r+1}(\olmc S^{a,r})$ we need to use
  random-access, to check if the $\olmc S$ bit of $\mc R$ is 1 or not.

  When we consider the total time, we see that it is indeed in
  $\exp_{2}^{r-2}(n^{O(1)})^{k}$, and we used $j-1$ alternations, so
  the theorem is true.
\end{proof}
Taking the union of every classes considered in Theorem
\ref{atimeorder}, we have the following corollary:
\begin{corollary}
  Over any structure, we have $\ELEMENTARY=\HO$.
\end{corollary}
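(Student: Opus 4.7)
The plan is to derive the corollary almost immediately from Theorem \ref{atimeorder} by taking a union over the parameters $r$, $j$, and $c$, and then checking that the resulting class matches the definition of $\ELEMENTARY$.

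First I would observe that by definition $\HO = \bigcup_{r,j,c} \hoc{r}{j}{c}$, since every $\HO$ formula has some maximum quantifier order $r$, some number of alternations $j$, and some bound $c$ on the order of free (vocabulary) variables. Applying Theorem \ref{atimeorder} term by term gives
\[
\HO \;=\; \bigcup_{r,j\ge 1}\ATIME{\exp_2^{r-2}(n^{O(1)})}{j},
\]
where it is understood that for small $r$ we read $\exp_2^{r-2}(n^{O(1)}) = n^{O(1)}$.

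Next I would show this union equals $\ELEMENTARY$. For the inclusion $\subseteq$, any $\Sigma_j$ computation with time bound $f(n)$ and constant $j$ can be simulated deterministically in time $2^{O(f(n))}$ by exhaustively enumerating the nondeterministic and universal choices on the work tape. With $f(n) = \exp_2^{r-2}(n^{O(1)})$ this gives a deterministic bound of $\exp_2^{r-1}(n^{O(1)})$, which is still elementary. For the inclusion $\supseteq$, any $\ELEMENTARY$ language lies in $\TIME(\exp_2^c(n))$ for some constant $c$, and $\TIME(f(n)) \subseteq \ATIME{f(n)}{1}$ trivially, so such a language sits inside $\ATIME{\exp_2^{c}(n^{O(1)})}{1}$, which by Theorem \ref{atimeorder} equals $\hoc{c+2}{1}{c+3} \subseteq \HO$.

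There is no real obstacle here: the proof is just the observation that the tower of iterated exponentials is exactly the gauge used to define $\ELEMENTARY$, and that bounded alternation is absorbed by that tower. The only point of care is whether the ``reasonable input'' hypothesis of Theorem \ref{atimeorder} restricts the corollary. Since we are considering $\HO$ as a logic over finite structures and, by Claim \ref{highstructure}, inputs of higher order can be pushed down (at the cost of shifting a few levels in the tower, which remains elementary), the statement ``over any structure'' is safe: the same argument applies, with the starting level of the tower adjusted by the order of the input.
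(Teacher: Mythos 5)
Your proof is correct and follows essentially the same route as the paper, which obtains the corollary simply by taking the union over all the classes $\hoc{r}{j}{c}=\ATIME{\exp_2^{r-2}(n^{O(1)})}{j}$ of Theorem \ref{atimeorder} and using the absorption facts $\TIME(f)\subseteq\ATIME{f}{O(1)}\subseteq\SPACE(f)\subseteq\TIME(2^{f})$ already recorded after the definition of \ELEMENTARY{}. Your extra remarks spelling out the two inclusions and noting, via Claim \ref{highstructure}, that higher-order input structures only shift the tower by a constant (so the ``reasonable input'' hypothesis does not restrict the statement) are exactly the routine details the paper leaves implicit.
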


\subsection{Operators on \HO}
In this section, we will prove that the properties we obtain while
adding operators to first and second order logic, relating those
logics to space complexity and deterministic time complexity, extend
naturally over \HO.

In the paper \cite{nfp}, where the nondeterministic and alternating
fixed points are introduced, a characterization of the expressivity of
first order logic with operators was given in term of ``relational
machines''. The reason is the Turing machine model implies an order
over the input, which is avoided by the relational machines, so that
they are better simulations of general first order formulae. Since in
second order we can quantify an order over the universe, and this
order then extends over high order relations there is no loss of
generality in working with Turing machines.

As explained in Subsubsection \ref{opnorm}, there is a normal form for
the formulae with operators. Every formula can be assumed to be
either like $(\TC_{\olmc X,Y}\phi)(\ol{0,\max})$ or like $(F_{P,\olmc
  X}\phi)(\ol 0)$, where $F$ is a fixed point operator and $\phi$ a
formula in $\HO$. So in this subsection we are always going to assume
that the formulae are in this form.

The table \ref{opnumber} summarizes the maximum number of steps an operator
can make without looping, and the number of bits of information
accessible at each state. There is no information about non
deterministic and alternating computation since it does not change
those numbers.
\begin{table*}\caption{Numbers of the fixed point.}
\label{opnumber}
  \centering
    \begin{tabular}{|l|l|l|}
      \hline
      $\hoa r(\P)$&Maximal number of step $\P$&Number of
      bits \\
      \hline
      $\P=\TC$&$T(r)=\ex{r-1}$&$B(r)=\ex {r-2}$\\
      $\IFP$&$C(r+1)=\ex{r-1}$&$B(r+1)=\ex {r-1}$\\
      $\PFP$&$N(r+1)=\ex r$&$B(r+1)=\ex {r-1}$\\
      \hline
    \end{tabular}
\end{table*}

\sss{Inflationary fixed point and alternating partial fixed point}
It is already known that $\P=\FO(\IFP)$ over ordered structures, and similarly
\EXP=\SO(\IFP).  In \cite{nfp} it was proved that $\FO(\NIFP)${} is \NP{} over first
order with an order relation.  They are special cases of the theorem:

\begin{theorem}\label{tifp}
  Over reasonable input we have $\A\SPACE(\exp^{r-1}_2(n^{O(1)}))=\hod
  rj(\APFP)=\hoa{r+1}(\IFP)=\DTIME(\exp^{r}_2(n^{O(1)}))$.
\end{theorem}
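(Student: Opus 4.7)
The plan is to prove the quadruple equality by closing a cycle of inclusions, with the outermost identity $\A\SPACE(\ex{r-1})=\DTIME(\ex r)$ coming for free from the classical Chandra--Kozen--Stockmeyer relation between alternating space and deterministic time (specifically $\A\SPACE(f)=\DTIME(2^{O(f)})$ for $f(n)\ge\log n$, applied with $f=\ex{r-1}$ so that $2^{O(\ex{r-1})}=\ex r$). It then remains to attach the two logical classes via $\hoa{r+1}(\IFP)\subseteq\DTIME(\ex r)$, $\DTIME(\ex r)\subseteq\hoa{r+1}(\IFP)$, $\hod{r}{j}(\APFP)\subseteq\A\SPACE(\ex{r-1})$, and $\A\SPACE(\ex{r-1})\subseteq\hod{r}{j}(\APFP)$.

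For the $\IFP$ side, I would invoke the operator normal form of Subsubsection~\ref{opnorm} to write any formula as $(\IFP_{P,\olmc X}\phi)(\ol 0)$ with $\phi\in\hoa{r+1}$. The fresh variable $P$ has order $r+2$, so by Lemma~\ref{size} its description takes $B(r+2)=\ex r$ bits and the monotone iteration stabilizes in at most $C(r+2)=\ex r$ steps. Each step evaluates $\phi$, which by Theorem~\ref{atimeorder} lies in $\bigcup_j\ATIME{\ex{r-1}}{j}$, hence in $\DSPACE(\ex{r-1})\subseteq\DTIME(\ex r)$. Summing gives total deterministic time $\ex r$. For the converse, given a deterministic machine $M$ running in time $\ex r$, I would lift Immerman's $\FO(\IFP)=\P$ construction: the ``computed so far'' predicate $P(\mc T,\mc I)$ meaning ``at step $\mc T$ cell $\mc I$ holds a $1$'' is an order-$(r+2)$ relation, where $\mc T$ and $\mc I$ are order-$(r+1)$ relations of $\ex{r-1}$ bits encoding indices in $[0,\ex r]$. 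The update rule increments the step counter with the addition predicate from Section~\ref{arith} and applies the local transition function of $M$, all expressible as a $\hoa{r+1}$ formula over $P$; acceptance reduces to a final atomic test on $P$.

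For the $\APFP$ side, put the formula in the normal form $(\APFP_{P,\olmc X}\phi,\psi)(\ol 0)$ with $\phi,\psi\in\hoa{r}$. The fresh variable $P$ has order $r+1$ and description length $B(r+1)=\ex{r-1}$, so an alternating machine with space $\ex{r-1}$ can store the current $P$ on its worktape. It then walks the alternating tree $T_{\phi,\psi,\mf A}$ of Proposition~\ref{eqab}, using existential states at $\cup$-gates and universal states at $\cap$-gates; by Claim~\ref{AFP} we may cut each branch once we have performed $N(r+1)=\ex r$ updates, a count that fits in $\ex{r-1}$ bits, so we never have to record the history of relations. At every node we must evaluate a $\hoa{r}$ formula, which by Theorem~\ref{atimeorder} is in $\A\SPACE(\ex{r-2})\subseteq\A\SPACE(\ex{r-1})$. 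Conversely, given an alternating machine using space $\ex{r-1}$, its configurations are order-$(r+1)$ relations; write $\phi$ for the ``some existentially chosen successor configuration is reached'' update and $\psi$ for the ``all universally chosen successor configurations are reached'' update, both in $\hoa{r}$ via the arithmetic predicates of Section~\ref{arith}, and let the $\APFP$ tree mirror the computation tree of the machine, with leaves exactly at halting configurations.

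The hardest point will be the $\APFP$ direction, matching the cycle-cutting semantics of $T_{\phi,\psi,\mf A}$ to an alternating simulator that never stores the history of previously seen relations. Claim~\ref{AFP} permits the simulator simply to count to $\ex r$ rather than remember the set of relations seen, which is essential to stay within $\ex{r-1}$ space; but one must carefully check that this lax cut-off does not alter the computed relation, using the fact that $\APFP$ arguments contain no negation, so cycles of length greater than one in the would-be tree contribute nothing new to the output.
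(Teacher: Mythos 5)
Your proposal is correct in outline but closes the cycle differently from the paper. Both start from the classical identity $\A\SPACE(\ex{r-1})=\DTIME(\ex r)$ of \cite{alternation} and both prove $\hoa{r+1}(\IFP)\subseteq\DTIME(\ex r)$ by bounding the iteration by $C(r+2)=\ex r$ and the cost of each stage via Theorem \ref{atimeorder} (the paper phrases the access to the stored fixed point as an oracle coded by a bit string, which is the detail your appeal to Theorem \ref{atimeorder} glosses, since $P$ has order $r+2$ and is not part of the reasonable input). After that the routes diverge: the paper proves the single formula-to-formula inclusion $\hod rj(\APFP)\subseteq\hod{r+1}j(\IFP)$, building the tree $T_{\phi,\psi}$ of Proposition \ref{eqab} inside one inflationary fixed point (indexing labels by paths, cut at depth $\approx N(r+1)$ by Claim \ref{AFP}) and then evaluating the resulting $\cup/\cap$ circuit with a second fixed point, and it obtains $\A\SPACE(\ex{r-1})\subseteq\hod rj(\APFP)$ by lifting the $r=1$ construction of \cite{nfp}; the chain $\A\SPACE\subseteq\APFP\subseteq\IFP\subseteq\DTIME=\A\SPACE$ then closes. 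You instead establish the two equalities $\hoa{r+1}(\IFP)=\DTIME(\ex r)$ and $\hod rj(\APFP)=\A\SPACE(\ex{r-1})$ separately: a tableau (Immerman--Vardi style) encoding for $\DTIME\subseteq\IFP$, which the paper never needs, and a direct alternating walk of $T_{\phi,\psi}$ for $\APFP\subseteq\A\SPACE$, again absent from the paper. Your decomposition is more modular and makes the two machine characterizations independent statements, but it costs two extra machine simulations where the paper has one syntactic translation (a translation the paper values for its own sake, cf.\ its open problems), and it leans on Claim \ref{AFP} at the one genuinely delicate point you rightly flag: when the depth counter fires, the simulator must treat the truncated node as removed (accept under a $\cap$ parent, reject under a $\cup$ parent), which is exactly the ``cut later'' situation the claim covers and the paper itself exploits in its $\APFP\subseteq\IFP$ step. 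Two small cautions: in the tableau encoding you must guard against reading a not-yet-computed row as all zeros (the standard flag or two-predicate trick, analogous to the paper's bookkeeping argument in Theorem \ref{space}); and in $\A\SPACE\subseteq\APFP$ the existential/universal behaviour should come from the alternation built into the tree, with $\phi_0,\phi_1$ computing the two candidate successor configurations, rather than being packed into the formulas themselves as your phrasing suggests --- this is exactly the construction of \cite{nfp} that the paper cites.
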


The article \cite{DETH} proved
$\hoa{r+1}(\IFP)=\DTIME(\exp^{r}_2(n^{O(1)}))$ assuming an order over
the structure and a vocabulary of order 2. Our proof is similar, but
we begin by constructing order and arithmetic relations thanks to
second-order relation.

\begin{proof}
  It has been proven in \cite{alternation} that when $f$ is a function
  greater than the logarithm, $\A\SPACE(f)=\DTIME(2^{O(f)})$ hence
  $\A\SPACE(\exp^{r-1}_2(n^{O(1)}))=\DTIME(\exp^{r}_2(n^{O(1)}))$.

 \paragraph{ Proof of
   $\hoa{r+1}(\IFP)\subseteq\DTIME(\exp^{r}_2(n^{O(1)}))$}
 Let $\phi\in\hoa{r+1}(\IFP)$, such that $\phi=(IFP_{\ol x,P}\psi)(\ol
 y)$. Suppose that $\ol x=x_1,\dots,x_n$, then there are
 $T(r+1)=\exp^{r+1-1}_2(n^{O(1)})$ sets of tuples of relations
 equivalent to $\ol x$, hence we find the fixed point after at most
 $\exp^{r+1-1}_2(n^{O(1)})$ steps. Since $\psi\in HO^r+1$ we know that
 $\psi\in\Sigma_{j}\TIME(\ex{r+1-2})^{P}$ for some $j$ where ``$^P$''
 is an oracle in $P$. This class is a subset of
 $\DTIME(\exp_2^{r+1-1}(n^{O(1)}))^P$, and since there are at most
 $\exp^{r+1-1}_2(n^{O(1)})$ elements in $P$ it can still be coded with
 a string of bits, and then checked in time
 $\exp^{r+1-1}_2(n^{O(1)})$. Since in time $\exp^{r+1-1}_2(n^{O(1)})$
 there are at most $\exp^{r+1-1}_2(n^{O(1)})$ queries to the oracle,
 then checking $\psi$ takes time
 $\exp^{r+1-1}_2(n^{O(1)})^2=\exp^{r+1-1}_2(n^{O(1)})$.

  During the $i$th step we will check for every tuple of relation
  $\ol z$ if $z\in P_i$, applying $\psi$ with input
  $P_{i-1}$. Since there are up to $N(r+1)=\exp^{r+1-1}_2(n^{O(1)})$
  possible relations, each step will take time
  $\exp^{r+1-1}_2(n^{O(1)})\times\exp^{r+1-1}_2(n^{O(1)})=\exp^{r+1-1}_2(n^{O(1)})$. Finally,
  since there are at most $\exp^{r+1-1}_2(n^{O(1)})$ steps, the entire
  computation will take time
  $\exp^{r+1-1}_2(n^{O(1)})\times\exp^{r+1-1}_2(n^{O(1)})=\exp^{r+1-1}_2(n^{O(1)})$,
  which ends this side of the proof.

 \paragraph{ Proof of
   $\hod   rj(\APFP)\subseteq\hod{r+1}j(\IFP)$}
  Let $\xi\in\hoa r(\APFP)$, $\xi=(\APFP_{P^{r+1},\olmc
    X^{r}}\phi,\psi)(\olmc Y)$

  We are going to use an inflationary fixed point to create the tree
  $T=T_{\phi,\psi}$. We will associate the label of every node to its
  path in $T$.

  Since there is at most $B(r+1)=\ex{r+1}$ values that $P$ can take
  then there is at most $2^{\ex{r-1}}=\ex{r}$ paths of such length.
  But it is correct since $Q$ can also take $C(r+2)=\ex{r}$ values.

  Since by Claim \ref{AFP} the tree $T$ can be cut once we met twice
  the same relation in a branch, and that there is at most $\ex{r+1}$
  relations, we can cut the tree at depth $\ex{r+1}$, hence using a
  simple fixed point is not a problem. 


  Then with a second fixed point, we recursively calculate the output
  of the circuit. We consider the leaves that are not a fixed point to
  be the relation $\bot$, the leaves which are fixed points we
  consider the relation in their label. Then we do union and
  intersection of the gates when we know their children's value.


 \paragraph{ Proof of
   $\A\SPACE(\exp^{r-1}_2(n^{O(1)}))\subseteq\hod  rj(\APFP)$}
 The proof for $r=1$ was given in \cite{nfp}. The same proof works for
 $r>1$, except that we can construct an arbitrary order as explained
 above. 



\end{proof}

Once again, accepting that the input contains high order relations does
not change the expressivity, if we consider acceptable input, and that the
input size is the size of the structure and not the size of the
description. And since we have time $\exp_{2}^{r-1}(n^{O(1)})$ and not
$\exp_{2}^{r-2}(n^{O(1)})$, we can even check element of relation of
order $r+1$.

\sss{(Non)deterministic partial fixed point, Transitive
  closure, Alternating inflationary fixed point and Space complexity}
It is already known that
$\FO(\AIFP)=\FO(\NPFP)=\FO(\PFP)=\SO(\TC)=\PSPACE$ over ordered
structures. These equality are special cases of  Theorem
\ref{space}:
\begin{theorem}\label{space}
  Over reasonable input we have $\hoa
  r(\AIFP)=\hoa{r}(\NPFP)=\hoa{r}(\PFP)=\hoa
  {r+1}(\TC)=\SPACE(\exp_{2}^{r-1}(n^{O(1)}))$ .
\end{theorem}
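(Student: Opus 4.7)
The plan is to prove $L = \SPACE(\exp_2^{r-1}(n^{O(1)}))$ for each of $L \in \{\hoa{r+1}(\TC), \hoa r(\PFP), \hoa r(\NPFP), \hoa r(\AIFP)\}$, lifting the classical equalities $\SO(\TC) = \FO(\PFP) = \FO(\NPFP) = \FO(\AIFP) = \PSPACE$ of \cite{imm,nfp} one level up using the cardinalities of Lemma \ref{size}. The syntactic inclusion $\hoa r(\PFP) \subseteq \hoa r(\NPFP)$ (take $\phi_0 = \phi_1$) reduces the lower bounds to three cases: $\SPACE \subseteq \hoa{r+1}(\TC)$, $\SPACE \subseteq \hoa r(\PFP)$, and $\SPACE \subseteq \hoa r(\AIFP)$. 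The common observation driving the upper bounds is that any relation whose type has order at most $r+1$ fits in $B(r+1) = \exp_2^{r-1}(n^{O(1)})$ bits, and that a $\hoa r$ subformula evaluates in $\SPACE(\exp_2^{r-1}(n^{O(1)}))$ by Theorem \ref{atimeorder} together with $\A\TIME(f) = \SPACE(f)$.

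For the four upper bounds I would argue as follows. For $\hoa r(\PFP)$, iterate $P_0, P_1, \ldots$ in place in $\exp_2^{r-1}(n^{O(1)})$ space, carrying a counter up to $N(r+1) = \exp_2^r(n^{O(1)})$ to detect cycling by the pigeonhole principle. For $\hoa{r+1}(\TC)$, perform Savitch-style reachability over the graph whose vertices are tuples of $(r+1)$st-order relations, each described by $\exp_2^{r-1}(n^{O(1)})$ bits. For $\hoa r(\NPFP)$, nondeterministically traverse the graph $G_{\phi_0,\phi_1,\mf A}$ in $\NSPACE(\exp_2^{r-1})$ and invoke Savitch. For $\hoa r(\AIFP)$, alternatingly traverse the tree $T_{\phi_0,\phi_1,\mf A}$, holding one relation $P$ of $\exp_2^{r-1}(n^{O(1)})$ bits at each node; by inflation the relevant depth is bounded by the cardinality of the domain of $P$, namely $\exp_2^{r-1}(n^{O(1)})$, and Claim \ref{AFP} lets us cut afterwards, yielding $\A\TIME(\exp_2^{r-1}(n^{O(1)})) = \SPACE(\exp_2^{r-1}(n^{O(1)}))$.

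For the lower bounds, I would encode a configuration of a Turing machine with space $s = \exp_2^{r-1}(n^{O(1)})$ as a tuple of $(r+1)$st-order relations, since one such relation already carries exactly $B(r+1)$ bits, enough for the tape contents, head position and state. Using the addition and bit predicates over high-order relations from Section \ref{arith}, the one-step transition $\phi(\olmc X, \olmc Y)$ becomes a local $\hoa{r+1}$ condition. Acceptance of a deterministic machine is then $(\TC_{\olmc X, \olmc Y}\phi)(\olmc X_{\mathrm{init}}, \olmc Y_{\mathrm{acc}})$, giving $\SPACE \subseteq \hoa{r+1}(\TC)$. For $\SPACE \subseteq \hoa r(\PFP)$, replace the $\TC$ by the standard $\PFP$ simulation of reachability, absorbing the configuration bits into the type of $P$ at order $r$. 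For $\SPACE \subseteq \hoa r(\AIFP)$, invoke the equality $\SPACE(\exp_2^{r-1}) = \A\TIME(\exp_2^{r-1}(n^{O(1)}))$ and mirror the existential/universal states of the alternating machine by the $\cup$/$\cap$ structure of $T_{\phi_0,\phi_1,\mf A}$, as in \cite{nfp}.

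The main obstacle will be the $\AIFP$ upper bound: one must invoke the inflationary monotonicity in an essential way to cap the relevant branch length at $\exp_2^{r-1}(n^{O(1)})$ rather than the $N(r+1) = \exp_2^r(n^{O(1)})$ available to a partial fixed point---this is precisely what separates the $\SPACE$ exponent $r-1$ of the present theorem from the $\DTIME$ exponent $r$ of $\APFP$ in Theorem \ref{tifp}. A secondary technical point is verifying that products of $\exp_2^{r-1}(n^{O(1)})$-sized relations with iterations of comparable depth, and the Savitch squaring, all remain within $\exp_2^{r-1}(n^{O(1)})$, which holds because these factors are absorbed inside the exponent tower.
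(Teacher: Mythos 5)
Your proposal is correct in outline, but it takes a genuinely different route from the paper. You use $\SPACE(\exp_{2}^{r-1}(n^{O(1)}))$ as a hub and prove, for each of the four logics, both inclusions by machine simulations: upper bounds by evaluating the operator while holding one relation of $B(r+1)=\exp_{2}^{r-1}(n^{O(1)})$ bits on the tape (with the Savitch and $\A\TIME(f)=\SPACE(f)$ squarings absorbed by the tower), and lower bounds by encoding machine configurations as order-$(r+1)$ relations, in particular simulating $\A\TIME(\exp_{2}^{r-1}(n^{O(1)}))$ directly by $\AIFP$, lifting the $r=1$ argument of \cite{nfp}. The paper instead closes a cycle mostly at the level of formulae, avoiding machines where it can: $\hoa r(\AIFP)\subseteq\hoa r(\PFP)$ by storing all branches of the inflation-bounded tree in a timestamped $\PFP$ relation, $\hoa r(\PFP)\subseteq\hoa{r+1}(\TC)$ by turning the iteration into reachability, and $\hoa{r+1}(\TC)\subseteq\hoa r(\AIFP)$ by a Savitch-style divide and conquer carried out inside the logic, guessing the midpoint relation bit by bit; machines appear only in $\hoa r(\NPFP)\subseteq\SPACE(\exp_{2}^{r-1}(n^{O(1)}))$ (plus Savitch) and $\SPACE(\exp_{2}^{r-1}(n^{O(1)}))\subseteq\hoa r(\PFP)$ (configuration encoding), which coincide with two of your steps. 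Your route is more modular and reuses the classical second-order facts; the paper's route yields explicit syntactic translations between the operators, which it explicitly values (cf. its open problems). Two details you should make explicit to be airtight: the syntactic inclusion $\hoa r(\PFP)\subseteq\hoa r(\NPFP)$ requires first putting the formula into operator normal form, since $\NPFP$ forbids negations applied outside the operator; and your $\AIFP$ lower bound must keep the fixed point inflationary, e.g.\ by timestamping the successive configurations inside $P$, which still fits since the product of $\exp_{2}^{r-1}(n^{O(1)})$ time steps and $\exp_{2}^{r-1}(n^{O(1)})$ configuration bits is absorbed by a constant increase of arity.
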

We are going  to transform formulae from one
formalism to another one without going through machines, giving a
pattern of algorithms for the transformation.
There will be an exception for $\hoa r(\NPFP)$ that we only know how
to transform into a space bounded TM, the equality using Savitch's
theorem \cite{Savitch}.

The result for $\AIFP$ is not a surprise if we consider that $\IFP$ is
time and $\A$ is alternations, so that this theorem is similar to
$\A\TIME(f)=\SPACE(f)$.

\begin{proof}Proof of $\hoa r(\AIFP)\subseteq\hoa {r}(\PFP)$: Let $\xi\in\hoa
  r(\AIFP)$, $\xi=(\AIFP_{P^{r},\olmc X^{r-1}}\phi,\psi)(\olmc Y)$.
  Then the fixed point can be obtained with at most $C(r)=\ex {r-2}$
  iterations since it is inflationary, and there is at most
  $2^{C(r)}=\ex{r-1}$ paths.

  We are going to transform $\xi$ in an $\hoa r(\PFP)$ formula.  In
  $\PFP$ we can do $T(r)=\ex {r-1}$ steps, which is enough to test
  every path. We will make a relation $Q$ which has 3 arguments. The
  first one is a path $p$ in the tree $T_{\phi,\psi}$, {\it i.e.} a
  string of bits such that the $i$th bit is 0 if the $i$th step in
  $\AIFP$ was $\phi$ else 1. When the second argument is 0 then the
  third argument is the relation $P_{p}$, else if the second argument
  is 1 then the third argument is 0 to mean that the relation $P_{p}$
  was defined.

  For first step, we let $Q(0,0,0)$ and $Q(0,1,0)$ be true. During the
  next step if $Q(C/2,1,0)$ is true then we set $Q(C,1,0)$ and
  $Q(C,0,\phi(P_{C/2}))$ to be  true. Finally we end the
  computation when for every $C$, $Q(C,1,0)$ is true, then we have
  the $\ex {r-2}$ level of the tree $T_{\phi,\psi}$, and every
  relation $P_{p}$ can be checked in $Q(p,0,.)$. We check if there is
  one of those relations that is a fixed point, and that contains $\olmc
  Y$. If yes, we accept $Q(2,0,0)$, else $Q(2,0,1)$. We can not miss a
  fixed point; since it is inflationary, we see it at or before step $\ex
  {r-2}$, and if we discovered the fixed point sooner, it is not a
  problem if we continue to apply $\phi$ or $\psi$. (We still have got
  the fixed point, by the very definition of fixed points.)

  If there is $Q(2,0,b)$ with $b\in\{0,1\}$ which is true, then we
  accept only $Q(2,0,b)$ so we indeed have got a fixed point, and we
  accept only if $(2,0,0)$ is in the output of this $\PFP$. This ends
  the proof.

\paragraph{Proof of  $\hoa {r}(\PFP)\subseteq\hoa {r+1}(\TC)$:}
Let $\phi\in\hod {r}j(\PFP)$. By the normal form property we can
assume that
$\phi=(\PFP_{P^{r+1},\olmc{X}^{r}}\psi(P,\olmc{X}^{r}))(\ol 0^{r})$. We
also assume that $\olmc X$ contains only $r$th order variable and
$\psi$ is in \hod {r+1}{r+1}. Then
\begin{eqnarray}\phi'\ed(\TC_{x,\mc{P}^{r+1},y,\mc{Q}^{r+1}}\psi')(0,\ol 0,1,\ol{\max})\\
  \psi'\ed x=0\land\mc (\ifte P=\psi(\mc P,.)\thent(\ifte \mc P(0)\thent (y=1\land \mc Q=\ol{\max})\nonumber\\
  \elset \bot)\elset (y=0\land \mc Q=\psi(\mc P,.)))
\end{eqnarray}
Here  $\mc P=\psi(\mc P,.)$ is syntactic sugar for $\forall \olmc Z^{r}
(\mc{P(\ol Z)\Leftrightarrow \phi(P,\ol Z)})$.

\paragraph{Proof of $\hoa {r+1}(\TC)\subseteq\hoa r(\AIFP)$}
Let $\xi\in\hoa{r+1}(\TC)$. We suppose that $\xi$ is in normal form,
hence $\xi=(\TC_{\olmc {XY}^{r+1}}\psi)(0,\max)$ with
$\psi\in\hoa{r+1}$. Let us say that $P_{0}=0$ and $P'_{0}=\max$, so that the
transitive path from $P_{0}$ to $P'_{0}$ can take up to $T(r+1)=\ex r$
steps. We are of course going to do a divide and conquer method,
existentially guessing the middle $P''_{0}$ of the path, and
universally checking both sides, that there is both a path from
$P_{1}=P_{0}$ to $P'_{1}=P''_{0}$, and from $P_{1}=P''_{0}$ to
$P'_{1}=P'_{0}$, and so on.  Hence we need to make at most
$\log(T(r+1))=\ex {r-1}$ guesses. For each choice there are
$T(r+1)=\ex r$ possibles choices. In \AIFP{} we can only choose one
element of two ($\phi$ or $\psi$) so we will need to guess the
relation in the middle of the path bit by bit, so it will take
$\log(T(r+1))=\ex {r-1}$ guesses of bit; we use a counter to find when
we have guessed every bit, while there are bits to guess the universal
choice does not do anything. In total this makes
$\log^{2}(T(r+1))=\ex {r-1}$ existential guesses and $\log(T(r+1))=\ex
{r-1}$ universal ones. This is possible in $\hoa r(\AIFP)$.

Finally we existentially guess when the path is one step long, then we
just check that indeed $\psi(P,P')$ is true.
\paragraph{Proof of $\hoa{r}(\PFP)\subseteq\hoa{r}(\NPFP)$:} This is trivial,
it suffices to transform a formula of $\hoa r(\PFP)$ into normal form,
so that no negation are applied to the operator, and then transform
$\PFP_{\phi}$ to $\NPFP_{\phi,\phi}$.

\paragraph{Proof of $\hoa{r}(\NPFP)\subseteq\SPACE(\exp_{2}^{r-1}(n^{O(1)}))$ :}
Let $\phi\in\hoa{r}(\NPFP)$, such that $\phi=(NPFP_{\ol
  x^{t},P^{t}}\psi,\xi)(\ol y)$. We are going to give an algorithm in
$\NSPACE(\exp_{2}^{r-1}(n^{O(1)}))$, which can be simulated in
$\SPACE(\exp_{2}^{r-1}(n^{O(1)}))$ by Savitch Theorem.

Suppose that $\ol x^{t}=x_1,\dots,x_n$. Then there are
$T(r)=\exp^{r-1}_2(n^{O(1)})$ sets of tuples of type $t$, and hence
writing a value of $P_{i}$ takes $\exp^{r-1}_2(n^{O(1)})$ bits. We
begin by writing $P_{0}=\bot$, and we loop so that when we know
$P_{i}$ we guess if it is a fixed point, then we look if
$\psi(P_{i})=\xi(P_{i})=P_{i}$ and if $P_{i}(\ol j)$; if yes we
accept, else we reject. Else we guess if we need to apply $\phi$ or
$\psi$ to obtain $P_{i+1}$, where the $j$th bit is $1$ if the $j$th
relation equivalent to $\ol x$ is true. We can then loop over every
possible relation of type $t$ to see if it is in $P_{i+1}$,
enumerating these relations take space $T(r)=\exp^{r-2}_2(n^{O(1)})$,
and it is already known that testing $\psi\in \hoa{r}$ is in
$\A\TIME(\exp^{r-2}_{2}(n^{O(1)}))\subseteq
\PSPACE(\exp^{r-1}_{2}(n^{O(1)}))$. Once $P_{i+1}$ is known we can
forget $P_{i}$, so there is no need of more space.

\paragraph{Proof of $\SPACE(\exp_{2}^{r-1}(n^{O(1)}))\subseteq\hoa{r}(\PFP)$:}
As we already know, we can encode a configuration of a TM in
$\DTIME(\exp_{2}^{r-1}(n^{O(1)}))$ and using space
$(\exp^{r-1}_2(n^{O(1)}))$ using relations of order $r+1$; of course
our relation will be $P$. We now only need to be able to decide if one
configuration is the successor of another one.

If we encode a configuration as a string of bits, 00 for 0, 01 for 1,
and 1$x$ for the head of the Turing machine in state $x$, then to
decide the value of the bit $i$ at time $t+1$, we only need to look at
up to $\log |x|+5$ bits on the left and on the right of a bit at time
$t$. Since we have a ``$\suc$'' relations over high-order relation we can
easily do it in $\hoa{r}$ (because $\log|x|$ is a constant for a given
TM). This let us speak of the next step of the Turing machine.


We can assume without loss of generality that there is only one
accepting configuration, with empty tape, and that the Turing machine
loops on this configuration. Then the formula will check if the
description of this configuration is accessible.
\end{proof}

Once again, accepting that the input contains high order relations does
not change the expressivity, if we consider only acceptable input, and that the
input size is the size of the structure and not the size of the
description. And since we have space $\exp_{2}^{r-1}(n^{O(1)})$ and
not $\exp_{2}^{r-2}(n^{O(1)})$, we can even check elements of relations
of order $r+1$.

\sss{Nondeterministic inflationary fixed point}
In \cite{nfp} it was proved that $\FO(\NIFP)${} is \NP{} over first
order with an order relation. This is a special case of Theorem
\ref{tnifp}

\begin{theorem}\label{tnifp}
  Over reasonable input we have $\hod rj(\NIFP)=\hoc{r+1}1j$.
\end{theorem}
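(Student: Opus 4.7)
The plan is to prove both inclusions separately, generalising the first-order argument $\FO(\NIFP)=\NP$ from \cite{nfp}. By the operator normal form of Subsubsection \ref{opnorm}, a formula in $\hod{r}{j}(\NIFP)$ may be taken in the form $(\NIFP_{P,\olmc X^r}\phi_0,\phi_1)(\bot)$ with $\phi_0,\phi_1\in\hoa{r}$ and $P$ of order $r+1$.

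For $\hod{r}{j}(\NIFP)\subseteq\hoc{r+1}{1}{j}$ I would use a Fagin-style encoding. Since $\NIFP$ is inflationary, every branch of the alternating tree reaches a fixed point within $C(r+1)=\exp_2^{r-1}(n^{O(1)})$ steps, and by Claim \ref{AFP} it is enough to explore such bounded branches. I would existentially quantify, at order $r+1$, a candidate fixed point $Q$, a time-stamp relation $H$ assigning to each tuple of $Q$ the step at which it was added, and a choice sequence $S$ recording which of $\phi_0$, $\phi_1$ was applied at each step. By Lemma \ref{size} each of these has at most $\ex{r-1}$ bits and hence fits inside an order-$(r+1)$ relation. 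A $\hoa{r}$ verifier, using the arithmetic predicates of Section \ref{arith} to name the $i$-th step, would then check that each successive stage is obtained from the previous by applying the indicated operator in its inflationary form, that $Q$ is a fixed point of both $\phi_0$ and $\phi_1$ in that sense, and that $\bot\in Q$.

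For $\hoc{r+1}{1}{j}\subseteq\hod{r}{j}(\NIFP)$ the idea is to simulate the existential quantifier over $Z^{r+1}$ using the nondeterminism of $\NIFP$. Given $\xi=\exists Z^{r+1}.\phi(Z,\olmc Y)$ with $\phi\in\hoa{r}$, I would enumerate the $C(r+1)$ possible input tuples of $Z$ in the canonical order of Subsubsection \ref{order}, and encode inside the fixed-point variable $P$ both a monotonically growing counter $c$ (carried as the initial segment of that order already passed) and the partially built $Z$. At value $c$, the operator $\phi_0$ would mark the $c$-th tuple as belonging to $Z$ and advance the counter, while $\phi_1$ would only advance it. Once $c$ reaches its maximum neither operator can modify $P$, a fixed point is attained, and $Z$ can be read off from $P$. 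To route acceptance to the designated query tuple $\bot$, I would have $\phi_0$ (say) add $\bot$ to $P$ exactly when the counter is full and $\phi(Z,\olmc Y)$ holds, so that $(\NIFP_{P,\olmc X^r}\phi_0,\phi_1)(\bot)$ is true iff some choice sequence yields a $Z$ satisfying $\phi$.

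The hard part will be the reverse inclusion, which requires one to simultaneously (i) pack a counter and a growing witness into a single inflationary relation $P$, (ii) guarantee that every $Z^{r+1}$ is reachable along some branch, (iii) ensure the fixed point is attained precisely at completion, and (iv) route the truth value of $\phi\in\hoa{r}$ into membership of $\bot$ in the leaf relation. The forward direction, by contrast, is mostly bookkeeping once the $C(r+1)$-step bound on inflationary computations is invoked and the arithmetic of Section \ref{arith} is in hand.
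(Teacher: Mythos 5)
Your plan follows essentially the same route as the paper's proof: for $\hod rj(\NIFP)\subseteq\hoc{r+1}{1}{j}$ the paper likewise existentially quantifies a single order-$(r+1)$ relation recording the whole inflationary history with an order-$r$ time-stamp (using the bound $C(r+1)=\ex{r-1}$ versus the counter range $T(r)=\ex{r-1}$) and checks stage-consistency, the common-fixed-point condition and membership of the query tuple; for the converse it also builds the existential order-$(r+1)$ witness bit by bit, using the choice between the two operators to decide each bit, a time-stamped flag coordinate to track progress, and a final test of $\psi$ routed to a designated accepting tuple. Your variants (recording the branch choices explicitly, and marking tuples along the canonical order rather than the paper's doubling/increment encoding) are only implementation-level differences, so the proposal is correct and matches the paper's argument.
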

  \begin{proof}$\subseteq:$ Let $\xi=(\NIFP_{P^{r+1},\olmc
      X^{t,r}}\phi,\psi)(\olmc Y)$ with $\phi,\psi\in \hoa r$. Then we
    will existentially guess a relation $Q$ whose type is a pair of
    $\olmc X$'s type. The first half of the arguments is a time-stamp,
    such that $Q(\olmc C,.)$ is the relation $P_{C}$ where $\olmc C$
    is considered as a number.

  Since $\xi$ is an inflationary point, it can take at most
  $C(r+1,a)=\ex {r-1}$ iterations; since the counter, which consists of
  variables of order $r$, can count up to $T(r)=\ex {r-1}$ we can
  indeed encode every steps in one relation.

  We then just need to check if $\olmc Y$ is in $Q(\olmc C,.)$ for
  some $\olmc C$ such that $Q(\olmc C,.)$ is a fixed point for both
  $\phi$ and $\psi$.
\begin{eqnarray}
  \xi'=\exists Q^{tt,r}.\{(\neg \exists \olmc X^{t}.Q(\ol
  0,\olmc X)) \land
  (\exists \olmc C^{t}. Q(\olmc C,\olmc Y)\land
  \forall \olmc X^{t} \phi(Q(\olmc C, \olmc X))\Leftrightarrow Q(\olmc C, \olmc X)\Leftrightarrow
  \psi(Q(\olmc C, \olmc X))) \land\nonumber\\\forall \olmc T^{t}\olmc X^{t}.Q(\olmc
  T,\olmc X)\Leftrightarrow
  (Q(\ol {\mc T-1},\olmc X)\lor\phi[P/Q(\ol{\mc T-1},.)](\olmc X)\lor\psi[P/Q(\ol{\mc T-1},.)](\olmc X))\}
\end{eqnarray}
\paragraph{Proof of $\supseteq$:}
Let $\phi=\exists \olmc Q^{r+1}. \psi$ with $\psi\in\hoa r$. We will
nondeterministically guess every bit of $Q$; there are $C(r+1)=\ex
{r-1}$ such bits and we can do $T(r)=\ex {r+1}$ steps in an
inflationary fixed point. 

We will create a relation $P$ that takes three arguments. The second
one is a time-stamp $\olmc C$. If the first argument is 0 then the
last argument is the string of bits that we are constructing. Else if
the first argument is $1$ then the third argument is $0$; this means
that the string of bits at time $\olmc C$ was already defined.

When $\olmc C=0$ we must have $\olmc X=0$, and when $\olmc C>0$, if
$\ol{\mc C-1}$ is defined and $\olmc C$ is not, then the values of
$\olmc X$ is either multiplied by 2, in $\psi'$, or by $2$ and
incremented by $\xi'$. Finally, when the string of bits is completed,
we check if $\psi$ is true when $Q(\olmc X)$ is replaced by
$P(0,\max,\olmc X)$.

If it is true, we accept the arguments $(2,0,0)$, else nothing. Since
nothing else changes, this is a fixed point, and $\phi'$ will be true
if and only if $\psi$ is verified by this string of bits.
\begin{eqnarray}
  \phi'\ed(\NIFP_{P,b,\olmc {C,X}}\psi',\xi')(2,\ol 0,\ol 0)\\
  \psi'\ed b=1\land \olmc C=\olmc X=0\lor \ifte \exists \olmc X'. P(0,\max,X')\nonumber\\
  \thent (\ifte \psi[P/P(0,\max,.)] \thent (b=2\land \olmc X=\ol0\land C=\ol0)\elset\bot)\nonumber\\
  \elset(P(1,\ol{\mc C-1},0)\land\neg P(1,\ol{\mc C},0)\land\nonumber\\
  ((b=0\land P(0,C,.)=2P(0,C-1,.))\lor(b=1\land \olmc X=0)))\\
  \xi'\ed P(1,\ol{\mc C-1},0)\land\neg P(1,\ol{\mc C},0)\land\nonumber\\
  ((b=0\land P(0,C,.)=2P(0,C-1,.)+1)\lor(b=1\land \olmc X=0))
\end{eqnarray}
\end{proof}

We think that this is an equality (at least for $r=1$ it is one), but the
other side of the relation seems harder to prove.
\subsection{Horn and Krom formulae }\label{horn}
Another important result in descriptive complexity theory is that
$\P=\SO(\HORN)$ and $\NL=\SO(\KROM)$. We will discuss the problem of
extending these results to higher-order.

\begin{definition}[Horn and Krom formula]A literal is an atomic predicate
  or its negation, the first one is called a positive literal and the last
  one a negative literal. A disjunction of literals is a clause, and a
  conjunction of clauses is a quantifier free formula in conjunctive
  normal form (CNF). A CNF formula is then a formula $\phi=Q_1\mc
  X_1^{2}\dots Q_n\mc X^{2}_n\forall \ol x\psi$, where the $Q$ are
  quantifiers and $\psi$ is a quantifier-free CNF formula.

  A Horn formula is a CNF formula such that in each clause there is
  exactly one positive quantified literal. A Krom formula is a CNF
  formula such that in each clause there are at most two literals.
\end{definition}

Over second order, the proof of the equality begins by proving that
those classes have a normal form where every second order quantifier
in existential. Over higher order, it is not easy to see what this
normal form would be. For example in \hoa{3} we can not require
the second order quantifiers to both all be universal and all be existential. And if we
accept the first order to be also existential then problems like
``clique'', which are known to be NP-complete, can be coded in
\SO(\HORN), so finding the good restriction over quantifiers is
mandatory to have an interesting result.

\subsection{Monadic High-Order Logic (MHO)}
\label{sec:mho}
Monadic Second Order $\MSO$ is a well-studied logic, we intend to
study the monadic restriction of logic of order at least 3, as we will
see the theory is really different.

\begin{definition}
  The set of monadic relations of order $r\ge 1$ is the $(r-1)$th
  power set of the universe, $\mc P^{r-1}(A)$; where we define $\mc
  P^{0}(E)=E$ and $\mc P^{r}(A)=\mc P(\mc P^{r-1}(E))$ and $\mc P$ is
  the usual power set operation. 

  The \emphdex{Monadic High-Order Logic of order $r$ } (\mhoa{r}) is
  defined as the subset of queries of $\hoa{r}$ where all quantified
  relations are monadic. The definitions of $\mhod{r}{f},\mhob{r}{j}$
  and $\mhoc{r}{j}{f}$ are straightforward extensions of the \HO{} and
  $\Sigma$ definitions.
\end{definition}
The definition only restricts the arity of quantified relation, and so
the vocabulary of a formula may contain many-ary relations.

It is well known that one of the main problem with $\MSO$ is that one
can not create an order over the structure. But in Monadic Third Order
one can quantify a set of the form $\{[0,i]|0\le i< n\}$ and use this
as a linear order over the structure. This let us create addition with
the set $\{\{a,b,c\}|a+b=c\}$ and multiplication with the set
$\{\{a,b,c\}|a\times b=c\}$, hence we can define a ``bit'' predicate
and simulate Turing Machine. 

It is important to realize that Theorem \ref{atimeorder} assumed that
we can increase the arity to obtains more space and time. Since we can
not do it anymore we see that the big $O$ is not anymore in the top of
the tower of exponential, but in the second floor. Hence we
obtain similarly  Theorem  \ref{mho_t}

\begin{theorem}\label{mho_t}
  $\mhoc rjc=\ATIME{2^{O(\exp^{r-3}_{2}(n))}}j$ for $c\le r+1$ with a
  reasonable input.
\end{theorem}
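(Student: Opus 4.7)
The plan is to follow the shape of the proof of Theorem \ref{atimeorder}, but redo the counting under the monadic restriction (arity $a=1$). Specialising Lemma \ref{size} with $a=1$, a monadic relation of order $r$ has $\exp^{r-2}_{2}(n) = 2^{\exp^{r-3}_{2}(n)}$ bits in its description and there are $2^{\exp^{r-2}_{2}(n)}$ of them. Any quantity polynomial in these bit-lengths is therefore $2^{O(\exp^{r-3}_{2}(n))}$, which is precisely the claimed time bound; the loss of the ``$O(1)$'' inside the inner exponent, compared with $\exp^{r-2}_{2}(n^{O(1)})$ in Theorem \ref{atimeorder}, is exactly what the monadic restriction costs, since we can no longer pack polynomially more bits into a single tuple by increasing arity.

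For $\mhoc{r}{j}{c} \subseteq \ATIME{2^{O(\exp^{r-3}_{2}(n))}}{j}$, I would first normalise the formula to $\exists \olmc X_{1}^{r} \forall \olmc X_{2}^{r} \cdots Q \olmc X_{j}^{r} \psi$ with $\psi \in \hoa{r-1}$ and all quantified relations monadic, then run the same simulation as in Theorem \ref{atimeorder}. The ATM spends $j$ alternating blocks writing the $\olmc X_{i}$, using $2^{\exp^{r-3}_{2}(n)}$ bits per variable, and then evaluates $\psi$ deterministically by iterating over all assignments to its finitely many lower-order variables, each ranging over at most $2^{\exp^{r-3}_{2}(n)}$ values. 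Thanks to $c \le r+1$ and the acceptable-encoding hypothesis, every atomic test is polynomial in the sizes of the relations involved, hence fits within $2^{O(\exp^{r-3}_{2}(n))}$.

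The converse inclusion asks us to simulate an ATM of the prescribed complexity inside $\mhoc{r}{j}{c}$. I would encode a configuration of the machine (of length $2^{O(\exp^{r-3}_{2}(n))}$) as a single monadic $r$th order relation, use $j$ alternating blocks of monadic $r$th order quantifiers to match the machine's alternations, and express the deterministic transition relation inside the $\hoa{r-1}$ kernel. The main obstacle is to set up arithmetic predicates (order, $\bit$, $\plu$) on monadic high-order relations, since the constructions of Section \ref{arith} systematically use $a$-ary quantification. Following the hint given just before the theorem, Monadic Third Order already codes a linear order on the universe via the set $\{[0,i] \mid 0 \le i < n\}$, and from this one obtains $\bit$, addition and multiplication on first-order elements; a lexicographic lift then builds an order on monadic relations of order $p$ out of an order on those of order $p-1$, and one iterates up to order $r$. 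Once the arithmetic is available, encoding and simulating the ATM proceeds along the lines of the corresponding arguments for Theorems \ref{atimeorder} and \ref{space}, restricted throughout to monadic quantifiers.
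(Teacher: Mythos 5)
Your specialisation of Lemma \ref{size} to $a=1$ and your formula-to-machine direction are correct and follow exactly the route the paper itself indicates: the paper only sketches this theorem, observing that the order/arithmetic obstacle of \MSO{} disappears at order three and that, since arity can no longer be increased, the $O(1)$ drops one floor down the tower, "hence we obtain similarly" the statement. Your first two paragraphs are a faithful expansion of that remark.

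The gap is in the converse inclusion $\ATIME{2^{O(\exp^{r-3}_{2}(n))}}{j}\subseteq\mhoc rjc$, precisely at the step ``encode a configuration of the machine (of length $2^{O(\exp^{r-3}_{2}(n))}$) as a single monadic $r$th order relation''. A monadic relation of order $r$ carries exactly $\exp^{r-2}_{2}(n)=2^{\exp^{r-3}_{2}(n)}$ bits, and a block of constantly many quantified monadic variables still only $O(2^{\exp^{r-3}_{2}(n)})$ bits, whereas a machine running in time $2^{c\exp^{r-3}_{2}(n)}$ has configurations with $2^{c\exp^{r-3}_{2}(n)}$ cells and needs up to that many nondeterministic choices inside a single alternation block (and roughly the square of it for a tableau). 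For $c>1$ this simply does not fit, information-theoretically, into what the formula may quantify. In Theorem \ref{atimeorder} this mismatch is absorbed by raising the arity of the quantified relations until the number of available bits matches the time bound; that escape is exactly what the monadic restriction forbids, so ``restricted throughout to monadic quantifiers'' is not a harmless modification of the arguments for Theorems \ref{atimeorder} and \ref{space}. The natural repair --- indexing cells by $c$-tuples of monadic order-$(r-1)$ relations --- makes the tableau a $(2c+O(1))$-ary relation over such objects, i.e.\ a non-monadic order-$r$ (equivalently a monadic order-$(r+1)$) object, which again cannot be quantified in \mhoa{r}. So the hardest direction of the theorem is missing an actual argument in your proposal; be aware that the paper supplies none either, so any complete proof must either find a genuinely different simulation that avoids guessing the computation, or adjust the machine-side bound (e.g.\ to alternating time $O(\exp_{2}^{r-2}(n))$, matching the bits a constant number of monadic order-$r$ relations can hold) --- your lexicographic-order construction for arithmetic on monadic relations is fine, but it does not address this quantitative bottleneck.
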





\section{Conditional relations among the classes}\label{cond}
In this section, we will discuss theorems of the form ``If $A=B$
then $C=D$'' where $A, B, C $ and $D$ are complexity classes or
theories over finite models.  Most results use a padding argument
or are corollaries of theorems known on lower complexity classes. What
will be more interesting is to study the results that seems intuitive
but that we do not know how to prove.

There are conjectures in high complexity classes which seem to be copy
of theorem over polynomial classes, we will explain why the known
proof for polynomial classes fails on higher classes.

We are going to work mostly with Turing Machine, and we will also
translate the results are descriptive complexity's theorem or
question.

We also should emphasize the fact that when we do not explicitly state
any assumptions over the function classes, then they could contains
only one function, hence we also obtain theorem over complexity time
bounded by a function.

\subsection{The $r$th exponential hierarchy}
It is known that $\SO=\PH$, the polynomial time hierarchy, and
$\SO_{j}=\hob{2}{j}=\Sigma^{\P}_{j}$ is the $j$th level of the
polynomial hierarchy. We are going to extend this hierarchy to higher
order.
\begin{definition}[$r$th exponential hierarchy]
  Let $\hoa{r+2}$ be the $r$th exponential hierarchy, and
  $\hob{r+2}{j}$ be the $j$th level of the $r$th exponential
  hierarchy.
\end{definition}
We choose the name such that the (alternating) time of $r$th
exponential hierarchy has $r$ exponential under the $n$.  We have the
polynomial hierarchy as the $0$th exponential
hierarchy. Our definition is different from the ``Exponential
hierarchy'' of \cite{pap} in that his hierarchy is
$\bigcup_{i\in\mathbb N}\TIME(\exp_{2}^{i}(n^{O(1)}))$, and in each of
our levels we also consider alternations.

\begin{definition}[Collapsing]
  For a class of function $C$ we say that $C$ collapses to the $j$th
  level if $\forall k\ge j$, $\ATIME{C}{j}=\ATIME{C}{k}$.  By extension we
  say that \hoa r (resp. \hod rf) collapses to the $j$th level if for
  all $k\ge j$ $\hob{r}{j}=\hob{r}{k}$ (resp.  $k\ge j$
  \hoc{r}{j}{f}=\hoc{r}{k}{f}).
\end{definition}
\subsection{General classes of functions}
\begin{lemma}\label{lem:alt}
  Let $C$ be a class of function, and $j\ge 0$, if
  $\ATIME{C}{j}=\ATIME{C}{j+1}$ then
  $\ATIME{C}{j}=\coATIME{C}{j}=\ATIME{C}{j+1}=\coATIME{C}{j+1}$.
\end{lemma}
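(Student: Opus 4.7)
The plan is to mimic the classical collapse argument for the polynomial hierarchy, adapted to an arbitrary time bound $C$. Throughout I will use the inclusion that prefixing a dummy quantifier (with no variable genuinely used) keeps a language in the alternation hierarchy while incrementing the alternation count; this is valid for any time class $C$ since the added quantifier block is trivially computable.

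First I would establish the basic inclusion $\coATIME{C}{j}\subseteq\ATIME{C}{j+1}$. Indeed, a language $L\in\coATIME{C}{j}$ is defined by a $\Pi_{j}$-prefixed computation $\forall y_1\exists y_2\cdots Q y_j\,\varphi$; by prepending an existential ``dummy'' block (e.g.\ $\exists u$ with $u$ never consulted), we obtain a $\Sigma_{j+1}$ computation of the same time complexity. Dually $\ATIME{C}{j}\subseteq\coATIME{C}{j+1}$.

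Next I would combine with the hypothesis. Using $\ATIME{C}{j}=\ATIME{C}{j+1}$, the inclusion above gives
\[
\coATIME{C}{j}\subseteq\ATIME{C}{j+1}=\ATIME{C}{j}.
\]
Taking complements on both sides, and using that complementing exchanges $\ATIME{C}{j}$ with $\coATIME{C}{j}$,
\[
\ATIME{C}{j}\subseteq\coATIME{C}{j}.
\]
Combining, $\ATIME{C}{j}=\coATIME{C}{j}$, and so $\ATIME{C}{j+1}=\ATIME{C}{j}=\coATIME{C}{j}$. Complementing this chain gives $\coATIME{C}{j+1}=\coATIME{C}{j}$, so all four classes coincide, as required.

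The only thing that requires attention is the ``dummy quantifier'' step: one must check that adding an unused quantifier block of, say, length $1$ does not push us outside the time bound $C$. Since the added block contributes only a constant-time verification step (trivially reading a single bit), this is harmless as long as $C$ is closed under addition of a constant, which is a harmless assumption that the author is implicitly making when speaking of classes of functions indexed by time bounds. No specific property of $C$ beyond this is needed; in particular the argument does not rely on $C$ being elementary or even time-constructible.
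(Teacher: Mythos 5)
Your proof is correct and follows essentially the same route as the paper: the classical polynomial-hierarchy collapse argument, using the dummy-quantifier inclusions $\coATIME{C}{j}\subseteq\ATIME{C}{j+1}$ and $\ATIME{C}{j}\subseteq\coATIME{C}{j+1}$ together with complementation of the hypothesis. Your explicit remark about the dummy block staying within the time bound is a minor point of care that the paper leaves implicit, but otherwise the two arguments coincide.
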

\begin{proof}
  The proof is almost identical to the one of the polynomial
  hierarchy, which is the special case $C=n^{O(1)}$. If
  $\ATIME{C}{j}=\ATIME{C}{j+1}$ then their complement are also equals,
  so we have $\coATIME{C}{j}=\coATIME{C}{j+1}$ hence
  $\coATIME{C}{j}\subseteq\ATIME{C}{j+1}=\ATIME{C}{j}\subseteq\coATIME{C}{j+1}=\coATIME{C}{j}$.
\end{proof}
\begin{theorem}\label{atimehigher}
  Let $F$ and $G$ be classes of functions such that for all $f\in F$
  there exists a function $h_{f}$ computable in time $f$(resp. space
  $f$, resp. space $\log\circ f$) and $g_{f}\in G$ such that
  $f(n)=O(g_{f}(h_{f}(n)+n))$ and for all $g'\in G$ there exists
  $f'\in F$ such that $g'(h_{f}(n)+n)=O(f')$. Let $0\le j<k$ and
  assume that $\ATIME{G}{j}=\ATIME{G}{k}$ then
  $\ATIME{F}{j}=\ATIME{F}{k})$ (resp. assume that
  $\ATIME{G}{j}=\SPACE(G,k)$ then $\ATIME{F}{j}=\SPACE(F)$, resp. assume
  that $\SPACE(\log(G))=\TIME({G})$ then
  $\SPACE(\log(F))=\TIME({F})$).
\end{theorem}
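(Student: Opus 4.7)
The plan is a padding argument, essentially the high-complexity analogue of the classical fact that a collapse of $\PH$ propagates up to the exponential hierarchy. I will concentrate on the first (alternating-time) assertion; the space and log-space variants follow the same outline with the indicated resource replacements. A preliminary step is to strengthen the hypothesis: from $\ATIME{G}{j}=\ATIME{G}{k}$ with $j<k$ one gets $\ATIME{G}{j}\subseteq \ATIME{G}{j+1}\subseteq\ATIME{G}{k}=\ATIME{G}{j}$, so Lemma~\ref{lem:alt} applies and yields $\ATIME{G}{i}=\ATIME{G}{j}$ for every $i\geq j$. Consequently I may apply the collapse at any level above $j$.

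Given $L\in\ATIME{F}{k}$ accepted by an alternating machine $M$ running in time $f\in F$ with $k$ alternations, I define the padded language $L'=\{\,x\,1^{h_f(|x|)}\mid x\in L\,\}$ and build an alternating machine $M'$ witnessing $L'\in\ATIME{G}{k}$. On input $y$ of length $m$, $M'$ deterministically recovers the unique $n$ with $n+h_f(n)=m$ by incrementally computing $h_f$ on candidates (admissible because $h_f$ is computable in time $f(n)=O(g_f(m))$), verifies that the suffix is $1^{h_f(n)}$, and then simulates $M$ on the prefix $x$. The total running time is $O(g_f(m))\in G$, and since the verification is deterministic, the number of alternations is still $k$.

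Applying the strengthened collapse produces a machine $M''$ accepting $L'$ in time $g'\in G$ with only $j$ alternations. To decide $L$, on input $x$ I first construct $y=x\,1^{h_f(|x|)}$ (time $f(|x|)$) and then simulate $M''$ on $y$. The total time is $f(|x|)+g'(h_f(|x|)+|x|)$, and by the second closure assumption this is $O(f'(|x|))$ for some $f'\in F$; the alternation count stays at $j$, so $L\in\ATIME{F}{j}$.

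For the two space variants, every time bound in the argument above is replaced by the corresponding space bound, using the matching hypothesis on $h_f$ (computable in space $f$, resp.\ in space $\log f$). The subtle point in the log-space case is that $y$ cannot be written down explicitly; instead its bits are produced on demand while simulating the log-space machine for $L'$, which is possible because $h_f$ is computable in space $\log f\leq\log f'$, and the whole simulation uses only $\log g'(|y|)=O(\log f'(|x|))$ space. I expect the main obstacle to be tracking alternations cleanly across the padding — in particular, checking that the deterministic format verification really adds no alternations on top of the original $k$ — and, for the log-space assertion, making the on-the-fly padded-input simulation fully rigorous so the total space stays within $\log F$.
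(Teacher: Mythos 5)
Your proof is the same padding argument as the paper's, and its core is correct: pad $L\in\ATIME{F}{k}$ to $L'=\{x1^{h_f(|x|)}\mid x\in L\}$, note that $L'\in\ATIME{G}{k}$ because $f(n)=O(g_f(h_f(n)+n))$ and $h_f$ is constructible within the stated resource, apply the hypothesis to drop to $j$ alternations, and unpad using $g'(h_f(n)+n)=O(f')$; the space and log-space variants go through with the indicated substitutions, and your on-demand generation of the padded input in the log-space case is in fact a more careful rendering than the paper's ``write down $X$'' phrasing.

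One claim in your preliminary step is wrong, though fortunately unnecessary. Lemma~\ref{lem:alt} does \emph{not} yield $\ATIME{G}{i}=\ATIME{G}{j}$ for every $i\ge j$: it only gives closure under complementation at levels $j$ and $j+1$. A full upward collapse from $\ATIME{C}{j}=\ATIME{C}{j+1}$ for an arbitrary class of functions $C$ is precisely what the paper points out it does not know how to prove (the usual polynomial-hierarchy argument relies on closure properties, e.g.\ under composition, that $G$ need not have). Your argument never uses this strengthening: the padded machine $M'$ has exactly $k$ alternations, so the hypothesis $\ATIME{G}{j}=\ATIME{G}{k}$ as given already produces $M''$ with $j$ alternations. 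Delete the ``strengthened collapse'' sentence (and note that recovering the prefix is immediate, since the padding letter $1$ is assumed not to belong to the alphabet of $L$), and your proof coincides with the paper's.
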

\begin{proof}
  Let $f\in F$ and $L$ a language decided by a TM
  $M\in\ATIME{f}{k}$(resp. $\SPACE(f)$, resp $\TIME({f})$) and let
  $L'=\{x1^{h_{f}(|x|)}|x\in L\}$. It can be decided by a TM $M'\in
  \ATIME{g_{f}(n)}{k}$(resp. $\SPACE(g_{f})$, resp. $\TIME({g_{f}})$)
  which tests whether there is a correct number of $1$ and then
  simulates $M$ (it is possible in our bound since $f\subseteq
  O(g_{f}(h_{f}(n)+n))$ and $h_{f}$ is constructible  in
  $\TIME(f)$), hence by our assumption there is $g'\in G$ such that
  $L'$ can be decided by a TM $M''\in\ATIME{g'(n)}{j}$ (resp. id.,
  resp. $\SPACE(\log\circ g)$). Then $L$ can be decided by a TM $M'''$
  which, on input $x$, writes down $X=x1^{h(|x|)}$, which takes time
  $O(f)$ (resp. space $O(f)$, resp time $\log\circ f$), and then
  simulates $M''$ on $X$, which takes $g'(h_{f}(n)+n)$, and by
  hypothesis there exists $f'\in F$ such that
  $f+g'(h_{f}(n)+n)=O(f')$(resp. id, resp. $\log\circ
  f+\log(g'(h_{f}(n)+n)=O(\log\circ f'))$), hence we indeed have
  $\ATIME{F}{k}\subseteq\ATIME{F}{j}$
  (resp. $\ATIME{F}{k}\subseteq\SPACE(F)$,
  resp. $\SPACE(F)\subseteq\TIME(F)$). The proof of $\supseteq$ is
  trivial since $j<k$.
\end{proof}
\begin{corollary}
  Let $f,g$ be integer functions such that there exists a function $h$,
  computable in time $O(f)$, such that $f=\Theta(g(h(n)+n))$. Then for
  all $0\le j<k$ $(\ATIME{g}{j}=\ATIME{g}{k}$ implies
  $\ATIME{f}{j}=\ATIME{f}{k})$, $\ATIME{g}{j}=\A\SPACE(g)$ implies
  $\ATIME{f}{j}=\A\SPACE(f)$ and $\TIME(g)=\SPACE(\log\circ g)$ implies
  $\ATIME{f}{j}=\A\SPACE(f)$.
\end{corollary}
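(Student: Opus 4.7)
The plan is to derive this corollary directly from Theorem \ref{atimehigher} by specializing to the singleton classes $F = \{f\}$ and $G = \{g\}$. The hypothesis $f = \Theta(g(h(n)+n))$ packages both bounds needed by the theorem into one statement, so most of the work is just checking that the singleton case satisfies the two hypotheses of Theorem \ref{atimehigher}.

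First, I would verify the ``for all $f \in F$'' clause: since $F = \{f\}$, there is only one function to check, and we pick $h_f := h$ and $g_f := g \in G$. The assumption $f = \Theta(g(h(n)+n))$ gives $f(n) = O(g(h(n)+n))$, and by assumption $h$ is computable in time $O(f)$ (or space $O(f)$, or space $O(\log\circ f)$, depending on which of the three clauses we are proving — for the last clause one replaces the space-$f$ requirement by space-$\log\circ f$ and uses the fact that a function computable in time $O(f)$ is also computable in space $\log\circ f$ after reusing the tape). Next I would verify the ``for all $g' \in G$'' clause: since $G = \{g\}$, the only $g'$ to consider is $g$ itself, and the other half of $\Theta$, namely $g(h(n)+n) = O(f)$, provides the required $f' := f \in F$.

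With both hypotheses of Theorem \ref{atimehigher} verified, each of the three implications in the corollary is just the corresponding implication of the theorem instantiated at $F = \{f\}$ and $G = \{g\}$: $\ATIME{g}{j}=\ATIME{g}{k} \Rightarrow \ATIME{f}{j}=\ATIME{f}{k}$, $\ATIME{g}{j} = \A\SPACE(g) \Rightarrow \ATIME{f}{j} = \A\SPACE(f)$, and $\TIME(g) = \SPACE(\log\circ g) \Rightarrow \TIME(f) = \SPACE(\log\circ f)$ (the statement in the corollary appears to have a typo in the conclusion of the third clause, which should read $\TIME(f)=\SPACE(\log\circ f)$ to match Theorem \ref{atimehigher}).

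There is no real obstacle here — the corollary is a direct restatement of Theorem \ref{atimehigher} for the single-function case. The only minor subtlety worth mentioning explicitly is the interpretation of classes $\ATIME{g}{j}$ and $\SPACE(g)$ when $g$ is a single function rather than a class: this is just $\bigcup_{g' \in \{g\}} \ATIME{g'}{j}$, which is unambiguous. Thus the proof reduces to a single line: apply Theorem \ref{atimehigher} with $F = \{f\}$, $G = \{g\}$, and $h_f = h$.
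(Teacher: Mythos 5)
Your overall approach is exactly the intended one: the paper gives no separate argument for this corollary, and it is meant to be read as Theorem \ref{atimehigher} specialized to the singleton classes $F=\{f\}$, $G=\{g\}$ with $h_f=h$, using $f=\Theta(g(h(n)+n))$ to supply both $f(n)=O(g_f(h_f(n)+n))$ and $g'(h_f(n)+n)=O(f')$ with $f'=f$. For the first two implications your verification is complete: $h$ computable in time $O(f)$ is in particular computable in space $O(f)$, which is what the second (space) variant of the theorem asks for, and you correctly note the typo in the third conclusion, which should read $\TIME(f)=\SPACE(\log\circ f)$.

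There is, however, one genuine flaw in your handling of the third clause. You assert that a function computable in time $O(f)$ is also computable in space $\log\circ f$ ``after reusing the tape''; this is false in general. Time-$O(f)$ computability only yields space $O(f)$, and computing $h$ within space $\log\circ f$ is a strictly stronger requirement (this is the whole point of the theorem distinguishing the three constructibility hypotheses ``time $f$, space $f$, space $\log\circ f$''). So the third implication does not follow from the corollary's stated hypothesis by the argument you give; to invoke the third variant of Theorem \ref{atimehigher} you must additionally assume $h$ computable in space $\log\circ f$ (i.e.\ read the constructibility hypothesis of the corollary ``resp.'', in parallel with the theorem), rather than deriving that from time-$O(f)$ computability. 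With that correction the proof is the same one-line specialization the paper intends.
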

It is surprising that we do not know how to prove that if
$\ATIME{C}{j}=\ATIME{C}{j+1}$ then $C$ collapses to level $j$. But we
think that it must be true, or at least that it would be really hard
to prove it to be false. First because if it was false it would imply
$\P\subsetneq\NP$, and also because it would be surprising that, for
some complexity classes, having $j$ or $j+1$ alternations is as
expressive, but having $j+2$ alternations is strictly more expressive.


\begin{lemma}\label{hocollapse}
  Let $2\le r< p$ and $0<j<k$.  Then $\hoc{r}{j}{2}=\hoc{r}{k}{2}$ implies
  that $\hoc{p}{j}{2}=\hoc{p}{k}{2}$, $\hoc rj2=\hod r2(\TC)$  implies
  $\hoc pj2=\hod p2(\TC)$ and  $\hod r2(\IFP)=\hod r2(\TC)$  implies
  $\hod p2(\IFP)=\hod p2(\TC)$.
\end{lemma}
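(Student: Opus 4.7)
The plan is to reduce each of the three implications to a padding argument and then invoke Theorem~\ref{atimehigher} of the preceding subsection. First I would use the capture theorems of Section~\ref{equ} to translate each logical equality into its machine-class counterpart: Theorem~\ref{atimeorder} gives $\hoc{s}{i}{2} = \ATIME{\exp_2^{s-2}(n^{O(1)})}{i}$; Theorem~\ref{space} gives $\hod{s}{2}(\TC) = \SPACE(\exp_2^{s-2}(n^{O(1)}))$; and Theorem~\ref{tifp} gives $\hod{s}{2}(\IFP) = \DTIME(\exp_2^{s-1}(n^{O(1)}))$. After substitution, the three claims become:
(i) $\ATIME{\exp_2^{r-2}(n^{O(1)})}{j} = \ATIME{\exp_2^{r-2}(n^{O(1)})}{k}$ implies the same statement with $r$ replaced by $p$;
(ii) $\ATIME{\exp_2^{r-2}(n^{O(1)})}{j} = \SPACE(\exp_2^{r-2}(n^{O(1)}))$ implies the same for $p$;
(iii) $\DTIME(\exp_2^{r-1}(n^{O(1)})) = \SPACE(\exp_2^{r-2}(n^{O(1)}))$ implies the same for $p$.

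Next I would verify the hypotheses of Theorem~\ref{atimehigher} with the function classes $G = \exp_2^{r-s}(n^{O(1)})$ and $F = \exp_2^{p-s}(n^{O(1)})$, where $s=2$ for clauses (i) and (ii) and $s=1$ for clause (iii). Given $f(n) = \exp_2^{p-s}(n^c) \in F$, I pick the padding $h_f(n) = \exp_2^{p-r}(n^c)$ and the witness $g_f(m) = \exp_2^{r-s}(m^c) \in G$. The telescoping identity
\[
\exp_2^{p-s}(n^c) \;=\; \exp_2^{r-s}\bigl(\exp_2^{p-r}(n^c)\bigr),
\]
together with the standard estimate $(\exp_2^k(x))^{O(1)} \subseteq \exp_2^k(x^{O(1)})$, yields $f(n) = \Theta(g_f(h_f(n)+n))$. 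The same telescoping furnishes the closure condition: for every $g'\in G$ there is $f'\in F$ with $g'(h_f(n)+n) = O(f')$. Since $r\ge 2$ we have $h_f(n) \le f(n)$, so $h_f$ is constructible in time and in space $f$; and $\log h_f(n) = \exp_2^{p-r-1}(n^c) \le \exp_2^{p-2}(n^{O(1)}) = \log\circ f$, so $h_f$ is constructible even in space $\log\circ f$, which is what clause (iii) demands.

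Finally I would invoke the three clauses of Theorem~\ref{atimehigher} in turn: the alternation clause proves (i), the $\SPACE$ clause proves (ii), and the $\SPACE(\log)$-vs-$\DTIME$ clause proves (iii). Re-translating through the capture theorems yields the three conclusions of the lemma. The only step that deserves attention is verifying the three constructibility bounds for the single padding $h_f$, but they all boil down to the inequality $p-r-1 \le p-2$, which follows from $r\ge 2$, so no genuine obstacle arises.
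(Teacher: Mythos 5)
Your proposal is correct and follows essentially the same route as the paper: translate the three equalities through Theorems~\ref{atimeorder}, \ref{space} and \ref{tifp} into machine classes, then apply Theorem~\ref{atimehigher} with the padding function $h_f=\exp_2^{p-r}$. You are in fact more careful than the paper's own brief proof, which swaps the labels of $F$ and $G$ (the hypothesis class must be $\exp_2^{r-2}(n^{O(1)})$, as you have it) and leaves the \IFP{} clause and the constructibility checks implicit.
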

\begin{proof}
  Let $F=\exp^{r-2}_{2}(n^{O(1)})$ and $G=\exp^{p-2}_{2}(n^{O(1)})$,
  the condition of Theorem \ref{atimehigher} are respected since, for
  all $f\in F$ $g_{f}=\exp^{p-2}_{2}(n)$, $h_{f}=\exp_{2}^{p-r}(n)$,
  we have $f(n)=O(g(h(n)+n))$ and for all $g'\in G$ let
  $f'=g'(h_{f}(n)+n)$ it is easy to see that $f'\in F$ hence
  $g'(h_{f}(n)+n)=O(f'(n))$.

  By Theorem \ref{atimeorder} $\hoc{r}{j}{2}$ is equal to
  $\ATIME{\exp_{2}^{r-2}(n^{O(1)})}{j}$ and by Theorem \ref{space}
  $\hod r2(\TC)$ is equal to $\SPACE(\ex {r-2})$. Then the corollary
  is just a translation of Theorem \ref{atimehigher} in a descriptive
  complexity setting.
\end{proof}

\subsection{Polynomial hierarchy and exponential hierarchies}
First we are going to prove that hypothesis on the polynomial
hierarchy and polynomial space imply results on the exponential
hierarchy. Hence we may prove some interesting result on polynomial
classes by proving them in exponential hierarchy.

\begin{theorem}\label{little_collapse}
  Let $D$ be a class of functions which contains at least every linear
  function and let $C$ be a class of time-constructible functions
  closed under addition and such that $\forall g\in D,f\in C(g\circ
  f\in C)$. If $\ATIME{D}{j}=\ATIME{D}{j+1}$ or
  $\ATIME{D}{j}=\coATIME{D}{j}$ then $\forall k\ge
  j,\ATIME{C}{k}=\coATIME{C}{k}=\ATIME{C}{j}$ and if
  $\ATIME{D}{j}=\SPACE(D)$ then $\ATIME{C}{j}=\SPACE(C)$.
\end{theorem}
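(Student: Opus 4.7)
The plan is to generalize the padding argument of Theorem~\ref{atimehigher}: first extend the one-step collapse inside $D$ to all alternation levels $k\ge j$, then transfer the collapse to $C$ by padding.

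First I will reduce the two hypotheses to the single assumption $\ATIME{D}{j}=\coATIME{D}{j}$. One direction is Lemma~\ref{lem:alt}. For the converse, an $\ATIME{D}{j+1}$-language $\exists w\,R(x,w)$ with $R\in\coATIME{D}{j}=\ATIME{D}{j}$ unfolds as $\exists w\exists v_{1}\forall v_{2}\dots Q v_{j}\,S(x,w,v_{1},\dots,v_{j})$, and merging $\exists w\exists v_{1}$ into a single $\exists(w,v_{1})$ leaves the inner verifier $S$ and its $D$-time bound untouched. A straightforward induction on $k$, using the same cost-free merging at each step, then gives $\ATIME{D}{k}=\coATIME{D}{k}=\ATIME{D}{j}$ for every $k\ge j$. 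No closure on $D$ is needed because merging only bundles witnesses and never composes $D$-functions.

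For the main claim, take $L\in\ATIME{C}{k}$ decided by an ATM in time $f\in C$ with $k$ alternations. I pad to $L'=\{x1^{f(|x|)}\mid x\in L\}$; after verifying the padding, the original machine decides $L'$ in time linear in the padded length, so $L'\in\ATIME{n}{k}\subseteq\ATIME{D}{k}=\ATIME{D}{j}$ via some ATM of time $g\in D$ and $j$ alternations. To decide $L$ on input $x$, I write down the padded string in time $O(f(|x|))$ (using time-constructibility of $f$) and then simulate on it in time $g(|x|+f(|x|))$. Closure of $C$ under addition gives $n+f\in C$, and the composition hypothesis $g\circ(n+f)\in C$ makes the total running time lie in $C$, so $L\in\ATIME{C}{j}$. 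Applying the same argument to the complement language gives $\coATIME{C}{k}=\ATIME{C}{j}$.

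The space statement is the identical padding with ``space'' replacing ``time'': the pad sends $\SPACE(C)$ into $\SPACE(D)=\ATIME{D}{j}$, and unpadding yields $\ATIME{C}{j}$ via the same time-bookkeeping; the reverse inclusion $\ATIME{C}{j}\subseteq\SPACE(C)$ is the classical $\A\TIME(f)\subseteq\SPACE(f)$ simulation. The point I expect to be most delicate is the inductive step in the second paragraph: one must verify carefully that unfolding a $\coATIME{D}{k}$-predicate through the inductive hypothesis and then merging with an outer $\exists$ leaves the underlying verifier unchanged, so that the single-step collapse in $D$ propagates upward using only the fact that $D$ contains the linear functions and without any hidden closure hypothesis on $D$.
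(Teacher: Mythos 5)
There is a genuine gap, and it sits exactly where you flagged it: the claim in your second paragraph that the one-level hypothesis on $D$ propagates upward to $\ATIME{D}{k}=\coATIME{D}{k}=\ATIME{D}{j}$ for all $k\ge j$ ``without any hidden closure hypothesis on $D$''. The merging step is not cost-free. When you rewrite the inner predicate $R(x,w)$ (with $R\in\coATIME{D}{j}$) as a $\Sigma_j$ machine using the hypothesis, the new machine's time bound is a function $g'\in D$ of \emph{its own} input length $|x|+|w|$, and $|w|$ can be as large as $g(|x|)$ for the original bound $g\in D$. Folding this back into a machine on input $x$ therefore yields a running time of the shape $g(n)+g'(n+g(n))$, i.e.\ a sum and a composition of $D$-functions — and the theorem assumes no closure of $D$ under addition or composition (only that $D$ contains the linear functions; the closure hypotheses are all on $C$). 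So your within-$D$ collapse does not follow from the stated hypotheses, and the rest of your argument (padding $\ATIME{C}{k}$ into $\ATIME{D}{k}$ and invoking $\ATIME{D}{k}=\ATIME{D}{j}$) rests on it. This is not a pedantic point: the paper stresses, right after Theorem \ref{atimehigher} and again at the end of Section \ref{cond}, that this kind of upward propagation of a collapse is precisely what is \emph{not} known for function classes that are not closed under composition; your merging argument is the standard polynomial-hierarchy argument, which silently uses that closure.

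The paper's proof avoids the issue by never collapsing inside $D$ beyond level $j$: it inducts on $k$ \emph{inside} $C$. Given $M\in\ATIME{f}{k}$ with $f\in C$, it splits $M$ into a prefix $M_1$ (the first $k-j$ blocks, padded to write $\Theta(f(n))$ symbols) and the suffix $M_2$ consisting of the last $j$ blocks; relative to its own input of size $O(f(n))$, $M_2$ runs in linear time, hence lies in $\Sigma_j\TIME(D)=\Pi_j\TIME(D)$ (or the dual) by the level-$j$ hypothesis alone. Swapping the polarity of $M_2$ merges its first block with the last block of $M_1$, giving $k-1$ alternations and total time $f+g\circ f$, which is in $C$ by the closure assumptions on $C$; then the induction hypothesis finishes. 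Your base case $k=j$ (the padding argument) and your treatment of the space statement are essentially the paper's; to repair the proposal you should replace the within-$D$ propagation by this block-peeling induction carried out in $C$.
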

Here we use a definition of TM with one reading tape and one working
tape, this way the linear time function can at least verify their
bounds.
\begin{proof}
  The first assumption implies the second one by Lemma \ref{lem:alt},
  hence we are only going to suppose that
  $\ATIME{D}{j}=\coATIME{D}{j}$ without loss of generality. We will do
  the proof by induction over $k$, for $k=j$, we want to prove that
  $\ATIME{C}{j}=\coATIME{C}{j}$.  We will only prove $\subseteq$
  because $\supseteq$ will be true by symmetry. It is only a padding
  argument, let $f\in C$ and $L$ decided by a TM $M\in\ATIME{f}{j}$,
  then $L'=\{x1^{f(|x|)}|x\in L\}$. $L'$ can be decided by a TM $M'$
  in $\ATIME{O(n)}{j}$ hence in $\coATIME{g}{j}$ for some $g\in D$,
  then $L$ can be decided by a TM $M''\in\coATIME{f+g\circ f}{j}$
  which writes $f(n)$ ``1'' on his working tape and simulates $M'$. By
  our assumption on $C$ we then have that $M''\in\coATIME{C}{j}$,
  hence $\ATIME{C}{j}\subseteq\coATIME{C}{j}$.

  Now, let $k>j$ and suppose that the property is true for $k-1$, that
  is that $\ATIME{C}{k-1}=\coATIME{C}{k-1}=\ATIME{C}{j}$ and let $L$
  be a language accepted by a TM $M\in \ATIME{f}{k}$ with $f\in
  C$. Then on input $x$ of size $n$, we may assume without loss of
  generality that $M$ makes $f(n)$ existential steps writing $O(f(n))$
  symbols on the tape, and then make $k-j-1$ alternations. Let us say
  that this first part is done by a TM $M_{1}$. Then $M$ make a second
  part in $\Theta_{j}\TIME(f(n))$ where $\Theta$ is $\Pi$ or $\Sigma$
  depending on the parity of $k-j$, let us call $M_{2}$ the TM that
  ends the computation of $M$, since it's input tape is of size
  $O(f(n))$, $M_{2}\in \Theta_{j}\TIME{(O(n))}\subseteq
  \Theta_{j}\TIME{(D)}$. There is some $g\in D$ such that there is a
  TM $M'_{2}\in\ol \Theta_{j}\TIME(g)$ (where $\ol \Pi =\Sigma$ and
  $\ol\Sigma=\Pi$) equivalent to $M_{2}$, now we create a TM $M'$
  which begin by simulating $M_{1}$ and then $M'_{2}$; we indeed have
  only $k-1$ alternations, and the time of the computation is
  $f+(g\circ f)$ which is in $C$ by our assumptions, hence $L$ is also
  accepted by $M'\in\ATIME{C}{k-1}=\ATIME{C}{j}$ where the last
  equality is by the induction hypothesis. We obtain the result
  $\coATIME{C}{k}=\coATIME{C}{j}$ by symmetry.
\paragraph{}
The result about space is a corollary of theorem \ref{atimehigher}
when we take $G=D$ and $F=C$. We always take $g_{f}(n)=n,h_{f}=f$ and
for any $g'\in G$ $g'(h(n)+n)=g'(f(n)+n)\in C$ by the closure
assumption.
\end{proof}
\begin{corollary}\label{cor:collapse} 
  If the polynomial (resp. linear) hierarchy collapses to the $j$th
  level then every exponential hierarchy collapses to the $j$th
  level. If $\PSPACE\subseteq\Sigma^{p}_{j}$ then
  $\SPACE(\ex{r})\subseteq\ATIME{\ex r}{j}$.
\end{corollary}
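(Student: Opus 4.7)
The plan is to derive this corollary as a direct instantiation of Theorem \ref{little_collapse} by making the right choice of the function classes $D$ and $C$ and then checking the closure hypotheses. For the statement about the $r$th exponential hierarchy collapsing, I would set $D=n^{O(1)}$ (respectively the class of linear functions) and $C=\exp_2^{r}(n^{O(1)})$. The class $D$ obviously contains the linear functions; the functions in $C$ are time-constructible, and closure under addition is immediate because the sum of two functions in $\exp_2^{r}(n^{O(1)})$ still lies in $\exp_2^{r}(n^{O(1)})$.

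The only point requiring a small argument is the closure condition $\forall g\in D,\, f\in C$, $g\circ f\in C$: if $g(n)=n^c$ and $f(n)=\exp_2^{r}(n^{d})$, then $g(f(n))=f(n)^{c}=2^{c\cdot \exp_2^{r-1}(n^{d})}$, and the factor $c$ can be absorbed into the constant exponent of the inner polynomial, so $g\circ f\in\exp_2^{r}(n^{O(1)})=C$. The linear case is even simpler. Thus the hypotheses of Theorem \ref{little_collapse} are satisfied. The hypothesis that the polynomial (linear) hierarchy collapses to level $j$ is precisely $\ATIME{D}{j}=\ATIME{D}{j+1}$, so the theorem yields $\ATIME{C}{k}=\ATIME{C}{j}$ for every $k\ge j$, which by Theorem \ref{atimeorder} translates into the collapse of the $r$th exponential hierarchy to level $j$.

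For the second statement, I keep the same $D=n^{O(1)}$ and $C=\exp_2^{r}(n^{O(1)})$, and observe that $\PSPACE=\SPACE(D)$ and $\Sigma^{p}_{j}=\ATIME{D}{j}$. Because $\Sigma^{p}_{j}\subseteq\PSPACE$ always, the hypothesis $\PSPACE\subseteq\Sigma^{p}_{j}$ actually gives $\ATIME{D}{j}=\SPACE(D)$, which is exactly the space hypothesis of Theorem \ref{little_collapse}. The conclusion of that theorem is then $\ATIME{C}{j}=\SPACE(C)$, and in particular $\SPACE(C)\subseteq\ATIME{C}{j}$, which is the desired inclusion $\SPACE(\ex{r})\subseteq\ATIME{\ex{r}}{j}$.

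There is no real obstacle here: the whole corollary is a translation exercise, and the only thing to be careful about is verifying the composition-closure hypothesis of Theorem \ref{little_collapse}, which reduces to the elementary identity $(\exp_2^{r}(n^{O(1)}))^{O(1)}=\exp_2^{r}(n^{O(1)})$. The rest is just noting that in our setting ``$D$ corresponds to the polynomial world'' and ``$C$ corresponds to the $r$-times-iterated-exponential world,'' so Theorem \ref{little_collapse} lifts every collapse/inclusion from the former to the latter.
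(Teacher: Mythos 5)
Your proposal is correct and is essentially the paper's own proof: the paper likewise derives the corollary by applying Theorem \ref{little_collapse} with $D=n^{O(1)}$ (resp. $D=O(n)$) and $C=\ex{r}$, noting that $C$ is closed under addition and under composition with $D$. Your extra detail verifying $(\exp_2^{r}(n^{O(1)}))^{O(1)}=\exp_2^{r}(n^{O(1)})$ and rewriting $\PSPACE\subseteq\Sigma^{p}_{j}$ as the equality $\ATIME{D}{j}=\SPACE(D)$ simply fills in steps the paper leaves implicit.
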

\begin{proof}We apply theorem \ref{little_collapse} with $D=n^{O(1)}$
  (resp. $D=O(n)$) and $C=\ex r$ for $r\ge 0$. It is easy to see that
  $C$ is closed under $D$ and under addition.
\end{proof}
\subsection{Classes of formulae}
Now we will give results for the formula formalism, there may
not be corollary of the results over general classes of formulae
because of the order of the vocabularies of the formulae, something
which does not have any exact translation in the TM setting.

\begin{lemma}\label{lem:collapse}For $j\ge0$:
  \begin{enumerate}
  \item if $\hoc{r}{j}{i}=\hoc{r}{j+1}{i}$ with $0\le i\le r+1$ then
    $\hoc{r}{j}{i}=\cohoc{r}{j}{i}=\hoc{r}{j+1}{i}=\cohoc{r}{j+1}{i}$.
  \item If $\hoc{r}{j}{i}=\cohoc{r}{j}{i}$ with $i=r$(resp. $i=r+1$)
    then
    $\hoc{r}{j}{k}=\hoc{r}{j+1}{k}=\cohoc{r}{j}{k}=\cohoc{r}{j+1}{k}$
    for all $k<r$ (resp. $k=r+1$).
  \item If $\hoc{r}{j}{i}=\hoc{r}{j+1}{i}$ or
    $\hoc{r}{j}{i}=\cohoc{r}{j}{i}$ for $i=r$ (resp $i=r+1$) then
    \hod rk collapses to the $j+1$th level for $k\le r$ (resp $k=r+1$).
  \end{enumerate}
\end{lemma}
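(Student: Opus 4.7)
The plan is to establish the three parts in sequence, each reusing its predecessors; throughout, I would work with formulae in the prenex normal form of Subsubsection~\ref{sec:normalform} and exploit that two equivalent formulae share the same input vocabulary (and hence the same orders of free relations).

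For Part~1, I would imitate Lemma~\ref{lem:alt}. Taking complements of $\hoc{r}{j}{i}=\hoc{r}{j+1}{i}$ gives $\cohoc{r}{j}{i}=\cohoc{r}{j+1}{i}$. Every $\cohoc{r}{j}{i}$ formula fits into $\hoc{r}{j+1}{i}$ by prefixing a vacuous $\exists\bar X^{r}$ block (which touches no free relation), so $\cohoc{r}{j}{i}\subseteq\hoc{r}{j+1}{i}=\hoc{r}{j}{i}$; the symmetric inclusion closes the four-way equality.

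For Part~2, assume $\hoc{r}{j}{i}=\cohoc{r}{j}{i}$ with $i\in\{r,r+1\}$ and fix $k\le i$ (the ranges stated in the lemma sit inside this one). Take $\phi\in\hoc{r}{j+1}{k}$ in normal form $\phi=\exists\bar X^{r}\psi$. The free relations of $\psi$ are those of $\phi$ (order $\le k\le i$) together with $\bar X$ (order $r\le i$), so $\psi\in\cohoc{r}{j}{i}=\hoc{r}{j}{i}$ by hypothesis. Writing $\psi=\exists\bar Y^{r}\psi'$ with $\psi'\in\cohoc{r}{j-1}{i}$ and merging the two leading existential blocks yields $\phi=\exists\bar X^{r}\bar Y^{r}\psi'\in\hoc{r}{j}{k}$, since the only remaining free relations are those of $\phi$. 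Hence $\hoc{r}{j+1}{k}\subseteq\hoc{r}{j}{k}$, and Part~1 at order $k$ delivers the four-way equality.

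For Part~3, Parts~1 and~2 already give the four-way equality at order $i\in\{r,r+1\}$ for levels $j,j+1$. I would then prove by induction on $\ell\ge j+1$ that $\hoc{r}{\ell}{i}=\cohoc{r}{\ell}{i}=\hoc{r}{j+1}{i}$: the inductive step replays the Part~2 manipulation at level $\ell$, using $\hoc{r}{\ell}{i}=\cohoc{r}{\ell}{i}$ in place of the base hypothesis, and Part~1 at level $\ell$ brings the co-class along. The collapse at order $i$ then transfers to any $k\le i$ automatically: given $\phi\in\hoc{r}{\ell}{k}\subseteq\hoc{r}{\ell}{i}=\hoc{r}{j+1}{i}$, the equivalent $\hob{r}{j+1}$ formula must use the same input vocabulary as $\phi$, so it lies in $\hoc{r}{j+1}{k}$. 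The main obstacle will be tracking the maximum free-variable order across each rewriting: the hypothesis at order $i$ applies only to a subformula whose free relations all have order $\le i$, which is exactly why $i$ must be at least $r$, so that the $r$-order quantified variables temporarily exposed as free inside $\psi$ still fit under the bound.
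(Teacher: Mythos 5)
Your argument is correct and follows essentially the same route as the paper: complementation together with the definitional inclusion $\cohob{r}{j}\subseteq\hob{r}{j+1}$ for Part~1, peeling off the leading existential block, applying the order-$i$ hypothesis to the exposed $\Pi_j$ remainder (whose free variables now include the order-$r$ block) and re-prefixing the existentials for Part~2, and an induction on the level for Part~3. The only cosmetic difference is that you carry just the order-$i$ classes through the induction and transfer down to $k\le i$ at the end via vocabulary preservation, whereas the paper carries all orders $k\le r$ along the induction; both rest on the same observation that equivalent formulae share their free symbols.
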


The proofs are almost identical to the one for the polynomial
hierarchy which is the special case $r=i=2$.

\begin{proof}(of the lemmas) 
  For the first point, if $\hoc{r}{j}{i}=\hoc{r}{j+1}{i}$ then their
  complements are also equals, so we have
  $\cohoc{r}{j}{i}=\cohoc{r}{j+1}{i}$ hence
  $\cohoc{r}{j}{i}\subseteq\hoc{r}{j+1}{i}=\hoc{r}{j}{i}\subseteq\cohoc{r}{j+1}{i}=\cohoc{r}{j}{i}$.

  For the second point, let $\phi\in\hoc{r}{j+1}{i}$ with $i\le r$
  (resp. $i=r+1$), then $\phi=\exists\olmc X_{0}^{r}.\psi$ where
  $\psi\in\cohoc{r}{j}{r}$ (resp. \cohoc{r}{j}{r+1}), then there
  exists $\psi'\in\hoc{r}{j}{r}$ (resp. \hoc{r}{c}{r+1}) equivalent to
  $\psi$, then $\phi'=\exists \olmc X^{r}_{0}.\psi'$ is equivalent to
  $\phi$ and is in $\hoc{r}{j}{i}$, hence
  $\hoc{r}{j+1}{i}\subseteq\hoc{r}{j}{i}$.  By symmetry we also have
  $\cohoc{r}{j+1}{i}\subseteq\cohoc{r}{j}{i}$. The other side,
  $\supseteq$, is trivial, and by transitivity
  $\hoc{r}{j+1}{i}=\hoc{r}{j}{i}=\cohoc{r}{j}{i}=\cohoc{r}{j+1}{i}$.

  For the third point, by the first point of the lemma the first
  condition implies the second one, hence we are only going to use
  this condition, that $\hoc{r}{j}{i}=\cohoc{r}{j}{i}$ for $i=r$
  (resp $i=r+1$). By induction over $l\ge j$, we will prove that
  $\hoc{r}{j}{i}=\hoc{r}{l}{i}=\cohoc{r}{l}{i}$ for $i\le r$
  (resp. $i=r+1$). For $l=j$ this is the second point of the lemma, so
  assume that $l>j$ and that the property is true for $l-1$, by the
  second part of the lemma we have 
  $\hoc{r}{l}{i}=\cohoc{r}{l}{i}=\hoc{r}{l-1}{i}=\hoc{r}{j}{i}$, and
  the last equality is true by induction.
\end{proof}

What is surprising is that it seems that we do not have a proof that if
$\hoc{r}{j}{2}=\hoc{r}{j+1}{2}$ then \hod{r}{2} collapses to level
$j$. This is because, if $\phi\in\hoc r{j+1}2$, then $\phi=\exists \mc
X^{a,r}.\psi$ with $\phi\in\hoc r{j}r$ and not in \hoc r{j}2; and we
have no hypothesis about this class.  Lemma \ref{lem:collapse} is
almost what we would have wanted, but in the lemma we must bootstrap
the property with an assumption over formulae with a free variable of
order $r-1$, and in the theorem with a formula whose highest
free-variable is of degree 2. This is the descriptive complexity
translation of the question raised in \cite{sky}: if two levels of the
\eh r are equal, does the \eh r collapse? The proofs used for the
polynomial hierarchy do not work because exponentials are not closed
under composition.

\section{Variable order}\label{vo}
Variable order (\VO{}) is an extension of high-order where the orders
of the relations are not fixed any more but are variable. It was defined
in \cite{lauri}, and it was proved there that it is ``complete''; and in
fact more expressive than Turing machines, because it can decide the
halting problem, and hence also its complement.

One problem with \VO{} is that two $\alpha$-equivalent formula are not
always equivalent. \\$\forall i\forall \mc X^{i}\forall j\exists \mc
Y^{j}.(\mc X^{i}=\mc Y^{j})$ is false while $\forall i\forall \mc
X^{i}\forall i\exists \mc Y^{i}.(\mc X^{i}=\mc Y^{i})$ is true.

In this section we first give a new definition of
``\emphdex{Variable order}'' logic, equivalent to that of
\cite{lauri}, but that we consider easier to use, at least because it
respects the equivalence of $\alpha$-equivalent formulae.  Then we
prove that \VO{} contains the analytical hierarchy.

\subsection{A new definition}

\begin{definition}[Sequence of relations]
  A sequence of relations (of arity $a$) is such that the relation
  number $r$ of the sequence is of arity $a$ and order $r$.

  We will write $\mathcal X^{a}=(\mc X^{a,r})_{r \in \mathbb N^+}$ to
  mean ``$\mc X$ is a sequence of arity $a$''.
\end{definition}
\begin{definition}[Variable-order (\VO{})]Now the vocabularies will be
  over two sorts, the positive integers and the sequence of relations.
  The quantifiers of our logic will be over one of those two
    sorts.

  A variable-order formula $\phi$ is defined recursively as usual,
  such that if $\psi$ and $\psi'$ are formulae then $\psi\land\psi',
  \psi\lor \psi', \neg\psi, \forall \mathcal X^{a,r}. \psi, \exists
  \mathcal X^{a,r}. \psi, \forall r. \psi$ and $\exists r. \psi$ are also
  formula; where $\mc X^a$ are sequences of relations and $r$ is an
  order variable taking values in $\mathbb N^+$

  Finally $\mathcal X^r(\mathcal Y_1,\dots,\mathcal Y_a)$, $\mathcal
  Y=_r\mathcal X$, $r=p$ and $r<p$ are the atomic formulae where $r$
  and $p$ are variable orders and $\mc X$ and the $\mc Y_i$ are
  untyped relation variables.

  The closed formulae are defined as usual.
\end{definition}

\begin{definition}[Semantics of \VO{}]
  We will write $\mc X^{r}$ to speak of the element of order $r$ of
  the sequence $\mc X^a=(\mc X^{r})_{r\in \mathbb N}$.  $\land, \lor$
  and $ \neg$ have their usual meaning.
  \begin{itemize}
  \item $\mf A\models r=p$ if and only if $\mf A[r]=\mf A[p]$
  \item $\mf A\models r<p$ if and only if $\mf A[r]<\mf A[p]$
  \item $\mf A\models \mathcal X^{r}(\mathcal X_1,\dots,\mathcal X_a)$
    if and only if $\mf A[r]>1$ and $(\mf A[\mc X_1]^{{\mf A[r]}-1},\dots,\mf
    A[\mc X_a]^{{\mf A[r]}-1})\in\mf A[\mc X]^{{\mf A[r]}}$
  \item $\mf A\models \mc X=_r\mc X$ if and only if $\mf A[\mc \mc
    X]^{{\mf A[r]}}=\mf A[\mc \mc Y]^{{\mf A[r]}}$
  \item $\mf A\models \forall \mc X^a.\phi$ (resp.  $\mf A\models
    \exists \mc X^a.\phi$) if and only if for all sequences (resp. if
    and only if there exists one sequence) $\mathcal R^{a}=(\mc
    R^{a,r})_{r \in \mathbb N^+}$ of $a$-ary relation of every
    positive order: $\mf A[\mc X/\mc R]\models \phi$
  \item $\mf A\models \forall r.\phi$ (resp.  $\mf A\models \exists
    r.\phi$) if and only if for all (resp. if and only if there exist
    one) $i \in \mathbb N^+$: $\mf A[r/i]\models \phi$

  \end{itemize}
\end{definition}
We are now going to define $\VO'$, which is the ``variable order'' as
defined in \cite{lauri} and prove that our definition is equivalent to
theirs.

\begin{definition}
  We have an infinite number of order variables
  $r_{1},\dots,r_{n},\dots$, of first order variables
  $x_{1},\dots,x_{n},\dots$, and of untyped relation variables $\mc
  X_{1},\dots,\mc X_{n},\dots$. As in $\VO$, there are quantification
  over order variables, but there is also quantification over
  first-order variables, and the quantification over relations
  ``associates'' with it a non-free order variable. The atomic formula
  are then $\mc X^{r_{i}}(\mc Y_{1}^{i_{1}},\dots,\mc Y_{a}^{i_{r}})$
  where the exponent is associated with the relation variable, and the
  value of the exponent variable is the value of this variable in the
  scope of this formula.

  We emphasize that the value of an order variable associated
  with an untyped relation variable can change between the association
  and the atomic formula if the variable is quantified again.
\end{definition}
\begin{theorem}\label{vovo}
  \VO{} is equivalent to $\VO'$.
\end{theorem}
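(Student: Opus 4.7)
The plan is to define translations in both directions between $\VO$ and $\VO'$ by structural induction, and show that each translation preserves truth over every structure.

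For $\VO'\subseteq\VO$, I would first apply $\alpha$-renaming to the $\VO'$ formula so that all bound order variables are pairwise distinct, so that no bound order variable is ever shadowed inside its scope. After this renaming, each $\VO'$ relation binder $\exists\mc X^{r_i}$ (respectively $\forall\mc X^{r_i}$) truly pins $\mc X$ to the single order $r_i$ in its entire scope, resolving the $\alpha$-equivalence pathology noted at the start of the section. Then I translate $\exists\mc X^{r_i}.\phi$ to the $\VO$ formula $\exists\mc X^{a}.\phi'$ (sequence quantification), and rewrite every atomic occurrence $\mc X(\mc Y_{1},\dots,\mc Y_{a})$ in the body to $\mc X^{r_{i}}(\mc Y_{1},\dots,\mc Y_{a})$. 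The semantic match is immediate: choosing a relation $R$ of order $r_i$ in $\VO'$ corresponds to choosing any sequence in $\VO$ whose $r_{i}$-th component is $R$, and since $\mc X$ is only ever read at that component the rest of the sequence is irrelevant.

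For $\VO\subseteq\VO'$, the translation is Skolemization-style. Given a subformula $\exists\mc X^{a}.\phi$, I collect the syntactic atomic occurrences of $\mc X$ in $\phi$; each has the shape $\mc X^{r_{j}}(\mc Y_{1},\dots,\mc Y_{a})$ for some order variable $r_{j}$ in scope. I push the sequence quantifier inward past every quantifier over an order variable that appears as an exponent of $\mc X$, using the $\VO$-semantic equivalence (given by the axiom of choice applied to the meta-definition of a sequence as an arbitrary assignment of a relation to each order) that picking an entire sequence before inspecting the orders is equivalent to picking per-order. Once the binder of $\mc X$ sits immediately inside the relevant order binders and every use of $\mc X$ mentions the same order variable $r$, I replace $\exists\mc X^{a}.\psi$ with the $\VO'$ binder $\exists\mc X^{r}.\psi'$, rewriting each $\mc X^{r}(\dots)$ as $\mc X(\dots)$.

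The main obstacle is the case in which a single sequence variable $\mc X$ is simultaneously accessed at two distinct syntactic order variables $r_{j_{1}}\neq r_{j_{2}}$ under the same sequence binder, so that no single $\VO'$ quantifier suffices. I handle this by introducing one $\VO'$ relation quantifier per syntactic access, as in $\exists\mc X_{1}^{r_{j_{1}}}\exists\mc X_{2}^{r_{j_{2}}}\dots$, and conjoining explicit consistency clauses of the form $(r_{j_{a}}=r_{j_{b}})\Rightarrow(\mc X_{a}=_{r_{j_{a}}}\mc X_{b})$ for every pair $(a,b)$, which force the chosen relations to agree whenever their associated orders happen to coincide and so emulate the fact that in $\VO$ a sequence has a determined component at each order. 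The delicate part of the proof is verifying that this bookkeeping remains faithful in the presence of nested order quantifiers that may bind or unbind order variables between one occurrence of $\mc X$ and another, and that pushing the relation quantifiers to the correct depth correctly interacts with the alternation of universal and existential order quantifiers; once this is done, a structural induction gives the desired equivalence.
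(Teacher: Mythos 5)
Your first direction is where the proposal breaks. In $\VO'$ the meaning of an atomic formula $\mc X^{r_i}(\dots)$ is determined by the \emph{current} value of the order variable associated with $\mc X$, and that variable may be re-quantified between the binder of $\mc X$ and the access; this is exactly why $\alpha$-equivalent $\VO'$ formulae need not be equivalent (the example at the start of the section: $\forall i\forall \mc X^{i}\forall i\exists \mc Y^{i}.(\mc X^{i}=\mc Y^{i})$ is true, while its $\alpha$-renaming $\forall i\forall \mc X^{i}\forall j\exists \mc Y^{j}.(\mc X^{i}=\mc Y^{j})$ is false). Your very first step, renaming all bound order variables apart ``to resolve the pathology'', is therefore not meaning-preserving: the $\VO$ formula you build is equivalent to the renamed formula, not to the given $\VO'$ formula, so the inclusion $\VO'\subseteq\VO$ is not established. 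In fact no translation is needed for this direction: since a $\VO'$ atomic access already carries its exponent and is evaluated at the current value of that variable — which is precisely the $\VO$ semantics of $\mc X^{r}(\dots)$ — every $\VO'$ formula can be read verbatim as a $\VO$ formula with the same models, which is how the paper disposes of this inclusion in one line.

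For $\VO\subseteq\VO'$ your per-access splitting with consistency clauses $(r_{j_a}=r_{j_b})\Rightarrow(\mc X_a=_{r_{j_a}}\mc X_b)$ is a genuinely different route, but the part you defer (``the delicate part'') is the entire difficulty: a single sequence variable accessed at an exponent bound by an inner, possibly universal, order quantifier corresponds to infinitely many semantic accesses, and both the placement of the split quantifiers and the soundness of pushing sequence quantifiers past order quantifiers must be argued case by case through the alternation; none of this is carried out. The paper sidesteps all of it by using the very re-quantification feature of $\VO'$ that defeated your other direction: attach one fresh order variable $i_j$ to each relation quantifier, and rewrite each access as $\exists i_j.(r=i_j\land \mc X_j^{i_j}(\dots))$ (similarly for $=_r$), so that one untyped $\VO'$ variable behaves exactly like a $\VO$ sequence; a routine structural induction, under an outer $\forall i_1,\dots,i_n$, then gives equivalence, with no splitting, no consistency clauses and no quantifier pushing. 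I would redo both directions along these lines.
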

\begin{proof}Every formula in $\VO'$ is also a formula in \VO{} and
  its semantics is the same, so \VO{} is at least as expressive that
  $\VO'$.

  Let $\phi$ be an \VO{} formula over the vocabulary $\sigma$, let
  $\mc X_1,\dots,\mc X_n$ be the variables of $\phi$ and let
  $\sigma'=\{i_1,\dots, i_n\}$ be $n$ new order variable. We will
  create an $\VO'$ formula $\phi'$ such that $\forall
  i_1,\dots,i_n.\phi'$ is equivalent to $\phi$.

  $\phi'$ is $\phi$ where the $Q\mc X_j$ are replaced by $Q\mc
  X_j^{i_j}$ and the atomic formulae containing $\mc X_j^k$ will be
  replaced this way:
  \begin{itemize}
  \item $\mc X_j=_r\mc X_k$ is replaced by ``$\exists
    i_j,i_k.(r=i_j\land r=i_k\land \mc X_j^{i_j}=\mc X_k^{i_k})$''
  \item{}$\mc X_j^r(\mathcal X_{k_1},\dots,\mathcal X_{k_a})$ is
    replaced by ``$\exists
    i_j,i_{k_1},\dots,i_{k_a}.(r=i_j\bigwedge_{1\le b\le a}
    (r-1)=i_{k_b}\land \mc X_j^{i_j}(\mathcal
    X_{k_1}^{i_{k_1}},\dots,\mathcal X_{k_a}^{i_{k_a}}))$'' where
    ``$r-1=x$'' is a syntactic sugar for ``$x<r\land\forall o'. (
    \neg(x<o'\land o'<r))$''.
  \item{}$\mc X_j^r\in \mc X_k^p$ is replaced by ``$\exists
    i_j,i_k.(i_j=r\land i_k=p\land X_j^{i_j}=X_k^{i_p})$''.

  \end{itemize}

  \begin{lemma}Let $\sigma$ be a vocabulary, $\phi$ a formula over
    $\sigma$ such that there are $n$ relation variables, $\sigma'$ a
    set distinct of $\sigma$ of cardinality $n$, $\mf A$ a
    $\sigma$-structure and $\mf A'$ an extension of $\mf A$ over
    vocabulary $\sigma\cup\sigma'$. Then $\mf A\models
    \phi\Leftrightarrow \mf A'\models\phi'$.
\end{lemma}
This lemma implies that $\psi=\forall i_1,\dots,i_n.\phi'$ will be such
that $\mf A\models \phi\Leftrightarrow \mf A'\models\psi$.

\begin{proof}Of the lemma
  
  The proof for $\land,\lor$ and $\neg$ is an easy induction.
  \begin{itemize}
  \item If $\phi$ is $\forall \mc X_j.\psi$, then $\phi'=\forall \mc
    X_j^{i_j}$.  Then $\mf A\models \phi\Leftrightarrow \mf
    A'\models\phi'$ if and only if for all sequences of relations $\mc
    R$, $\mf A[\mc X/\mc R]\models \psi\Leftrightarrow \mf A'[\mc
    X/\mc R]\models\psi'$, and since $A'[\mc X/\mc R]$ is a
    $\sigma\cup\{\mc X\}\cup \sigma'$-structure which is an extension
    of the $\sigma\cup\{\mc X\}$-structure $\mc A[\mc X/\mc R]$ by
    induction we indeed have $\mf A[\mc X/\mc R]\models
    \psi\Leftrightarrow \mf A'[\mc X/\mc R]\models\psi'$.
  \item If $\phi$ is $\exists \mc X_j.\psi$ the proof by induction is the
    same.
  \item If $\phi$ is $\forall j.\psi$, then $\phi'=\forall
    j.\psi'$. Then $\mf A\models \phi\Leftrightarrow \mf
    A'\models\phi'$ if and only if for all positive integer $r$, $\mf
    A[i/r]\models \psi\Leftrightarrow \mf A'[i/r]\models\psi'$, and
    since $A'[i/r]$ is a $\sigma\cup\{i\}\cup \sigma'$-structure which
    is an extension of the $\sigma\cup\{i\}$-structure $\mc A[i/r]$,
    then by induction $\mc A$ we indeed have $\mf A[i/r]\models
    \psi\Leftrightarrow \mf A'[i/r]\models\psi'$.
  \item If $\phi$ is $\exists j.\psi$ the proof by induction is the
    same.
  \item If $\phi$ is $\mc X=_r\mc Y$ then $\phi'=\exists
    i_j,i_k.(r=i_j\land r=i_k\land \mc X_j^{i_j}=\mc X_k^{i_k})$. We
    will show $\mf A\models \phi\Leftrightarrow \mf A'\models\phi'$ by
    two implication.

    $\Rightarrow:$ by definition $\mf A\models \phi$ means that $\mf
    A[\mc X][r]=\mf A[\mc Y][r]$, so $r$ is a correct value for both
    $i_j$ and $i_k$ such that $r=i_j\land r=i_k\land \mc X_j^{i_j}=\mc
    X_k^{i_k}$, hence $\mf A\models\exists i_j,i_k(r=i_j\land
    r=i_k\land \mc X_j^{i_j}=\mc X_k^{i_k})$ is true.

    $\Leftarrow:$ it is clear that if $\mf A\models\exists
    i_j,i_k.(r=i_j\land r=i_k\land \mc X_j^{i_j}=\mc X_k^{i_k})$ is
    true, then $\mf A\models\mc X_j^{i_j}=\mc X_k^{i_k}$ must be true
    when $i_j=i_k=r$, so $\mf A\models \mc X=_r\mc Y$.

    The important point in this case is that the value of $i_k$ in
    $\mf A'$ has no importance.
  \item If $\phi=\mc X_j^r(\mathcal Y_{k_1},\dots,\mathcal Y_{k_a})$
    or $\phi=\mc X_j^r\in \mc X_k^p$ then $\phi'=\exists
    i_j,i_{k_1},\dots,i_{k_a}.(r=i_j\bigwedge_{1\le b\le a}
    (r-1)=i_{k_b}\land \mc X_j^{i_j}(\mathcal
    Y_{k_1}^{i_{k_1}},\dots,\mathcal Y_{k_a}^{i_{k_a}}))$ and a
    similar proof can be done, showing that the equality in $\phi'$
    will make that the value in $\mf A'$ has no importance, and will
    end the proof.
  \end{itemize}
\end{proof}
\end{proof}
There is in fact one last difficulty not treated in this proof, \VO{}
accepts that the variable order can be free and that its value can be
given in the vocabulary, which is forbidden in $\VO'$. For inductive
proofs it is easier to just consider that we can have order variables
in the vocabulary. And even if we reject the free order variable in
the formulae, we will see in section \ref{fvo} how to encode them with
relational variables in \VO.  

\subsection{Arithmetic on order variables}\label{iter-rel}
Let $r$ and $p$ be order variable, we will show that we can define
both $r+p$ and $r\times p$. In this definition we will assume that
there is at least 2 elements in the universe.

\begin{notation}
  In this section $\mc X_{a}$ will means that the variable $\mc X$ is
  of arity $a$.
\end{notation}
We cannot write $a$ as an exponent since exponent are used for order
variables. But since in the proofs we will not use list of variable
there will be non confusion.

Also in this section ``$\mc A$ contains $\mc B$'' means that $\mc B\in
\mc A$. We will use many straightforward syntactic sugar:
\begin{eqnarray}
  \label{eq:sugar}
  "i+c=j"\ed \ifte c=0\thent i=j\elset\nonumber\\\exists k>i((k+(c-1)=j)\land \neg\exists l(i<l<k))\label{plusconst}\\
  "\mc{X\cup Y=Z}"\ed\forall r,p,\mc A((\mc A^r\in\mc X^p\land \mc A^r\in\mc Y^p)\nonumber\\
  \Leftrightarrow (\mc A^r\in\mc Z^p))\\
  "\mc{A(B(C))}"\ed \mc {A(B)\land B(C)}
\end{eqnarray}
In equation \ref{plusconst} $c$ is a constant. In \ref{iter-rel} the
relation can also be <, $\in,\land,\lor$ or =.

\sss{Variable order as input}\label{fvo}
We will first need to be able to take number as input, and create a
formula $\phi_i$ such that the number of variable satisfying a monadic
second order predicate $P_i$ is equal to the the order variable
$r_i$. Formally we want that $r_i=|\{y\in A|P_i(y)\}|$ is the only
value such that $\phi(P_i,r_i)$ is true. We will not use a binary
encoding but this unary one for clarity; since we intend to prove
calculability results and not complexity one, there is no difference.

The idea we will use is to create a class of binary high-order
relation; let us call this class ``\emphdex{unique}''.

\begin{definition}
  The binary relation $\mc X$ is \emph{unique up to level $r+1$} if every
  element of the sequence $\mc X$ of order at most $r$ contains only
  one relation, which is the precedent element of the sequence
  repeated twice, and the elements of order greater than $r$ are
  empty. This imply that $\mc X^r$ contains exactly $r$ elements.
\begin{eqnarray}
  \texttt{unique}(\mc X_{2},r)\ed\forall i(1\le i\le r\Rightarrow(\mc X^{i+1}(\mc X^{i},\mc X^{i})\land \nonumber\\\forall \mc Y_{2},\mc Z_{2}\mc X^{i+1}(\mc Y^{i},\mc Z^{i} )\Rightarrow \mc X=_i\mc Y=_i\mc Z))
\end{eqnarray}
\end{definition}
We will then state that there is a bijection between the elements of
$\mc X$ and the variable $y$ that respect some property $P_i(y)$, this
will create the wanted relation between the order (of $\mc X$) and the
elements satisfying $P_i$.

A bijection will be a set $\mc T^t$ of couple of elements $\mc U^u$
(with $u=t-1$), one of the element of the couple will be an element of
$\mc X$ and the other one will be an $y$ such that $P_i(y)$. By
definition of $\mc X$, if $\mc X^r\in \mc U^u$ then for all $p<r$,
$\mc X^p\in \mc U^u$, hence we will use a more precise definition; we
will say that $X^r$ is an element of $\mc U^u$ if $r$ is the biggest
order $p$ such that $X^p\in U^u$.
 \begin{eqnarray*}\label{eq:element}
   \texttt{element}(\mc X_{2},\mc U_{2},r,u)\ed \mc X^r\in \mc U^u\land \mc X^{r+1}\notin \mc U^u\\
 \end{eqnarray*}

 It is easy to obtain such an element, we define a list $\mc E^{p}$
 this way; $\mc E^{u}=\mc U^{u}$, $\mc E^{r}=\mc X^{r}$, and for every
 $r<p<u$ $(\mc E^{p-1},\emptyset^{p-1})$ is the only relation of $\mc
 E^{p}$ where $\emptyset$ is the ``false'' relation. It is then clear
 that $\mc X^{r}\in \mc U^{u}$ and that $\mc X^{r+1}\notin \mc U^{u}$.

 Of course, every element $\mc U^u$ of the set $\mc T$ will contain
 at most two elements, one element $\mc X$ and a $y$ verifying
 $P_i$. $\mc U$ can contains also one element if $y=\mc X$. The fact
 that there are exactly one elements satisfying $\phi$ in $\mc U^{u}$
 can be called ``surjection''. 
\begin{eqnarray*}
  \texttt{surjection}(\mc U,u,\phi)\ed \exists i\in \mc U(\phi(i)\land\forall j\in \mc U(\phi(j)\Rightarrow i=j))
\end{eqnarray*}

And we must also check that every element of $\mc X$ and every $y$
such that $P_i(y)$ is an element of $\mc U^u$ is contained in an $\mc
U^u$ of $\mc T^t$ . It is here that it is important that $\mc X$
contains at most one element at each level, this way we are sure that
there is exactly one element of first order in $\mc X$, if this
element is an element of $P_i$ then we will assume it is in bijection
with itself; and there is no other element of $\mc X$ that could imply
that a first order element $z$, which verify $P_i$ is in $\mc
U^{u}$. The fact that every element has got an image in $\mc T^{t}$
can be called the ``injection''.
\begin{eqnarray}
  \label{eq:relat-total}
  \texttt{injection}(\mc T,t,\phi)\ed \forall y,\mc Y(\phi(\mc Y^y)\Rightarrow\exists \mc U(\mc T^t(\mc U)\land\mc Y^{y}\in \mc U^{t-1}\nonumber \\ \land \forall \mc V(\mc T^t(\mc V)\land \mc Y^{y}\in  \mc V^{t-1}\Rightarrow \mc{U=V})))
\end{eqnarray}
Defining the bijection is just the conjunctions of injection and
surjection.
\begin{eqnarray*}
  \label{eq:bij}
  \texttt{bijection}(\mc T,t,\phi)=\forall \mc U(\mc T^{t}(\mc U,\emptyset)\Rightarrow \texttt{surjection}(\mc U,t-1,\phi))\land\texttt{injection}(\mc T,t,\phi).
\end{eqnarray*}

Assuming that there is at least one $y$ verifying $P_i$ we can tell
that there are $i$ elements $y$ verifying $P_i$, with this formula.
\begin{eqnarray}
  \label{eq:equal+}
  \texttt{equal}^+(i,P_i)\ed \exists \mc{X,T},t(\texttt{unique}(\mc X,i)\land\texttt{bijection}(\mc T,t,P_i)\land\nonumber\\
  \texttt{bijection}(\mc T,t,\lambda i.\texttt{elements}(\mc X,\mc T,i,t-1)))
\end{eqnarray}
Here $\lambda i.\phi(i)$ means that $i$ is going to be the free
variable of the property used in the formula of ``bijection''.

\paragraph{}The problem here was that there is no relation of order 0,
we are then going to encode them. We will do it this way: (1,1) means
0, ($n$,2) means $n$ and ($n$,$m$) for $m>2$ or ($m=1$ and $n>1$)
means nothing.
\begin{eqnarray*}
  \label{eq:equal}
  \texttt{equal}(i,i',\phi)\ed\ifte \neg\exists \mc X \phi(\mc X)\thent i=i'=1 \elset \texttt{equal}^+(i,\phi)\land i'=2
\end{eqnarray*}

\begin{theorem}
  \VO{} is not more expressive if the formula can have free degree
  variable .
\end{theorem}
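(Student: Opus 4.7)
The plan is to give an explicit translation removing all free order variables from any \VO{} formula. Given a \VO{} formula $\phi$ whose free order variables are $r_{1},\dots,r_{n}$, I will construct an equivalent formula $\phi^{\star}$ whose free variables are all relational: for each $r_{k}$ I introduce a fresh free relational sequence variable $\mc P_{k}$ and use the predicate $\texttt{equal}$ from subsection~\ref{iter-rel} to tie $r_{k}$ to the ``count'' of $\mc P_{k}$. Concretely, I set
\[\phi^{\star}\ed\exists r_{1}\exists r'_{1}\cdots\exists r_{n}\exists r'_{n}\Bigl(\bigwedge_{k=1}^{n}\texttt{equal}(r_{k},r'_{k},\mc P_{k})\Bigr)\land\phi.\]
Every $r_{k}$ is now bound, so $\phi^{\star}$ has no free order variable; the burden of specifying the values of the $r_{k}$ has been shifted to specifying the free relational variables $\mc P_{k}$.

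For the equivalence I will show that for every structure $\mf A$ and every assignment $v_{1},\dots,v_{n}\in\mathbb{N}^{+}$ of the original free order variables, $\mf A[r_{k}\mapsto v_{k}]\models\phi$ iff there exist interpretations $\mc Q_{k}$ of $\mc P_{k}$ of ``count'' $v_{k}$ such that $\mf A[\mc P_{k}\mapsto\mc Q_{k}]\models\phi^{\star}$. The $(\Leftarrow)$ direction uses uniqueness: by the construction of $\texttt{equal}$ in equation~(\ref{eq:equal}), the pair $(r_{k},r'_{k})$ making $\texttt{equal}(r_{k},r'_{k},\mc P_{k})$ true is entirely determined by the cardinality of $\mc P_{k}$, so the existential witnesses in $\phi^{\star}$ must equal the $v_{k}$ and $\phi$ is then evaluated at the intended values. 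The $(\Rightarrow)$ direction requires exhibiting, for each $v_{k}$, some $\mc Q_{k}$ of count $v_{k}$: pick any set of exactly $v_{k}$ distinct relations on which $\mc P_{k}$ is made true, and $\texttt{equal}(v_{k},2,\mc Q_{k})$ will hold.

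The main obstacle is making the encoding unbounded. A naive approach counting only first-order elements would cap $r_{k}$ at $|A|$, the size of the universe. The resolution, already implicit in the construction of subsection~\ref{iter-rel}, is that $\mc P_{k}$ ranges over untyped relations of arbitrary order; since the set of higher-order relations over any finite universe of size at least $2$ is infinite, every positive integer is attainable as a count, and the translation is uniform in $v_{k}$. Once this is established, the remaining verifications amount to a straightforward structural induction on $\phi$ matching the semantics of the bound $r_{k}$ in $\phi^{\star}$ with that of the originally free $r_{k}$ in $\phi$, which goes through because the outer existentials in $\phi^{\star}$ pin $r_{k}$ down uniquely.
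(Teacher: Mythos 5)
Your proposal is correct and rests on the same core idea as the paper's proof: trade each free order variable for a fresh free relational variable, use the counting predicate $\texttt{equal}$ of Subsections \ref{iter-rel}--\ref{fvo} to tie the (now bound) order variables to the cardinality determined by that relational variable, and exploit the fact that $\texttt{equal}$ pins the value down uniquely. Two details differ. First, you bind the order variables existentially and conjoin $\texttt{equal}$ with $\phi$, whereas the paper writes a universal quantifier in front of the conjunction $\phi\land\bigwedge_i\texttt{equal}(v_i,v_i',P_i)$; your form is the one that matches the intended semantics (the paper's version only makes sense if read as an implication, since exactly one pair of values satisfies $\texttt{equal}$). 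Second, the paper's $\texttt{equal}$ is defined for a monadic second-order predicate $P_i$ and counts first-order elements $y\in A$ with $P_i(y)$, so the encodable values are capped by $|A|$ --- a cap the paper tacitly accepts, since its correspondence between structures sets $(\mf A[v_i],\mf A[v_i'])$ equal to such a cardinality. You instead count members of an untyped relation sequence across all orders so that every positive integer is attainable; that is not literally the paper's $\texttt{equal}$, so you would need to spell out the ``element of a sequence'' property being counted (the machinery is available in the same subsection --- it is what $\plu^{+}$ does with the elements of $\mc Z$ of order higher than $p$), and to note that sequences with infinitely many such elements simply falsify the $\texttt{equal}$ conjunct, which is harmless under your existential reading. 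What your variant buys is a translation that stays faithful even when a free order variable's value exceeds the universe size, a case the paper's unary, first-order-counting encoding cannot represent; what the paper's version buys is that it uses $\texttt{equal}$ exactly as defined, with monadic second-order witnesses.
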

\begin{proof}
  Let $n$ be an integer, $\sigma'=\{v_1,\dots,v_n,v'_1,\dots,v'_n\}$
  and $\sigma''=\{P_1,\dots,P_n\}$ be sets of $n$ order variables and
  monadic second order relations, let $A$ be a finite universe, let
  $\sigma$ be a vocabulary distinct from $\sigma'$ and $\sigma''$, let
  $\mf A$ be a $\sigma\cup\sigma'$-structure and let $\mf A'$ the
  $\sigma\cup\sigma''$-structure such that for $P\in\sigma$ $\mf
  A[P]=\mf A'[P]$, and for $v_i\in\sigma'$ we have $(\mf A[v_i],\mf
  A[v'_i])=|\{y\in A|u\in\mf A'[P_i]\}|$, let $\phi$ be a formula over
  vocabulary $\sigma\cup\sigma'$. Then $\mf A\models
  \phi\Leftrightarrow\mf A'\models \forall_{1\le i\le
    n'}v_i,v_i'(\phi\bigwedge_{1\le i\le n}\texttt{equal}(v_i,v_i',P_i))$.
\end{proof}
\sss{Addition}
We now want to be able to add order variables. The idea will be the
same, $r+p=q$ if there is a bijection between a relation of order $q$
and the union of a relation of order $r$ and a relation of order $p$.
We will do it by having $\mc Y^p$ be in $\mc Z^q$, and quantify a
bijection between elements of $\mc X^r$ and the elements of $\mc Z^q$
of order higher than $p$.
\begin{eqnarray}
  \label{eq:plus+}
  \plu^+(r,p,q)\ed\exists \mc{T,X,Y,Z},t,(\texttt{unique}(\mc X,r)\land\texttt{unique}(\mc Y,p)\land\nonumber\\
  \texttt{unique}(\mc Z,q)\land\mc Y^r\in\mc Z^q\land \texttt{bijection}(\mc T,t,\lambda i.\texttt{elements}(\mc X,\mc U,i,t-1))\nonumber\\
  \land \texttt{bijection}(\mc T,t,\lambda i.\texttt{elements}(\mc Z,\mc U,i,t-1)\land i>p)\land \texttt{different}(\mc X,\mc Z))
\end{eqnarray}
We need to make sure that the bijection between elements of $\mc Z$
and the one of $\mc X$ is correct by checking that there is no element
that are both in $\mc X$ and $\mc Z$ , this is the point of
different$\mc{(X,Z)}$.
\begin{eqnarray}
\label{eq:plus}
\texttt{different}(\mc X,\mc Z)\ed\forall i\mc X\not=_i\mc Z
\end{eqnarray}

Finally, using the code for 0 and positive integers of the last
subsection, we can define the addition of $\mathbb N$.
\begin{eqnarray}\plu(r,o',p,p',q,q')\ed \ifte o'=1 \thent(p=q\land p'=q')\elset\nonumber\\
  (\ifte p'=1\thent (r=q\land o'=q')\elset(\plu^+(r,p,q)\land q'=2))
\end{eqnarray}
\sss{Multiplication}
Finally we want to code the multiplication of order, once again the
formula $r\times p=q$ will choose relations $\mc {X,Y}$ and $\mc Z$,
unique up to order $r,p$ and $q$ respectively, such that there is a
bijection between the elements of $\mc Z$ and the Cartesian product of
the elements of $\mc X$ and of the elements of $\mc Y$.
\begin{eqnarray}
  \label{eq:time+}
  \texttt{times}^+(r,p,q)\ed\exists \mc{T,X,Y,Z},t,(\texttt{unique}(\mc X,r)\land\nonumber\\
  \texttt{bijection}(\mc T,t,\lambda i,j.\texttt{elements}(\mc X,\mc U,i,t-1)\land\texttt{elements}(\mc Y,\mc U,j,t-1))\nonumber\\
  \land\texttt{unique}(\mc Y,p)\land\texttt{unique}(\mc Z,q))
\end{eqnarray}
Of course we now can extend the multiplication over every non negative
integers.
\begin{eqnarray}\label{eq:times+}\texttt{times}(r,o',p,p',q,q')\ed \ifte (o'=1\lor p'=1) \thent(q'=q=1)\nonumber\\
  \elset(\texttt{times}^+(r,p,q)\land q'=2)
\end{eqnarray}

\sss{Set of natural numbers}
We can define any set $S\subseteq \mathbb N$ in \VO{} as a sequence of
relation $\mc X^{1}$ such that if $i-1\in S$ then $\mc X_{i}=\top$ else
$\mc X_{i}=\bot$. We can of course assert that $\mc X$ is a correct
code with
\begin{eqnarray*}
  \texttt{correct-set}(\mc X^{1})\ed\forall i( X=^{i}\top^{i} \lor
  X=^{i}\bot^{i})
\end{eqnarray*}
and that $n\in \mc X$ with 

\begin{eqnarray*}
  \texttt{in}(n,\mc X)\ed \mc X=^{n+1}\top.
\end{eqnarray*}

\subsection{VO contains the analytical hierarchy}



\begin{definition}[Analytical hierarchy(\AnH)]
  Let $\sigma=\{+,\times,=,c_1,\dots,c_n,S_{1},\dots, S_{m}\}$ where
  the $c_i$ are constant natural numbers and the $S_{i}$ are constant
  sets of natural numbers. Let $\mf N$ be a $\sigma$-structure over the
  universe $\mathbb N$ such that every arithmetical operation has
  its usual meaning.

  Then let $\Sigma^{1}_0=\Pi^{1}_0=\Delta_{0}^{1}$ be the set of
  formula with quantification only on first order variables. The
  formula $\phi$ is in $\Sigma^{1}_{i+1}$ if it is in the form
  $\phi=\exists \ol X\psi$ where $\psi$ is in $\Pi_{i}$, ``$\exists
  \ol X$'' is a quantification over the subset of $\mathbb N$.  A
  formula is in $\Pi_i$ if it is the negation of a formula in
  $\Sigma^{1}_i$. Let $\Delta^{1}_i=\Sigma^{1}_i\cap\Pi^{1}_i$,
  $\Delta^{1}_i$ is the $i$th level of the analytical hierarchy.

  The analytical hierarchy (\AnH) is equal to the union of the
  $\Delta^{1}_i$; $\AH=\bigcup_{i\in\mathbb N}\Delta^{1}_i$.
\end{definition}


\begin{theorem}
We have
$\AnH\subseteq\VO$\end{theorem}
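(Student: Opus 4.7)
The plan is to translate every $\AnH$-formula $\phi$ into an equivalent $\VO$-formula $T(\phi)$ by structural induction, using the arithmetic and set-encoding gadgets already constructed in Section \ref{vo}. The target universe can be any finite set with at least two elements; all the content of $\phi$ will be carried by order variables (representing natural numbers) and by sequences of relations (representing subsets of $\mathbb{N}$). Concretely, I would fix the $\VO$-vocabulary so that each arithmetic constant $c_i$ of the $\AnH$-vocabulary is represented by a constant order variable $r_{c_i}$ (together with its companion $r'_{c_i}$ under the paper's pair encoding of $\mathbb{N}$), and each set constant $S_i$ is represented by a sequence-of-relations constant $\mc{X}_{S_i}$ interpreted as the correct-set encoding from the subsection on sets of natural numbers.

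For the inductive translation, first-order $\AnH$-variables become order variables and set variables become sequence-of-relations variables, with the atomic cases handled by the machinery already in place: $x + y = z$ becomes $\plu(r_x, r'_x, r_y, r'_y, r_z, r'_z)$, $x \times y = z$ becomes $\texttt{times}(r_x, r'_x, r_y, r'_y, r_z, r'_z)$, $x = y$ is translated as conjunction of $r_x = r_y$ with $r'_x = r'_y$, and $x \in X$ becomes $\texttt{in}(r_x, \mc{X})$. Boolean connectives translate directly. For quantifiers, $\exists x.\psi$ and $\forall x.\psi$ over naturals become $\exists r.(\textit{valid-encoding}(r,r') \land T(\psi))$ and $\forall r.(\textit{valid-encoding}(r,r') \Rightarrow T(\psi))$ (where \emph{valid-encoding} rules out the junk pairs of the $0$/positive encoding), while set quantifiers $\exists X.\psi$ and $\forall X.\psi$ become $\exists \mc{X}.(\texttt{correct-set}(\mc{X}) \land T(\psi))$ and $\forall \mc{X}.(\texttt{correct-set}(\mc{X}) \Rightarrow T(\psi))$.

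Correctness is proved by induction on $\phi$. The base case uses the correctness statements already established in the paper for $\plu$, $\texttt{times}$ and $\texttt{in}$. The inductive case for set quantifiers crucially uses the bijection between $\mathcal{P}(\mathbb{N})$ and the class of sequences $\mc{X}$ satisfying $\texttt{correct-set}(\mc{X})$: each such sequence is determined by which levels $r$ have $\mc{X}^r =^r \top^r$ (vs. $\bot^r$), and this is precisely how membership is decided by $\texttt{in}$. Taking the union over all levels of the analytical hierarchy then yields $\AnH \subseteq \VO$.

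The main obstacle is the bookkeeping around the encoding of $0$ and the pair-representation of naturals used to define $\plu$ and $\texttt{times}$ in the paper: every translated first-order variable must be systematically paired with its companion $r'$, every quantification must be guarded by the predicate asserting the pair is a legal encoding, and one must check that the induction hypothesis goes through when a quantified order variable is substituted into the relevant atomic gadgets. A minor side issue is verifying that free order variables representing constants $c_i$ can be handled either as part of the vocabulary (allowed for $\VO$ by the discussion at the end of Subsection \ref{vo}) or by pre-quantifying and using the $\texttt{equal}$ gadget of Subsection \ref{fvo}; either route works and is essentially routine once the translation itself is fixed.
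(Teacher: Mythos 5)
Your proposal is correct and follows essentially the same route as the paper: the paper's own proof is just a brief appeal to the gadgets of Section \ref{vo} (\texttt{equal}, $\plu$, \texttt{times}, \texttt{correct-set}, \texttt{in}) with the translation of $\AnH$-formulae left implicit, and your structural induction with the paired encoding of naturals and guarded quantifiers simply spells out those routine details.
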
\label{anhvo}
\begin{proof}This section explained how to transform input into order
  variable, and how to add and multiply order variable; it also
  explained how to quantify sets of natural numbers, and express that
  a number is inside of the set.  Then every formula of \AnH{} can be
  easily encoded into \VO.




\end{proof}
\section{Open problems}\label{open}

\paragraph{Direct equality between classes}
When many classes are equal, it may be interesting to find a way to
directly transform the formulae without needing to encode a Turing
machine. So we may want to find a direct translation from $\hoa
r(\NPFP)$ to $\hoa r(\PFP)$, $\hoa{r+1}(\TC)$ or $\hoa r(\AIFP)$. We
also would like to prove that $\hoa {r+1}(\IFP)\subseteq\hoa
r(\APFP)$.

\paragraph{\eh r}Is $\hob{r}{j}$ a strict subset of $\hob{r}{j+1}$ ?
For $r=2$ this question is: ``Does the polynomial hierarchy collapse
to the $j$th level ?''. And as we saw in theorem \ref{cor:collapse} if
we can prove that there is at least one $r$ such that the \eh r does
not collapse to the $j$th level, then the same result is true for all
$p<r$. This may eventually be a way to prove that the polynomial
hierarchy does not collapse to some level, hence that $\P\ne\NP$.

We also wonder if $\hoa{r}(\IFP)$ is strictly contained in
$\hob{r+1}{1}$, for $r=1$ it is the question $\P\ne\NP$.

More surprising, we leave as open the question: If ($\hoc rjf=\cohoc
rjf$ or $\hoc rjf=\hoc r{j+1}f$), for $r>2$, does \hod rf collapse to
the the $j$th level ? In general, for a class of function $C$ what is
the condition over $C$ such that $\ATIME{C}{j}=\coATIME{C}{j}$ or
$\ATIME{C}{j}=\ATIME{C}{j+1}$ implies that the class $\ATIME{C}{.}$
collapse to the $j$($j+1$ ?) level. We gave sufficient condition but
can not prove that they are necessary. We think that those implication
must be true, because for them to be false we must have that, for some
$j$, $j$ or $j+1$ alternation does not change the expressivity, but 
for some $k>j$, $k$ alternations is more expressive; this seems to make
no sens.

\paragraph{Relational machines}

Relational machines where introduced in \cite{AbVi}, and extended in
\cite{nfp}; they are an extension of the Turing machines with relation
register. The input are given in the register and not on the tape,
which remove the implicit order that Turing machines usually has on
the input. The machine can, as usual, write on the tapes, read the
tapes, but can also apply boolean operations to the registers and
check if a register is empty. The input is then measured as the number
of different types of elements in the input; because the size of the
input can not be known by relational machines.

It was proven that relational-\P, relational-\NP{}, relational-\PSPACE{}
and relational-\EXP{} are equivalent to $\FO(\IFP)$, $\FO(\NIFP)$,
$\FO(\PFP)$ and $\FO(\AIFP)$, and that two relational classes are
equivalent if and only if the usual classes are equivalent.

We think that it may be interesting to find a correct extension to
those relational machines to simulate high-order formulae. In
particular it may give let us transform the ``reasonable input''
assumption into something more formal over those relational machines.

\paragraph{Fixed arity high-order}
We discussed Monadic High Order, which is the special case of
``maximal-arity'' beeing 1 as defined in \cite{arity}. It may be
interesting to give a better caracterisation of expressivity of logics
in function of maximal-arity, basic-arity \cite{kolo} or other
restriction of arity.

\paragraph{Restrictions}Is there a good way to define Horn and Krom
formulae in high-order ? As stated in section \ref{horn}, finding a
correct definition with good properties seems to be not trivial. 
Finally, over high-order, is there some other syntactic restriction
which give interesting properties?

\paragraph{Games}
In first and second order logic, games, like the Ehrenfeucht-Fraïssé
(see chapter 4 of \cite{libkin}) ones, are tools to prove that some
queries are not expressible in a given logic. It would be interesting
to extend these games over the high order classes.  We might even define a game
for every class, which would let us prove that some queries are not
elementary.

Those games would be very hard to win for the duplicator, so it would
then be interesting to try to find easier games.

\paragraph{Other extensions}
What would be the effect of adding counting quantifiers, or unary
quantifiers, over high order logic? How would the different
infinitary logics be more expressive with high order? (The definition
of those logic can be found in chapter 8 of \cite{libkin}.)

\paragraph{Variable order}
What is the exact upper bound on the expressivity of variable
order? We give the analytical hierarchy as a lower bound,
$\AnH\subseteq\VO$, and we conjecture this to be an equality, but
coding a variable order formula into the analytical hierarchies seems
to be a nontrivial technical task.

What would be the expressivity of $\VO'$ if the order variables could not be
quantified many times? Since the variable should be quantified before
the formula it is associated to is quantified, it could be a severe
restriction to the expressivity of the language. The author thinks
that this would express exactly the class of functions computable in
elementary time. (This class is at least a lower bound, since this
version of $\VO'$ would be a superset of $\hoa{r}$ for any value of
$r$).

The idea behind this assumption is that with a finite number of order
of variable it is impossible to find difference between two relations
of order sufficiently high, if we can decide what is the exact bound
for a given number of order variable, let us say $b$, then we can
replace every $Q i. \phi$ by $(Q i<b)\phi$, hence the language is
decidable and it seems that this kind of formulae can be written as
formulae in $\HO$(with $b$ differents value of order from 1 to $b$ for
every relation variables).

\bibliography{fo}

\end{document}